\title{Encoding CSP into CCS (Extended Version)
	\thanks{Supported by the DFG via project ``Synchronous and Asynchronous Interaction in Distributed Systems".}}
\author{Meike Hatzel 
	\institute{TU Berlin}
	\and Christoph Wagner
	\institute{TU Berlin}
	\and Kirstin Peters\thanks{Supported by the German Federal and State Governments via the Excellence Initiative (Institutional Strategy).}
	\institute{TU Dresden}
	\and Uwe Nestmann
	\institute{TU Berlin}
}
\begin{document}

\maketitle

\begin{abstract}
	We study encodings from CSP into asynchronous CCS with name passing and matching, so in fact, the asynchronous $\pi$-calculus. By doing so, we discuss two different ways to map the multi-way synchronisation mechanism of CSP into the two-way synchronisation mechanism of CCS. Both encodings satisfy the criteria of Gorla except for compositionality, as both use an additional top-level context. Following the work of Parrow and Sjödin, the first encoding uses a central coordinator and establishes a variant of weak bisimilarity between source terms and their translations. The second encoding is decentralised, and thus more efficient, but ensures only a form of coupled similarity between source terms and their translations.
\end{abstract}

\section{Introduction}

In the context of a scientific meeting on Expressiveness in Concurrency and Structural Operational Semantics (SOS), likely very little needs to be said about the process algebras (or process calculi) CSP and CCS. Too many papers have been written since their advent in the 70's to be mentioned in our own paper; it is instructive, though, and recommended to appreciate Jos Baeten's historical overview \cite{Baeten:2005:BHP:1085667.1085669}, which also places CSP and CCS in the context of other process algebras like ACP and the many extensions by probabilities, time, mobility, etc. Here, we just select references that help to understand our motivation.

\paragraph{Differences.}
From the beginning, although CSP \cite{hoare:78csp} and CCS \cite{CCS} were intended to capture, describe and analyse reactive and interactive concurrent systems, they were designed following rather different philosophies. Tony Hoare described this nicely in his position paper \cite{Hoare2006209} as follows: ``A primary goal in the original design of CCS was to discover and codify a minimal set of basic primitive agents and operators \dots and a wide range of useful operators which have been studied subsequently are all definable in terms of CCS primitives." and ``CSP was more interested in this broader range of useful operators, independent of which of them might be selected as primitive." So, at their heart, the two calculi use two different synchronisation mechanisms, one (CCS) using binary, \ie two-way, handshake via matching actions and co-actions, the other (CSP) using multiway synchronisation governed by explicit synchronisation sets that are typically attached to parallel composition. Another difference is the focus on Structural Operational Semantics in CCS, and the definition of behavioural equivalences on top of this, while CSP emphasised a trace-based denotational model, enhanced with failures, and the question on how to design models such that they satisfy a given set of laws of equivalence.

\paragraph{Comparisons.}
From the early days, researchers were interested in more or less formal comparisons between CSP and CCS. This was carried out by both Hoare \cite{Hoare2006209} and Milner \cite{DBLP:conf/ifip/Milner86} themselves, where they concentrate on the differences in the underlying design principles. But also other researchers joined the game, but with different analysis tools and comparison criteria. 

For example, Brookes \cite{DBLP:conf/icalp/Brookes83} contributed a deep study on the relation between the underlying abstract models, synchronisation trees for CCS and the failures model of CSP. Quite differently, Lanese and Montanari \cite{Lanese200655} used the respective power to transform graphs as a measure for the expressiveness of the two calculi. 

Yet completely differently, Parrow and Sj{\"o}din \cite{sjodin:phd,parrowCoupled92} tried to find an algorithm to implement|best in a fully distributed fashion|the multiway synchronisation operator of CSP (and its variant LOTOS \cite{DBLP:conf/pstv/Brinksma85}) using the supposedly simpler two-way synchronisation of CCS. They came up with two candidates|a reasonably simple centralised synchroniser, and a considerably less simple distributed synchroniser\footnote{Recently \cite{7092761}, a slight variant of the protocol behind this algorithm was used to implement a distributed compiler for a substantial subset of LOTOS that yields reasonably efficient C code.}|and proved that the two are not weakly bisimilar, but rather coupled similar, which is only slightly weaker. Coupled simulation is a notion that Parrow and Sj\"odin invented for just this purpose, but it has proved afterwards to be often just the right tool when analysing the correctness of distribution- and divergence-sensitive encodings that involve partial commitments (whose only effect is to gradually perform internal choices) \cite{nestmannPierce00}.

The probably most recent comparison between CSP and CCS was provided by van Glabbeek~\cite{DBLP:journals/corr/abs-1208-2750}. As an example for his general framework to analyse the relative expressive power of calculi, he studied the existence of syntactical translations from CSP into CCS, for which a common semantical domain is provided via labeled transition systems (LTS) derived from respective sets of SOS rules. The comparison is here carried out by checking whether a CSP term and its translation into CCS are distinguishable with respect to a number of equivalences defined on top of the LTS. The concrete results are: (1)~there is a translation that is correct up to trace equivalence (and contains deadlocks), and (2)~there is no translation that is correct up to weak bisimilarity equivalence that also takes divergence into account.

\paragraph{Contribution.}
Given van Glabbeks negative result, and given Parrow and Sj\"odins algorithm, we set out to check whether we can define a syntactical encoding from CSP into CCS|using Parrow and Sj\"odins ideas|that is correct up to coupled simulation. We almost managed. In this paper, we report on our current results along these lines: 
(1)~Our encoding target is an asynchronous variant of CCS, but enhanced with name-passing and matching, so it is in fact the asynchronous $\pi$-calculus; we kept mentioning CCS in the title of this paper, as it clearly emphasises the origin and motivation of this work. But, we could \emph{not} do without name-passing.
(2)~We exhibit one encoding that is not distributability-preserving (so, it represents a centralised solution), but is correct up to weak bisimilarity and does not introduce divergence. This does not contradict van Glabbeek's results, but suggests the observation that van Glabbeek's framework implies some form of distributability-preservation.
(3)~We exhibit another encoding that \emph{is} distributability-preserving and divergence-reflecting, but is only correct up to coupled similarity.

\paragraph{Overview.}
This paper is an extended version---providing the missing proofs and some additional in\-for\-ma\-tions---of \cite{hatzel15}.
We introduce the considered variants of CSP and CCS in \S~\ref{sec:techPrel}. There we also introduce the criteria---that are (variants of) the criteria in \cite{gorla10} and \cite{petersNestmannGoltz13}---modulo which we prove the quality of the considered encodings. In \S~\ref{sec:innerPart} we introduce the inner layer of our two encodings. It provides the main machinery to encode synchronisations of CSP. We complete this encoding with an outer layer that is either a central (\S~\ref{sec:central}) or a de-central coordinator (\S~\ref{sec:decentral}). In \S~\ref{sec:conclusion} we discuss the two encodings.

\section{Technical Preliminaries}
\label{sec:techPrel}

Let $\Names$ be the countably-infinite set of names, let $\tau \not\in \Names$, and let $\CoNames$ the set of co-names, \ie $\CoNames = \{ \overline{a} \mid a \in \Names \}$.
A process calculus $ \left( \proc, \step \right) $ consists of a set of processes $ \proc $ (syntax) and a relation $ {\step} \subseteq \proc^2 $ (semantics).
$ \tau $ denotes an internal, \ie unobservable, action.
We use $ a, b, x, \ldots $ to range over names and $ P, Q, \ldots $ to range over processes.
We use $\alpha, \beta \ldots$ to range over $\Names \cup \{\tau\}$.
$\tilde{a}$ denotes a sequence of names.
Let $ \fn{P} $ and $ \bn{P} $  denote the sets of free names and bound names occurring in $P$, respectively.
Their definitions are completely standard.
We use $\sigma, \sigma', \sigma_{1}, \ldots$ to range over substitutions.
A substitution is a mapping $[\Subst{x_{1}}{y_{1}}, \ldots ,\Subst{x_{n}}{y_{n}}]$ from names to names.
The application of a substitution on a term $P[\Subst{x_{1}}{y_{1}}, \ldots ,\Subst{x_{n}}{y_{n}}]$ is defined as the result of simultaneously replacing all free occurrences of $y_{i}$ by $x_{i}$ for $i \in \Set{1, \ldots, n}$.
For all names in $\Names \setminus \Set{y_{1}, \ldots, y_{n}}$ the substitution behaves as the identity mapping.
We naturally extend substitutions to co-names, \ie $\forall \overline{a} \in \Names.\; \sigma(\overline{a}) = \overline{\sigma(a)}$ for all substitutions $\sigma$.
The relation $\step$ as defined in the semantics below defines the reduction steps processes can perform. We write $ P \step P' $ if $ (P, P') \in \step $ and call $ P' $ a \emph{derivative} of $ P $.
Let $ \steps $ denote the reflexive and transitive closure of $ \step $.
$ P $ is \emph{divergent} if it has an infinite sequence of steps $ P \step^{\omega} $.
We use \emph{barbs} or \emph{observables} to distinguish between processes with different behaviours. We write $ P\HasBarb{\alpha} $ if $ P $ has a barb $ \alpha $, where the predicate $ \cdot\HasBarb{\cdot} $ can be defined different for each calculus. Moreover $ P $ reaches a barb $ \alpha $, if $ P $ a reaches a process with this barb, \ie $ P\ReachBarb{\alpha} \deff \exists P'\logdot P \steps P' \wedge P'\HasBarb{\alpha} $.

We use a variant of CSP \cite{hoare:78csp}, where prefixes only occur behind external choice.
\begin{definition}\label{CSPSyntax}
The processes $\ProcCSP$ are given by
$$P \;\mathop{::=}\; \CSPPar{P}{P}{A} \sep \Div \sep \Stop \sep \inchoice{P}{P} \sep P/b \sep f(P) \sep X \sep \mu X \cdot P \sep \textstyle{\ExSum_{i \in \IM}} a \rightarrow P.$$
	where $X \in \mathcal{X}$ is a process variable, $ A \subseteq \names $, and $ \IM $ is an index set.
\end{definition}

$\CSPPar{P}{Q}{A}$ is the parallel composition of $ P $ and $ Q $, where $ P $ and $ Q $ can proceed independently except for actions $ a \in A $, on which they have to synchronise.
The process $\Div$ describes \emph{divergence}.
$\Stop$ denotes \emph{inaction}.
The \emph{internal choice} operator $\inchoice{P}{Q}$ reduces to either $ P $ or $ Q $ within a single internal step.
\emph{Concealment} $P/b$ hides an action $ b $ and masks it as $ \tau $.
\emph{Renaming} $f(P)$ for some $f:\Names \to \Names$ with $f(\tau)=\tau$ behaves as $ P $, where $ a $ is replaced by $ f(a) $ for all $ a \in \Names $.
The \emph{recursion} $\mu X \cdot P$ describes a process behaving like $P$ with every occurrence of $X$ being replaced by $\mu X \cdot P$.
Finally the \emph{external choice} $\ExSum_{i \in \IM} a_i \rightarrow P_i$ represents a choice between the different \emph{action prefixes} $a_i \rightarrow \cdot $ followed by the corresponding continuation $P_i$. The process can perform any $ a_i $ and then behave like $P_i$.
 
As usual we use $\exchoice{M}{N}$ to denote binary external choice.
The CSP semantics is given by the rules, where we introduce labelled steps $\Trans{\alpha}$ first and then use them to define $\step$:
\vspace{-0.75em}
\begin{align*}
		\begin{array}{|c|}
			\hline
			\ninfer{\runa{Con$_0$}}{E \Trans{b} E'}{E /b \Trans{\tau} E'/b} \hspace*{2.5em} \ninfer{\runa{Con$_1$}}{E \Trans{\alpha} E' \quad (\alpha \not = b)}{E /b \Trans{\alpha} E'/b}  \hspace*{2.5em} \ninfer{\runa{Ren}}{E \Trans{\alpha} E'}{f(E) \Trans{f(\alpha)} f(E')} \hspace*{2.5em} \ninfer{\runa{Ece}}{M_j \Trans{a} M_j'}{\ExSum_{i \in  \IM}M_i \Trans{a} M_j'}\\
			\\
			\ninfer{\runa{Act}}{}{(a \rightarrow E) \Trans{a} E} \hspace*{2.5em} \ninfer{\runa{Rec}}{}{\mu X \cdot E \Trans{\tau} E[\mu X \cdot E/X]}\\
			\\
			\ninfer{\runa{Par$_0$}}{E \Trans{\alpha} E' \quad (\alpha \not \in A)}{\CSPPar{E}{F}{A} \Trans{\alpha} \CSPPar{E'}{F}{A}} \hspace*{1.5em} \ninfer{\runa{Par$_1$}}{F \Trans{\alpha} F' \quad (\alpha \not \in A)}{\CSPPar{E}{F}{A} \Trans{\alpha} \CSPPar{E}{F'}{A}} \hspace*{1.5em} \ninfer{\runa{Par$_2$}}{F \Trans{a} F' \quad E \Trans{a} E' \quad (a \in A)}{\CSPPar{E}{F}{A} \Trans{a} \CSPPar{E'}{F'}{A}} \hspace*{1.5em} \ninfer{\runa{Red}}{P \Trans{\alpha} P'}{P \step P'}\\
			\\
			\ninfer{\runa{Div}}{}{\Div \Trans{\tau} \Div} \hspace*{2.5em} \ninfer{\runa{Inc$_0$}}{}{\inchoice{E}{F} \Trans{\tau} E} \hspace*{2.5em} \ninfer{\runa{Inc$_1$}}{}{\inchoice{E}{F} \Trans{\tau} F} 
			\\
			\hline
		\end{array}
\end{align*}
\vspace{-1.2em}\\
A barb of CSP is the possibility of a term, to perform an action, \ie $ P\HasBarb{a} \deff \exists P'\logdot P \Trans{a} P' $.
Following the Definition of distributability in \cite{petersNestmannGoltz13} a CSP term $ P $ is distributable into $ P_1, \ldots, P_n $ if $ P_1, \ldots, P_n $ are subterms of $ P $ such that each action prefixes in $ P $ occurs exactly in one of the $ P_1, \ldots, P_n $, where different but syntactically equivalent action prefixes are distinguished and unguarded occurrences of $ \mu X \cdot P' $ may result in several copies of $ P' $ within the $ P_1, \ldots, P_n $.

As target calculus we use an asynchronous variant of CCS\cite{CCS} with name-passing and matching.

\begin{definition}\label{def:ccs_syntax}
  The processes $\ProcCCS$ are given by
  \begin{align*}
    P & \;\mathop{::=}\; \Par{P}{P} \sep \Res{\tilde{c}}{P} \sep \RepInput{\ch}{\tilde{x}}{P} \sep \Input{\ch}{\tilde{x}}{P} \sep \Out{\ch}{\tilde{x}} \sep \Match{\ch}{z}P \sep \Null
  \end{align*}
\end{definition}

$\Par{P}{Q}$ is the parallel composition of $P$ and $Q$, where $P$ and $Q$ can either proceed independently or synchronise on channels with matching names.
$\Res{\tilde{a}}{P}$ restricts actions $\tilde{a}$ on $P$, forcing all sub-terms of $P$ to synchronise on these actions.
$ \Input{a}{\tilde{x}}{P} $ denotes input on channel $a$.
$ \Out{a}{\tilde{x}} $ is output on channel $a$.
Note that because there is no continuation we interpret this calculus as asynchronous.
We use $ \RepInput{\ch}{\tilde{x}}{P} $ to denote \emph{replicated input} on channel $c$ with the continuation $P$.
$ \Match{x}{y}P  $ is the matching operator, if $ x = y $ then $P$ is enabled.
$\Null$ denotes inaction.

The CCS semantics is given by following transition rules:
\vspace{-0.6em}
\begin{align*}
		\begin{array}{|c|}
			\hline
			\ninfer{\runa{Par}}{P \step P'}{\Par{P}{Q} \step \Par{P'}{Q}} \hspace*{2.5em} \ninfer{\runa{Res}}{P \step P'}{\Res{\tilde{c}}{P} \step \Res{\tilde{c}}{P'}} \hspace*{2.5em} \ninfer{\runa{Cong}}{P \equiv P' \quad P' \step Q' \quad Q' \equiv Q}{P \step Q}\\
			\\
			\ninfer{\runa{Rep}}{}{\Par{\RepInput{\ch}{\tilde{x}}P}{\Out{\ch}{\tilde{y}}} \step \Par{\RepInput{\ch}{\tilde{x}}P}{P[\tilde{y}/\tilde{x}]}} \hspace*{2.5em} \ninfer{\runa{Com}}{}{\Par{\Out{\ch}{\tilde{y}}}{\Input{\ch}{\tilde{x}}Q} \step \Par{P}{Q[\tilde{y}/\tilde{x}]}}
			\\
			\hline
		\end{array}
\end{align*}
\vspace{-1em}\\
where $ \equiv $ denotes structural congruence given by the rules: $ \Par{P}{0} \equiv P$,
$ \Par{P}{Q} \equiv \Par{Q}{P} $, $ \Par{P}{\left( \Par{Q}{R} \right)} \equiv \Par{\left(\Par{P}{Q} \right)}{R} $, $\Res{\tilde{a}}{\Null} \equiv \Null $, $\Par{P}{\Res{\tilde{a}}{Q}} \equiv \Res{\tilde{a}}{\Par{P}{Q}} $ if $\bn{\tilde{a}} \notin \fn{P}$, $\Match{x}{x}P \equiv P $ and $ \Match{x}{y}P \equiv \Null $ if $ x \neq y $.
As discussed in \cite{petersNestmannGoltz13} a CCS term $ P $ is distributable into $ P_1, \ldots, P_n $ if $ P \equiv \Res{\tilde{x}}{P_1 \mid \ldots \mid P_n} $.

\subsection{Simulation Relations}

The semantics of a process is usually considered modulo some behavioural equivalence on processes.
For many calculi \emph{the} standard reference equivalence is some form of bisimulation.
Since in the context of encodings, \ie translation between different languages that can differ in their interpretation of what is considered a barb, reduction steps are easier, we use a variant of weak reduction bisimulation.
With Gorla \cite{gorla10}, we add a \emph{success} operator $ \success $ to the syntax of both CSP and CCS. Since $ \success $ cannot be further reduced, the semantics is left unchanged in both cases. The test for the reachability of success is standard in both languages, \ie $ P\hasSuccess \deff \exists P'\logdot P \equiv \success \mid P' $.
To obtain a non-trivial equivalence, we require that the bisimulation respects success and the reachability of barbs.
Therefore we use the standard definition of barbs in CSP, \ie action-prefixes, for CSP-barbs.
Our encoding function will translate all source terms into closed terms, thus the standard definition of CCS barbs would not provide any information.
Instead we use a notion of translated barb ($ \cdot\ReachBarb{\EncCI{\cdot}\cdot} $) that reflects how the encoding function translates source term barbs. Its definition is given in Section~\ref{sec:innerPart}.

\begin{definition}[Bisimulation]
	A relation $ \mathcal{R} \; \subseteq \proc^2 $ is a \emph{success sensitive, (translated) barb respecting, weak, reduction bisimulation} if, whenever $ \left( P, Q \right) \in \mathcal{R} $, then:
	\begin{compactitem}
		\item $ P \step P' $ implies $ \exists Q'\logdot Q \steps Q' \wedge \left( P', Q' \right) \in \mathcal{R} $
		\item $ Q \step Q' $ implies $ \exists P'\logdot P \steps P' \wedge \left( P', Q' \right) \in \mathcal{R} $
		\item $ P\reachSuccess $ iff $ Q\reachSuccess $
		\item $ P $ and $ Q $ reach the same (translated) barbs, where we use $ \cdot\ReachBarb{a} $ for CSP and $ \cdot\ReachBarb{\EncCI{\cdot}a} $ for CCS
	\end{compactitem}
	Two terms $ P, Q \in \proc $ are \emph{bisimilar}, denoted as $ P \barbBisim Q $, if there exists a bisimulation that relates $ P $ and $ Q $.
\end{definition}

\noindent
We use the symbol $ \barbBisim $ to denote either bisimilarity on our target language CCS or on the disjoint union of CSP and CCS that allows us to describe the relationship between source terms and their translations. In the same way we define a corresponding variant of coupled similarity.

\begin{definition}[Coupled Simulation]
	A relation $ \mathcal{R} \; \subseteq \proc^2 $ is a \emph{success sensitive, (translated) barb respecting, weak, reduction coupled simulation} if, whenever $ \left( P, Q \right) \in \mathcal{R} $, then:
	\begin{compactitem}
		\item $ P \step P' $ implies $ \exists Q'\logdot Q \steps Q' \wedge \left( P', Q' \right) \in \mathcal{R} $ and $ \exists Q''\logdot Q \steps Q'' \wedge \left( Q'', P' \right) \in \mathcal{R} $
		\item $ P\reachSuccess $ iff $ Q\reachSuccess $
		\item $ P $ and $ Q $ reach the same (translated) barbs, where we use $ \cdot\ReachBarb{a} $ for CSP and $ \cdot\ReachBarb{\EncCI{\cdot}a} $ for CCS
	\end{compactitem}
	Two terms $ P, Q \in \proc $ are \emph{coupled similar}, denoted as $ P \barbCS Q $, if there exists a coupled simulation that relates $ P $ and $ Q $ in both directions.
\end{definition}

\subsection{Encodings and Quality Criteria}

We consider two different translations from (the above defined variant of) CSP into (the above defined variant of) CCS with name passing and matching. We denote the variant of CSP as \emph{source} and the variant of CCS as \emph{target} language and, accordingly, their terms as \emph{source terms} $ \procS $ and \emph{target terms} $ \procT $. Encodings often translate single source term steps into a sequence or pomset of target term steps. We call such a sequence or pomset a \emph{simulation} of the corresponding source term step.
Moreover we assume for each encoding the existence of a so-called renaming policy $ \varphi $, \ie a mapping of names from the source into vectors of target term names.

To analyse the quality of encodings and to rule out trivial or meaningless encodings, they are augmented with a set of quality criteria. In order to provide a general framework, Gorla in \cite{gorla10} suggests five criteria well suited for language comparison:
\begin{compactenum}[(1)]
	\item \emph{Compositionality}: The translation of an operator $ \mathrm{op} $ is the same for all occurrences of that operator in a term, \ie it can be captured by a context $ \context_{\mathrm{op}} $ such that $ \Enc{\mathrm{op}\left( x_1, \ldots, x_n, S_1, \ldots, S_m \right)} = \Context{N}{\mathbf{op}}{x_1, \ldots, x_n, \Enc{S_1}, \ldots, \Enc{S_m}} $ for $ \fn{S_1} \cup \ldots \cup \fn{S_m} = N $.
	\item \emph{Name Invariance}: The encoding does not depend on particular names, \ie for every $ S $ and $ \sigma $, it holds that $ \Enc{\sigma\left( S \right)} \equiv \sigma'\left( \Enc{S} \right) $ if $ \sigma $ is injective and $ \Enc{\sigma\left( S \right)} \asymp \sigma'\left( \Enc{S} \right) $ otherwise, where $ \sigma' $ is such that $ \Renam{\sigma\left( n \right)} = \sigma'\left( \Renam{n} \right) $ for every $ n \in \names $.
	\item \emph{Operational Correspondence}: Every computation of a source term can be simulated by its translation, \ie $ S \stepsS S' $ implies $ \Enc{S} \stepsT \asymp \Enc{S'} $ (completeness), and every computation of a target term corresponds to some computation of the corresponding source term (soundness, compare to Section~\ref{sec:decentral}).
	\item \emph{Divergence Reflection}: The encoding does not introduce divergence, \ie $ \Enc{S} \stepT^{\omega} $ always implies $ S \stepS^{\omega} $.
	\item \emph{Success Sensitiveness}: A source term and its encoding answer the tests for success in exactly the same way, \ie $ S \reachSuccess $ iff $ \Enc{S} \reachSuccess $.
\end{compactenum}
Operational correspondence and name invariance assume a behavioural equivalence $ \asymp $ on the target language (that we instantiate with $ \barbBisim $). Its purpose is to describe the abstract behaviour of a target process, where abstract refers to the behaviour of the source term. By \cite{gorla10} the equivalence $ \asymp $ is often defined in the form of a barbed equivalence (as described e.g. in \cite{milner.sangiorgi:barbed-bisimulation}) or can be derived directly from the reduction semantics and is often a congruence, at least with respect to parallel composition. $ \barbBisim $ is such a relation.

Our encodings will satisfy all of these criteria except for compositionality, because both encodings consists of two layers.
\cite{petersNestmannGoltz13} shows that the above criteria do not ensure that an encoding preserves distribution and proposes a criterion for the preservation of distributability.

\begin{definition}[Preservation of Distributability]
	\label{def:distributabilityPreservation}

	An encoding $ \enc $ \emph{preserves distributability} if for every $ S $ and for all terms $ S_1, \ldots, S_n $ that are distributable within $ S $ there are some $ T_1, \ldots, T_n $ that are distributable within $ \Enc{S} $ such that $ T_i \asymp \Enc{S_i} $ for all $ 1 \leq i \leq n $.
\end{definition}

\noindent
Here, because of the choice of the source and the target language, an encoding preserves distributability if for each sequence of distributable source term steps their simulations are pairwise distributable. In both languages two alternative steps of a term are in \emph{conflict} with each other if they reduce the same action-prefix---for CSP---or reduce either the same (replicated) input using two outputs that transmit different values, or reduce the same output using two (replicated) inputs with different continuations. Two alternative steps that are not in conflict are \emph{distributable}.

\section{Translating the CSP Synchronisation Mechanism}
\label{sec:innerPart}

CSP and CCS---or the $ \pi $-calculus---differ fundamentally in their communication and synchronisation mechanisms.
In CSP there is only a single kind of action $ c \rightarrow \cdot $, where $ c $ is a (channel) name. Synchronisation is implemented by the parallel operator $ \CSPPar{\cdot}{\cdot}{A} $ that in CSP is augmented with a set of names $ A $ containing the names that need to be synchronised at this point. By nesting parallel operators arbitrary many actions on the same name can be synchronised.
In CCS there are two different kinds of actions: inputs $ \In{c}{} $ and outputs $ \Out{c}{} $. Again synchronisation is implemented by the parallel operator, but in CCS only a single input and a single matching output can ever be synchronised.

To encode the CSP communication and synchronisation mechanisms in CCS with name passing we make use of a technique already used in \cite{petersNestmann12, peters12} to translate between different variants of the $ \pi $-calculus. CSP actions are translated into action announcements augmented with a lock indicating, whether the respective action was already used in the simulation of a step. The other operators of CSP are then translated into handlers for these announcements and locks.
The translation of sum combines several actions under the same lock and thus ensures that only one term of the sum can ever be used.
The translation of the parallel operator combines announcements of actions that need to be synchronised into a single announcement under a fresh lock, whose value is determined by the combination of the respective underlying locks at its left and right side. Announcements of actions that do not need to be synchronised are simply forwarded.
A second layer---containing either a centralised or a de-centralised coordinator---then triggers and coordinates the simulation of source term steps.

Action announcements are of the form $ \Out{\act}{\ch, \req, \lock, \simu} $: $ \ch $ is the translation of the source term action. $ \req $ is used to trigger the computation of the Boolean value of $ \lock $. The lock $ \lock $ evaluates to $ \top $ as long as the respective translated action was not successfully used in the simulation of a step. $ \simu $ is used to guard the encoded continuation of the respective source term action. In the case of a successful simulation attempt involving this announcement, an output $ \Out{\simu}{\top} $ allows to unguard the encoded source term continuation and ensures that all following evaluations of $ \lock $ return $ \bot $. The message $ \Out{\simu}{\bot} $ indicates an aborted simulation attempt and allows to restore $ \lock $ for later simulation attempts. Once a lock becomes $ \bot $, all request for its computation return $ \bot $.

\subsection{Abbreviations}

We introduce some abbreviations to simplify the presentation of the encodings. We use
\begin{align*}
	\MatchIn{x}{A}P \deff \textstyle{\prod_{a \in A}} \Match{x}{a}P
\end{align*}
to test, whether an action belongs to the set of synchronised actions in the encoding of the parallel operator.
As already done in \cite{nestmann96, nestmannPierce00} we use Boolean valued locks to ensure that every translation of an action is only used once to simulate a step.
\emph{Boolean locks} are channels on which only the Boolean values $ \top $ (true) or $ \bot $ (false) are transmitted. An unguarded output over a Boolean lock with value $ \top $ is called a positive instantiation of the respective lock, whereas an unguarded output sending $ \bot $ is denoted as negative instantiation. At the receiving end of such a channel, the Boolean value can be used to make a binary decision, which is done here within an \emph{$ \ITE{\cdot}{\cdot}{\cdot} $-construct}.
This construct and accordingly instantiations of locks are implemented as in \cite{nestmann96, nestmannPierce00} using restriction and the order of transmitted values.
\begin{align*}
	\Out{l}{\top} \deff \Input{l}{t, f}{\Out{t}{}} \quad & \quad\quad \Out{l}{\bot} \deff \Input{l}{t, f}{\Out{f}{}}\\
	\Input{l}{\boolV}{\ITE{\boolV}{P}{Q}} & \deff \Res{t, f}{\Out{l}{t, f} \mid \Input{t}{}{P} \mid \Input{f}{}{Q}}
\end{align*}
We observe that the Boolean values $ \top $ and $ \bot $ are realised by a pair of links without parameters. Both cases of the $ \ITE{\cdot}{\cdot}{\cdot} $-construct operate as guard for its subterms $ P $ and $ Q $. The renaming policy $ \renam $ reserves the names $ t $ and $ f $ to implement the Boolean values $ \true $ and $ \false $.

\subsection{The Algorithm}

The encoding functions introduce some fresh names, that are reserved for special purposes. In Table~\ref{tab:resNam} we list the reserved names $ \mathcal{R} $ and provide a hint on their purpose.
\begin{table}[t]
	\begin{tabular}{|c|c|}
		\hline
		reserved names & purpose\\
		\hline
		$ \act $, $ \act' $ & announce the ability to perform an action\\
		$ \ch $, $ \lch $, $ \rch $, $ z $ & (translated) source term channel, channel from the left/right of a parallel operator\\
		$ \lock $, $ \lLock $, $ \rLock $ & lock, lock from the left/right of a parallel operator\\
		$ \lock' $ & re-instantiate a positive sum lock\\
		$ \req $, $ \lreq $, $ \rreq $ & request the computation of the value of a lock\\
		$ \simu $, $ \simu_i $, $ \lsimu $, $ \rsimu $ & simulate a source term step and unguard the corresponding continuations\\
		$ \nextSyn $ & order left announcements for the same channel that need to be synchronised\\
		$ \syn $, $ \syn' $ & distribute right announcements that need to be synchronised\\
		$ \boolV $ & Boolean value ($ \bot $ or $ \top $)\\
		$ \tau $ & fresh name used to announce $ \tau $-steps that result from concealment\\
		$ \once $ & used by the centralised encoding to avoid overlapping simulation attempts\\
		$ \much $ & fresh names used to encode internal choice\\
		$ \rep $ & fresh names used to encode divergence\\
		$ t, f $ & used to encode Boolean values\\
		\hline
	\end{tabular}
	\caption{Reserved Names.}
	\label{tab:resNam}
\end{table}
Moreover we reserve the names $ \Set{ x_i \mid i \in \nat } $ and assume an injective mapping $ \renam': \mathcal{X} \to \Set{ x_i \mid i \in \nat } $ that maps process variables of CSP to distinct names.
The renaming policy $ \renam $ for our encodings is then a function that reserves the names in $ \mathcal{R} \cup \Set{ x_i \mid i \in \nat } $ and translates every source term name into three target term names. More precisely, choose $ \varphi : \names \to \names^3 $ such that:
\begin{compactenum}
	\item No name is mapped onto a reserved name, \ie $ \Renam{n} \cap \left( \mathcal{R} \cup \Set{ x_i \mid i \in \nat } \right) = \emptyset $ for all $ n \in \names $.
	\item No two different names are mapped to overlapping sets of names, \ie $ \Renam{n} \cap \Renam{m} = \emptyset $ for all $ n, m \in \names $ with $ n \neq m $.
\end{compactenum}
We naturally extend the renaming policy to sets of names, \ie $ \Renam{X} \deff \Set{ \Renam{x} \mid x \in X } $ if $ X \subseteq \names $.
Let $ \Proj{\left( x_1, \ldots, x_n \right)}{i} \deff x_i $ denote the projection of a $ n $-tuple to its $ i $th element, if $ 1 \leq i \leq n $. Moreover $ \Proj{X}{i} \deff \Set{ \Proj{x}{i} \mid x \in X } $ for a set of $ n $-tuples $ X $ and $ 1 \leq i \leq n $.

\begin{figure}[htp]
	\begin{align*}
		\EncCI{\CSPPar{P}{Q}{A}} \deff & \res{\act', \Proj{\Renam{A}}{2}, \Proj{\Renam{A}}{3}}\Big(\!\\
			& \quad \Res{\act}{\EncCI{P} \mid \RepInput{\act}{\ch, \tilde{x}}{\left( \MatchIn{\ch}{\Proj{\Renam{A}}{1}} \Out{\Proj{\Renam{\ch}}{2}}{\tilde{x}} \mid \MatchOut{\ch}{\Proj{\Renam{A}}{1}} \Out{\act'}{\ch, \tilde{x}} \right)}}\\
			& \quad \Res{\act}{\EncCI{Q} \mid \RepInput{\act}{\ch, \tilde{x}}{\left( \MatchIn{\ch}{\Proj{\Renam{A}}{1}} \Out{\Proj{\Renam{\ch}}{3}}{\tilde{x}} \mid \MatchOut{\ch}{\Proj{\Renam{A}}{1}} \Out{\act'}{\ch, \tilde{x}} \right)}}\\
			& \quad \mid \textstyle{\prod_{\ch \in A}} \Synch{\ch} \mid \RepInput{\act'}{\tilde{x}}{\Out{\act}{\tilde{x}}}
		\!\Big)
		\\
		\Synch{\ch} \deff & \res{\nextSyn}\Big(\!
			\Out{\nextSyn}{\Proj{\Renam{\ch}}{3}}\\
			& \quad \mid \RepIn{\nextSyn}{\syn}\Big(\In{\Proj{\Renam{\ch}}{2}}{\lreq, \lLock, \lsimu}.\Big( \res{\syn'}\Big(\\
				& \quad\quad \RepInput{\syn}{\rreq, \rLock, \rsimu}{\left( \Res{\req, \lock, \simu}{\Out{\act}{\Proj{\Renam{c}}{1}, \req, \lock, \simu} \mid \Sim} \mid \Out{\syn'}{\rreq, \rLock, \rsimu} \right)}\\
				& \quad\quad \mid \Res{\syn}{\Out{\nextSyn}{\syn} \mid \RepInput{\syn'}{\tilde{x}}{\Out{\syn}{\tilde{x}}}}
			\Big)\!\Big)\!\Big)
		\!\Big)
		\\
		\Sim \deff & \res{\lock'}\Big( \Out{\lock'}{} \mid \RepIn{\lock'}{}.\Big( \In{\req}{}.\Big(\Out{\lreq}{} \mid \In{\lLock}{\boolV}.\Big( \IF{\boolV}\;\THEN{}\Big( \Out{\rreq}{} \mid \In{\rLock}{\boolV}.\Big( \IF{\boolV}\\
				& \quad \quad \THEN{\left( \Out{\lock}{\top} \mid \Input{\simu}{\boolV}{\left( \Out{\lsimu}{\boolV} \mid \Out{\rsimu}{\boolV} \mid \ITE{\boolV}{\RepInput{\req}{}{\Out{\lock}{\bot}}}{\Out{\lock'}{}} \right)} \right)}\\
				& \quad \quad \ELSE{\left( \Out{\lock}{\bot} \mid \Out{\lsimu}{\bot} \right) \mid \RepInput{\req}{}{\Out{\lock}{\bot}}} \Big)\!\Big)\\
			& \quad \ELSE{\left( \Out{\lock}{\bot} \mid \RepInput{\req}{}{\Out{\lock}{\bot}} \right)}\!\Big)\!\Big)\!\Big)\!\Big)
		\\
		\EncCI{\textstyle{\ExSum_{i \in \IM}} \ch_i \to P_i} \deff & \res{\req, \lock, \simu_1, \ldots, \simu_n}\Big( \Input{\req}{}{\Out{\lock}{\top}}\\
			& \mid \textstyle{\prod_{i \in \IM}} \left( \Out{\act}{\Proj{\ch_i}{1}, \req, \lock, \simu_i} \mid \RepIn{\simu_i}{\boolV}.\ITE{\boolV}{\left( \EncCI{P_i} \mid \RepInput{\req}{}{\Out{\lock}{\bot}} \right)}{\Input{\req}{}{\Out{\lock}{\top}}} \right)\!\Big)
		\\
		\EncCI{\left( P \right) / z} \deff & \Res{\act'}{\Res{\act, z}{\EncCI{P} \mid \RepInput{\act}{\ch, \tilde{x}}{\left( \Match{\ch}{z}\Out{\act'}{\tau, \tilde{x}} \mid \MissMatch{\ch}{z}\Out{\act'}{\ch, \tilde{x}} \right)}} \mid \RepInput{\act'}{\tilde{x}}{\Out{\act}{\tilde{x}}}}
		\\
		\EncCI{f(P)} \deff & \res{\act'}\big( \res{\act, z}\big( \EncCI{P} \mid \RepIn{\act}{\ch, \tilde{x}}\big( \textstyle{\prod_{z/x \in f}} \Match{\ch}{\Proj{\Renam{x}}{1}}\Out{\act'}{\Proj{\Renam{z}}{1}, \tilde{x}}\\
		& \mid \MatchOut{\ch}{\Dom{f}}\Out{\act'}{\ch, \tilde{x}} \big)\!\big) \mid \RepInput{\act'}{\tilde{x}}{\Out{\act}{\tilde{x}}}\big)
		\\
		\EncCI{\Div} \deff & \Res{\rep}{\Out{\rep}{} \mid \RepInput{\rep}{}{\Out{\rep}{}}}
		\\
		\EncCI{\mu X \cdot P} \deff & \Res{\renam'(X)}{\Out{\renam'(X)}{} \mid \RepInput{\renam'(X)}{}{\EncCI{P}}}
		\\
		\EncCI{X} \deff & \Out{\renam'(X)}{}
		\\
		\EncCI{\inchoice{P}{Q}} \deff & \Res{\much}{\Input{\much}{}{\EncCI{P}} \mid \Input{\much}{}{\EncCI{Q}} \mid \Out{\much}{}}
		\\
		\EncCI{\Stop} \deff & \Null
		\\
		\EncCI{\success} \deff & \success
	\end{align*}
	where $ \notin \Proj{\Renam{A}}{1} $ is short for $ \in \left( \fn{P} \cup \fn{Q} \right) \setminus \Proj{\Renam{A}}{1} $, $ \notin \Dom{f} $ is short for $ \in \fn{P} \setminus \Dom{f} $, and $ \neq z $ is short for $ \in \fn{P} \setminus \Set{ z } $.
	\caption{An encoding from CSP into CCS with value passing (inner part).}
	\label{fig:innerEncoding}
\end{figure}
The inner part of our two encodings is presented in Figure~\ref{fig:innerEncoding}. The most complex case is the translation of the parallel operator $ \EncCI{\CSPPar{P}{Q}{A}} $ that is based on the following four steps:
\begin{description}
	\item[Step 1:] Action announcements for channels $ \ch \notin A $\\
		In the case of actions on channels $ \ch \notin A $---that do not need to be synchronised here---the encoding of the parallel operator acts like a forwarder and transfers action announcements of both its subtrees further up in the parallel tree.
		Two different restrictions of the channel for action announcements $ \act $ from the left side $ \EncCI{P} $ and the right side $ \EncCI{Q} $, allow to trace action announcements back to their origin as it is necessary in the following case.
		In the present case we use $ \act' $ to bridge the action announcement over the restrictions on $ \act $.
	\item[Step 2:] Action announcements for channels $ \ch \in A $\\
		Actions $ \ch \in A $ need to be synchronised, \ie can be performed only if both sides of the parallel operator cooperate on this action. Simulating this kind of synchronisation is the main purpose of the encoding of the parallel operator.
		The renaming policy $ \renam $ translates each source term name into three target term names. The first target term name is used as reference to the original source term name and transferred in announcements. The other two names are used to simulate the synchronisation of the parallel operator in CSP. Announcements from the left are translated to outputs on the respective second name and announcements from the right to the respective third name. Restriction ensures that these outputs can only be computed by the current parallel operator encoding. The translations of the announcements into different outputs for different source term names allows us to treat announcements of different names concurrently using the term $ \Synch{\ch} $, where $ \ch $ is a source term name.
	\item[Step 3:] The term $ \Synch{\ch} $\\
		In $ \Synch{\ch} $ all announcements for the same source term name $ \ch $ from the left are ordered in order to combine each left and each right announcement on the same name. Several such announcements may result from underlying parallel operators, sums with similar summands, and junk left over from already simulated source term steps. For each left announcement a fresh instance of $ \syn $ is generated and restricted. The names $ \syn $ and $ \syn' $ are used to transfer right announcements to the respective next left announcement, where $ \syn' $ is used to bridge over the restriction on $ \syn $. This way each right announcement will eventually be transferred to each left announcement on the same name. Note that this kind of forwarding is not done concurrently but in the source language a term $ \CSPPar{P}{Q}{A} $ also cannot perform two steps on the same name $ \ch \in A $ concurrently. After combining a left and a right announcement on the same source term name a fresh set of auxiliary variables $ \req, \lock, \simu $ is generated and a corresponding announcement is transmitted. The term $ \Sim $ reacts to requests regarding this announcement and is used to simulate a step on the synchronised action.
	\item[Step 4:] The term $ \Sim $\\
		If a request reaches $ \Sim $ it starts questioning the left and the right side. First the left side is requested to compute the current value of the lock of the action. Only if $ \top $ is returned, the right side is requested to compute its lock as well. This avoids deadlocks that would result from blindly requesting the computation of locks in the de-centralised encoding. If the locks of both sides are still valid the fresh lock $ \lock $ returns $ \top $ else $ \bot $ is returned. For each case $ \Sim $ ensures that subsequently requests will obtain an answer by looping with $ \Out{\lock'}{} $ or returning $ \bot $ to all requests, respectively. The messages $ \Out{\lsimu}{\bot} $ and $ \Out{\rsimu}{\bot} $ cause the respective underlying subterms on the left and the right side to do the same, whereas $ \Out{\lsimu}{\top} $ and $ \Out{\rsimu}{\top} $ cause the unguarding of encoded continuations as result of a successful simulation of a source term synchronisation step.
\end{description}

\subsection{Basic Properties and Translated Observables}

The protocol introduced by the encoding function in Figure~\ref{fig:innerEncoding} (and its outer parts introduced later) simulates a single source term step by a sequence of target term steps. Most of these steps are merely pre- and post-processing steps, \ie they do not participate in decisions regarding the simulation of conflicting source term steps but only prepare and complete simulations. Accordingly we distinguish between \emph{auxiliary steps}---that are pre- and post-processing steps---and \emph{simulation steps}---that mark a point of no return by deciding which source term step is simulated. Note that the points of no return and thus the definition of auxiliary and simulation steps is different in the two variants of our encoding.

Auxiliary steps do not influence the choice of source terms steps that are simulated. Moreover they operate on restricted channels, \ie are unobservable. Accordingly they do not change the state of the target term modulo the considered reference relations $ \barbBisim $ and $ \barbCS $. We introduce some auxiliary lemmata to support this claim.

The encoding $ \EncCI{\cdot} $ translates source term barbs $ \ch $ into free announcements with $ \Proj{\Renam{\ch}}{1} $ as first value and a lock $ \lock $ as third value that computes to $ \top $. The two coordinators, \ie outer encodings, we introduce later, restrict the free $ \act $-channel of $ \EncCI{\cdot} $.

\begin{definition}[Translated Barbs]
	Let $ T \in \procT $ such that $ \exists S\logdot \EncCI{S} \stepsT T $, $ \exists S\logdot \EncCO{S} \stepsT T $, or $ \exists S\logdot \EncDO{S} \stepsT T $.
	$ T $ has a translated barb $ \ch $, denoted by $ T\HasBarb{\EncCI{\cdot}\ch} $, if
	\begin{compactitem}
		\item there is an unguarded output $ \Out{\act}{\Proj{\Renam{\ch}}{1}, \req, \lock, \simu} $---on a free channel $ \act $ in the case of $ \EncCI{\cdot} $ or the outermost variant of $ \act $ in the case of the later introduced encodings $ \EncCO{\cdot} $ and $ \EncDO{\cdot} $---in $ T $ or
		\item such an announcement was consumed to unguard an $ \ITE{\cdot}{\cdot}{\cdot} $-construct testing $ \lock $ and this construct is still not resolved in $ T $
	\end{compactitem}
	such that all locks that are necessary to instantiate $ \lock $ are positively instantiated.
\end{definition}

Analysing the encoding function in Figure~\ref{fig:innerEncoding} we observe that guarded subterms $ S' $ of a of a source term $ S $, \eg $ S = a \rightarrow S' $, are translated into guarded subterms of $ \EncCI{S} $, whereas the translations of unguarded subterms, \eg $ S = \CSPPar{S'}{S''}{A} $, remain unguarded.

\begin{obs}
	$ $\\
	Let $ S, S' \in \procS $ such that $ S' $ is a subterm of $ S $. Then $ \EncCI{S'} $ is guarded in $ \EncCI{S} $ iff $ S' $ is guarded in $ S $.
	\label{obs:guardedSourceVsTarget}
\end{obs}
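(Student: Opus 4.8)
The plan is to prove the biconditional by structural induction on $S$, exploiting that guardedness is a property of the unique path from the root of a term's syntax tree to the designated occurrence of $S'$: the occurrence is guarded exactly when that path crosses at least one \emph{guarding} operator. On the CSP side the guarding operators are the action prefix of an external choice, internal choice, and recursion---in each case at least one reduction (the prefixed action, the $\tau$ of $\inchoice{\cdot}{\cdot}$, or the unfolding $\tau$ of $\mu X\cdot\cdot$) must precede any activity of the argument---whereas parallel composition, concealment, and renaming impose no guard, since by the rules Par$_{0/1/2}$, Con$_{0/1}$, and Ren their arguments can step directly. On the CCS side the only guards that matter are input and replicated input, because an output carries no continuation and the match operator is resolved by structural congruence rather than by a reduction; restriction and parallel composition impose no guard.

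First I would record the clause-by-clause fact that drives the induction: for every CSP operator, inspecting the corresponding clause of Figure~\ref{fig:innerEncoding} shows that the encoding context places the translation $\EncCI{S_k}$ of an argument $S_k$ under an input or replicated-input prefix \emph{exactly} when the source operator guards $S_k$, and that the encoding introduces precisely one occurrence of $\EncCI{S_k}$, matching the single occurrence of $S_k$. Concretely, the guarding clauses are external choice (where $\EncCI{P_i}$ occurs in the then-branch of $\RepIn{\simu_i}{\boolV}.\ITE{\cdot}{\cdot}{\cdot}$, hence under the input on $\simu_i$), internal choice (where $\EncCI{P}$ and $\EncCI{Q}$ each occur under $\Input{\much}{}{\cdot}$), and recursion (where $\EncCI{P}$ occurs under $\RepInput{\renam'(X)}{}{\cdot}$). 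The non-guarding clauses are parallel composition, concealment, and renaming, where the argument translations occur only within the scope of the $\act$/$\act'$ restrictions and in parallel with the forwarders and matched outputs, i.e.\ never under an input prefix. The atoms $\Stop$, $\Div$, $X$, and $\success$ have no proper subterms.

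With this correspondence in hand the induction is routine. In the base case $S' = S$ both sides are false, since the empty path crosses no guard. In the inductive step write $S = \mathrm{op}(\ldots, S_1, \ldots, S_m)$ with $S'$ a subterm of some $S_k$. If $\mathrm{op}$ guards $S_k$, then $S'$ is guarded in $S$ and, since $\EncCI{S_k}$ is guarded in $\EncCI{S}$ and the occurrence of $\EncCI{S'}$ lies inside $\EncCI{S_k}$, also $\EncCI{S'}$ is guarded in $\EncCI{S}$; both sides hold. If $\mathrm{op}$ does not guard $S_k$, then the step from $S$ to $S_k$ adds no guard on either side: $S'$ is guarded in $S$ iff it is guarded in $S_k$, and because $\EncCI{S_k}$ occurs only under restriction and parallel composition in $\EncCI{S}$, $\EncCI{S'}$ is guarded in $\EncCI{S}$ iff it is guarded in $\EncCI{S_k}$. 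The induction hypothesis applied to $S_k$ then closes the case.

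The only delicate point, and the step I expect to carry the real content, is fixing a notion of guardedness faithful to both calculi simultaneously and then verifying the clause-by-clause correspondence against it. In particular one must justify that internal choice and recursion count as guards on the CSP side---each interposes a mandatory $\tau$---so that the input guards their encodings introduce (the $\much$-input and the $\renam'(X)$-replicated input) are accounted for; and conversely one must check that the auxiliary machinery wrapped around an unguarded argument translation (the locks, the matching tests, and the $\act'$-forwarders) never inadvertently places that occurrence under an input prefix. Once this bookkeeping is discharged for the finitely many clauses of Figure~\ref{fig:innerEncoding}, nothing further is required.
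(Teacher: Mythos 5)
Your proposal is correct and matches the paper's approach: the paper states this as an Observation justified only by the remark that inspecting Figure~\ref{fig:innerEncoding} shows guarded subterms translate to guarded subterms and unguarded ones remain unguarded, and your structural induction with the clause-by-clause case analysis (prefix, internal choice and recursion as guards; parallel, concealment and renaming as non-guards) is exactly that inspection carried out rigorously.
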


We also observe that an encoded source term has a translated barb iff the corresponding source term has the corresponding source term barb.

\begin{obs}
	For all $ S \in \procS $, it holds $ S\HasBarb{\ch} $ iff $ \EncCI{S}\HasBarb{\EncCI{\cdot}\ch} $.
	\label{obs:transBarbs}
\end{obs}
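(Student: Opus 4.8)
The plan is to prove both directions of the equivalence simultaneously by structural induction on $S$, in each case pinning down exactly which unguarded announcements $\Out{\act}{\Proj{\Renam{\ch}}{1}, \req, \lock, \simu}$ with a positively instantiable lock occur in $\EncCI{S}$, and comparing them with the labelled transitions $S \Trans{\ch} S'$ granted by the CSP rules. Two ingredients are used throughout: the injectivity of the renaming policy $\renam$ (distinct source names receive disjoint triples), so that the first projection $\Proj{\Renam{\ch}}{1}$ identifies the source name $\ch$ unambiguously in both directions; and Observation~\ref{obs:guardedSourceVsTarget}, which guarantees that encoded continuations contribute no stray announcements, since a subterm $S'$ is guarded in $S$ iff $\EncCI{S'}$ is guarded in $\EncCI{S}$.

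First I would dispatch the base and ``purely internal'' cases. For $\Stop$, $\success$ and a process variable $X$ neither side exhibits a named barb: $\EncCI{\Stop}=\Null$, $\EncCI{\success}=\success$ and $\EncCI{X}=\Out{\renam'(X)}{}$ contain no $\act$-announcement, while $\Stop$, $\success$, $X$ admit no $\Trans{\ch}$-step. For $\Div$, $\mu X \cdot P$ and $\inchoice{P}{Q}$ the CSP rules \runa{Div}, \runa{Rec}, \runa{Inc$_0$}/\runa{Inc$_1$} offer only $\tau$-labelled steps, so there is no named source barb; on the target side the encoded bodies sit behind a (replicated) input on $\rep$, $\renam'(X)$ or $\much$ and are hence guarded by Observation~\ref{obs:guardedSourceVsTarget}, leaving only non-$\act$ outputs unguarded. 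So neither side has a barb, as required.

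The external choice is the one genuinely \emph{generating} case. Here $\EncCI{\ExSum_{i\in\IM} \ch_i \to P_i}$ exposes precisely the unguarded announcements $\Out{\act}{\Proj{\Renam{\ch_i}}{1}, \req, \lock, \simu_i}$, one per summand, sharing a single lock that is positively instantiable through $\Input{\req}{}{\Out{\lock}{\top}}$ (no simulation having yet fired); the continuations $\EncCI{P_i}$ are guarded by $\simu_i$ and contribute nothing by Observation~\ref{obs:guardedSourceVsTarget}. This matches the source barbs $\Set{\ch_i \mid i \in \IM}$ delivered by \runa{Act} and \runa{Ece}, and injectivity of $\renam$ makes the correspondence $\ch = \ch_i$ exact in both directions.

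The composite operators $P/b$, $f(P)$ and $\CSPPar{P}{Q}{A}$ are where the work lies, and the parallel operator is the main obstacle. In each of these the announcements produced by $\EncCI{P}$ (and $\EncCI{Q}$) sit on a \emph{restricted} copy of $\act$ and reach the free $\act$ only after passing through the forwarding processes; these forwarding reductions are auxiliary, barb-preserving steps, and the argument is to show they transform announcements exactly as the matching CSP rule transforms barbs. For concealment the guard $\Match{\ch}{z}$ reroutes an announcement on $b$ to the reserved name $\tau$---which is not of the form $\Proj{\Renam{\ch}}{1}$ and hence not a translated barb---while the remaining announcements are relayed unchanged, matching \runa{Con$_0$}/\runa{Con$_1$} (the barbs of $P$ with $b$ removed). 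For renaming the guards relabel $\Proj{\Renam{x}}{1}$ to $\Proj{\Renam{f(x)}}{1}$, matching \runa{Ren}. For the parallel operator the forwarder relays announcements on channels $\ch\notin A$ (matching \runa{Par$_0$}/\runa{Par$_1$}), whereas for $\ch\in A$ the term $\Synch{\ch}$ pairs a left announcement on $\Proj{\Renam{\ch}}{2}$ with a right announcement on $\Proj{\Renam{\ch}}{3}$ and, via $\Sim$, emits a combined announcement whose lock is positively instantiable precisely when \emph{both} underlying locks $\lLock$ and $\rLock$ are---i.e.\ when both $P$ and $Q$ offer $\ch$. This is exactly the side condition of \runa{Par$_2$}, and it is the reason the translated-barb predicate demands that all locks necessary to instantiate $\lock$ be positively instantiated. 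Combining these correspondences with the two inductive hypotheses on $\EncCI{P}$ and $\EncCI{Q}$ closes the induction. The delicate part, which I expect to be the most laborious, is arguing that the forwarding and synchronisation sub-protocols neither invent a free announcement with a live lock that has no source justification nor suppress one that does---in particular tracking the lock dependencies produced by nested $\Synch{\ch}$ instances in the parallel case.
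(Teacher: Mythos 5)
Your structural induction is correct and is essentially the argument the paper leaves implicit: Observation~\ref{obs:transBarbs} is stated without proof, as a direct inspection of the encoding function in Figure~\ref{fig:innerEncoding}, and your case split (external choice as the only generating case; concealment, renaming and the parallel operator as forwarding/combining cases, handled via Observation~\ref{obs:guardedSourceVsTarget} and the injectivity of $ \renam $) is exactly the formalisation of that inspection. The one caveat---which you already flag yourself---is that for the composite operators the announcements of $ \EncCI{P} $ initially sit on \emph{restricted} copies of $ \act $ and only reach the free $ \act $ after auxiliary forwarding steps, so the translated-barb predicate has to be read modulo these pre-processing steps; that is a looseness in the paper's definition of translated barbs rather than a gap in your proof.
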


All instances of success in the translation result from success in the source. More precisely the only way to obtain $ \success $ in the translation is by $ \EncCI{\success} \deff \success $.

\begin{obs}
	For all $ S \in \procS $, it holds $ S\hasSuccess $ iff $ \EncCI{S}\hasSuccess $.
	\label{obs:success}
\end{obs}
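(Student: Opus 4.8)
The plan is to reduce the claim to the guardedness correspondence of Observation~\ref{obs:guardedSourceVsTarget} together with the fact that the encoding introduces $ \success $ in exactly one place. Recall that $ P\hasSuccess $ means that $ \success $ occurs \emph{unguarded} in $ P $, i.e.\ $ P \equiv \success \mid P' $ for some $ P' $. I first record the structural fact that $ \success $ never arises spontaneously during translation: inspecting every clause of Figure~\ref{fig:innerEncoding}, each produces either the image of a proper subterm under $ \EncCI{\cdot} $ or a term built purely from the CCS operators $ \mid $, $ \res{} $, (replicated) input, output, matching, and $ \Null $; the single clause $ \EncCI{\success} \deff \success $ is the only one emitting $ \success $. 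Hence, by a trivial induction on the structure of $ S $, every occurrence of $ \success $ in $ \EncCI{S} $ is the image $ \EncCI{S'} = \success $ of a subterm $ S' = \success $ of $ S $, and conversely.

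Before combining the two ingredients I would make explicit what ``unguarded'' means here. In the target calculus the only prefixes that can guard a subterm are (replicated) inputs; restriction does not, since structural congruence lets us pull any closed term out of a restriction and $ \fn{\success} = \emptyset $, while parallel composition is handled by associativity, commutativity and the unit law. On the source side the guarding operators are exactly action prefix, recursion, and the branches of internal choice, which the encoding maps to input-prefixed subterms; this is precisely the content of Observation~\ref{obs:guardedSourceVsTarget}.

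With these in place both directions are immediate. For the forward direction, if $ S\hasSuccess $ then $ \success $ is an unguarded subterm of $ S $; applying Observation~\ref{obs:guardedSourceVsTarget} to $ S' = \success $ shows that $ \EncCI{\success} = \success $ is unguarded in $ \EncCI{S} $, so $ \EncCI{S}\hasSuccess $. For the backward direction, if $ \EncCI{S}\hasSuccess $ then $ \success $ is unguarded in $ \EncCI{S} $; by the structural fact above this $ \success $ is $ \EncCI{S'} $ for a subterm $ S' = \success $ of $ S $, and Observation~\ref{obs:guardedSourceVsTarget} reflects its unguardedness back to $ S $, giving $ S\hasSuccess $.

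I do not expect a genuine obstacle here; the only point requiring care is the treatment of structural congruence, namely checking that the restrictions introduced throughout the encoding (on $ \act $, $ \nextSyn $, $ \syn $, the lock channels, etc.) never hide an unguarded $ \success $. This is guaranteed by $ \fn{\success} = \emptyset $, which lets every such restriction be commuted past $ \success $, so that ``unguarded up to $ \equiv $'' coincides with ``not beneath an input prefix'' and Observation~\ref{obs:guardedSourceVsTarget} applies verbatim.
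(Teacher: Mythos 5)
Your proposal is correct and follows essentially the same route as the paper, which justifies the observation in one line by noting that $ \EncCI{\success} \deff \success $ is the only clause producing $ \success $ and implicitly relying on the guardedness correspondence of Observation~\ref{obs:guardedSourceVsTarget}. You merely spell out the details the paper leaves implicit, in particular the (correct) point that the restrictions introduced by the encoding cannot hide an unguarded $ \success $ because $ \fn{\success} = \emptyset $.
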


The simplest case of a step that cannot change the state of a term modulo $ \barbBisim $, is a step on a restricted channel that is not in conflict with any other reachable step of the term.

\begin{lemma}
	Let $ T, T' \in \procT $ and $ T \stepT T' $ be a step on an unobservable channel such that no alternative step of $ T $ or its derivatives is in conflict to the step $ T \stepT T' $. Then $ T \barbBisim T' $.
	\label{lem:noConflicts}
\end{lemma}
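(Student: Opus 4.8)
The plan is to exhibit an explicit relation and verify that it is a success sensitive, (translated) barb respecting, weak reduction bisimulation. Since identity pairs trivially satisfy every clause, everything hinges on relating a term to the result of performing \emph{one} globally non-conflicting unobservable step, and the whole argument reduces to a \emph{diamond} (local confluence) property derived from the hypothesis. Let me call a step $P_1 \stepT P_2$ \emph{good} if it is on an unobservable channel and no alternative step of $P_1$ or of its derivatives is in conflict with it; by assumption $T \stepT T'$ is good. I would then work with
\[ \mathcal{R} \deff \mathcal{S} \cup \mathcal{S}^{-1} \cup \{ (P,P) \mid P \in \procT \}, \qquad \mathcal{S} \deff \{ (P_1,P_2) \mid P_1 \stepT P_2 \text{ is good} \}, \]
so that $(T,T') \in \mathcal{R}$, and it suffices to show that $\mathcal{R}$ is such a bisimulation.

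The first, and central, step is the diamond property: if $(P,Q) \in \mathcal{S}$ via a good step $P \stepT Q$, and $P \stepT P''$ is \emph{any} other step, then either $P'' = Q$, or there is a common reduct $W$ with $Q \stepT W$ and $P'' \stepT W$, where the residual step $P'' \stepT W$ is again good. This is exactly what ``no alternative step is in conflict'' buys us, using the concrete notion of conflict fixed at the end of Section~\ref{sec:techPrel} (same prefix / same replicated input or output used differently): a non-conflicting step on a restricted channel can always be commuted past any competing step, its channel stays unobservable, and the non-conflict condition is inherited by the residual because the alternatives and derivatives of $P''$ are among those of $P$. Granting this, the two bisimulation clauses for a pair $(P,Q)\in\mathcal{S}$ follow routinely: a challenge $Q \stepT Q''$ is answered by $P \stepT Q \stepT Q''$, landing on the identity pair $(Q'',Q'')$; a challenge $P \stepT P''$ is answered either by $Q$ idling (if $P''=Q$) or by $Q \stepT W$ with $(P'',W) \in \mathcal{S} \subseteq \mathcal{R}$, the residual step being good. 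The symmetric pairs in $\mathcal{S}^{-1}$ are handled by the same two observations with the roles swapped, and $\mathcal{S}$ being closed under taking residuals keeps us inside $\mathcal{R}$.

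It remains to check that good steps preserve reachability of success and of translated barbs, i.e.\ that for a good step $P \stepT Q$ we have $P \reachSuccess$ iff $Q \reachSuccess$, and $P$ and $Q$ reach the same translated barbs. One direction is immediate by prepending the step $P \stepT Q$; for the converse I would push the good step through a witnessing computation of $P$ using the diamond iteratively, obtaining a matching computation of $Q$ that ends in a state differing from the witness by at most one (good) residual step. Success is then preserved because $\success$ is inert, so $\hasSuccess$ is stable under reduction; for a translated barb $\HasBarb{\EncCI{\cdot}\ch}$ I would argue that a good (hence non-conflicting) residual step can neither create nor destroy the barb-witnessing announcement or its positively instantiated locks, since destroying such a witness would require resolving a choice and thus conflict with some reachable step. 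I expect this last point—ruling out barb destruction via the non-conflict hypothesis, and more fundamentally the clean formulation and proof of the diamond/residual property from the definition of conflict—to be the main obstacle; the bisimulation game itself is then a direct consequence.
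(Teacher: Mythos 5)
Your proposal follows essentially the same route as the paper's proof: it takes the (reflexive closure of the) relation consisting of pairs related by a non-conflicting unobservable step, establishes the diamond/commutation property from the non-conflict hypothesis exactly as in the paper's case analysis (same replicated input with different outputs, or steps on distinct channels), answers the reverse challenge by prepending the good step, and derives preservation of success and translated barbs from the inertness of $\success$ and the fact that free announcements can only be produced, never consumed. The only differences are cosmetic (adding $\mathcal{S}^{-1}$, which is redundant given that the bisimulation definition already contains both simulation clauses), so the plan is correct.
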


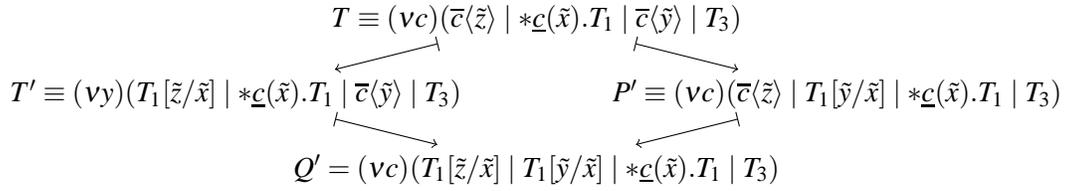
\begin{figure}[t]
	\centering
	\begin{tikzpicture}[auto]
		\node (T) at (4, 2) {$ T \equiv \Res{c}{\Out{c}{\tilde{z}} \mid \RepInput{c}{\tilde{x}}{T_1} \mid \Out{c}{\tilde{y}} \mid T_3}  $};
		\node (T') at (0, 1) {$ T' \equiv \Res{y}{T_1[ \tilde{z}/\tilde{x} ] \mid \RepInput{c}{\tilde{x}}{T_1} \mid \Out{c}{\tilde{y}} \mid T_3} $};
		\node (P') at (8, 1) {$ P' \equiv \Res{c}{\Out{c}{\tilde{z}} \mid T_1[ \tilde{y}/\tilde{x} ] \mid \RepInput{c}{\tilde{x}}{T_1} \mid T_3} $};
		\node (Q') at (4, 0) {$ Q' = \Res{c}{T_1[ \tilde{z}/\tilde{x} ] \mid T_1[ \tilde{y}/\tilde{x} ] \mid \RepInput{c}{\tilde{x}}{T_1} \mid T_3} $};
		
		\path[|->] (T) edge (T');
		\path[|->] (T) edge (P');
		\path[|->] (T') edge (Q');
		\path[|->] (P') edge (Q');
	\end{tikzpicture}
	\caption{Diamond Property.}
	\label{fig:diamondProp}
\end{figure}

\begin{proof}
	Let $ \mathcal{B} $ be the reflexive closure over the set of pairs $ \left( T, T' \right) $ such that $ T \stepT T' $ is a step on an unobservable channel and no alternative step of $ T $ or its derivatives is in conflict with the step $ T \stepT T' $. We show that $ \mathcal{B} $ is a bisimulation.
	Let $ \left( T, T' \right) \in \mathcal{B} $.
	We have to prove the following four conditions:
	\begin{compactenum}
		\item $ T \stepT P' $ implies $ \exists Q'\logdot T' \stepsT Q' \wedge P' \barbBisim Q' $:\\
			Without loss of generality assume $ T \stepT T' $ reduces $ \Out{c}{\tilde{z}} $ and $ \RepInput{c}{\tilde{x}}{T_1} $ (the case of non-replicated input is similar).
			Then $ T \equiv \Res{c}{\Out{c}{\tilde{z}} \mid \RepInput{c}{\tilde{x}}{T_1} \mid T_2} $ and $ T' \equiv \Res{y}{T_1[ \tilde{z}/\tilde{x} ] \mid \RepInput{c}{\tilde{x}}{T_1} \mid T_2} $.\\
			If $ P' = T' $ then choose $ Q' = P' $. $ Q' \stepsT Q' $ and $ P' \barbBisim Q' $ follow from reflexivity.\\
			Else if $ T \stepT P' $ is a step on $ c $, then, since there are no conflicts with $ T \stepT T' $, the two steps reduce different outputs on $ c $ but the same replicated input. Hence $ T_2 \equiv \Out{c}{\tilde{y}} \mid T_3 $ and $ P' \equiv \Res{c}{\Out{c}{\tilde{z}} \mid T_1[ \tilde{y}/\tilde{x} ] \mid \RepInput{c}{\tilde{x}}{T_1} \mid T_3} $. Then $ T' $ can perform this step such that $ T' \stepT Q' $ with $ Q' = \Res{c}{T_1[ \tilde{z}/\tilde{x} ] \mid T_1[ \tilde{y}/\tilde{x} ] \mid \RepInput{c}{\tilde{x}}{T_1} \mid T_3} $. Also $ P' $ can perform the same step as $ T \stepT T' $ such that $ P' \stepT Q' $ (compare to Figure~\ref{fig:diamondProp}). Since no alternative step of a derivative of $ T $ can be in conflict with this step, we have $ \left( P', Q' \right) \in \mathcal{B} $.\\
			Else there is $ c' \neq c $ such that $ T_2 \equiv \Out{c'}{\tilde{y}} \mid \Input{c'}{\tilde{v}}{T_1'} \mid T_3 $ and $ P' \equiv \Res{c}{\Out{c}{\tilde{z}} \mid \RepInput{c}{\tilde{x}}{T_1} \mid T_1'[ \tilde{y}/\tilde{v} ] \mid T_3} $ (the case of replicated input is similar). Again $ T' $ can perform this step such that $ T' \stepT Q' $ with $ Q' = \Res{c}{T_1[ \tilde{z}/\tilde{x} ] \mid \RepInput{c}{\tilde{x}}{T_1} \mid T_1'[ \tilde{y}/\tilde{v} ] \mid T_3} $. Also $ P' $ can perform the same step as $ T \stepT T' $ such that $ P' \stepT Q' $. Since no alternative step of a derivative of $ T $ can be in conflict with this step, we have $ \left( P', Q' \right) \in \mathcal{B} $.
		\item $ T' \stepT Q' $ implies $ \exists P'\logdot T \stepsT P' \wedge P' \barbBisim Q' $:\\
			Choose $ P' = Q' $. Then $ T \stepT T' \stepT Q' $ and, by reflexivity, $ P' \barbBisim Q' $. 
		\item $ T\reachSuccess $ iff $ T'\reachSuccess $:\\
			Once success is unguarded it cannot be removed. Accordingly the step can only add an unguarded instance of success, which then is reachable from $ T $. By 1. and 2., $ T $ and $ T' $ can reach the same occurrences of success.
		\item $ T\ReachBarb{\EncCI{\cdot}a} $ iff $ T'\ReachBarb{\EncCI{\cdot}a} $:\\
			Since there are only outputs but no inputs on the free variant of $ \act $, steps can produce but not reduces free announcements. Every free announcement introduced by $ T \stepT T' $ is also reachable in $ T $. By 1. and 2., $ T $ and $ T' $ reach the same translated barbs.
	\end{compactenum}
\end{proof}

Many auxiliary steps implement the forwarding of announcements. They are steps on restricted channels such that there is always exactly one replicated input on this channel. This ensures that these steps cannot be in conflict with other steps of the encoding and thus do not change the state modulo $ \barbBisim $.

\begin{prop}
	Let $ T, T' \in \procT $ and $ T \stepT T' $ be a step on a restricted channel $ c $ such that the only input on $ c $ in $ T $ and all derivatives of $ T $ is exactly one replicated input.
	Then $ T \barbBisim T' $.
	\label{prop:auxStepsForward}
\end{prop}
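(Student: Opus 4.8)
The plan is to follow the template of the proof of Lemma~\ref{lem:noConflicts} and exhibit the required relation explicitly, rather than trying to invoke that lemma as a black box. Note that the hypothesis does not immediately reduce to Lemma~\ref{lem:noConflicts}: having \emph{exactly one} replicated input on $ c $ rules out conflicts of the second kind (two inputs competing for a single output), but a competing step may still reduce the same replicated input together with a \emph{second} output carrying a different value, which by the definition of conflict is a conflict of the first kind. The key point is that, because the input on $ c $ is \emph{replicated}, such a competition is harmless: the replicated input persists after firing, so the two steps commute and reach a common derivative. I would therefore let $ \mathcal{B} $ be the reflexive closure of the set of all pairs $ \left( T, T' \right) $ with $ T \stepT T' $ a step on a restricted channel $ c $ whose only input in $ T $ and all derivatives of $ T $ is exactly one replicated input, and show that $ \mathcal{B} $ is a success sensitive, (translated) barb respecting, weak reduction bisimulation.

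The core of the argument is the simulation clause $ T \stepT P' $ implies $ \exists Q'\logdot T' \stepsT Q' \wedge \left( P', Q' \right) \in \mathcal{B} $, which I would establish by case analysis on $ T \stepT P' $, writing $ T \equiv \Res{c}{\RepInput{c}{\tilde{x}}{T_1} \mid \Out{c}{\tilde{z}} \mid T_2} $ and $ T' \equiv \Res{c}{\RepInput{c}{\tilde{x}}{T_1} \mid T_1[\tilde{z}/\tilde{x}] \mid T_2} $. If $ P' = T' $, the diamond closes trivially by reflexivity. If $ T \stepT P' $ is itself a step on $ c $, then by hypothesis it must use the same unique replicated input together with another output $ \Out{c}{\tilde{y}} $ occurring in $ T_2 \equiv \Out{c}{\tilde{y}} \mid T_3 $; since the replicated input is still present in both $ T' $ and $ P' $, each can fire the output the other consumed, and both reach the common derivative $ Q' \equiv \Res{c}{\RepInput{c}{\tilde{x}}{T_1} \mid T_1[\tilde{z}/\tilde{x}] \mid T_1[\tilde{y}/\tilde{x}] \mid T_3} $, exactly as in Figure~\ref{fig:diamondProp}. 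If instead $ T \stepT P' $ is a step on some $ c' \neq c $, the two steps act on disjoint prefixes and again commute to a common $ Q' $. In every non-trivial case the closing step $ P' \stepT Q' $ is again a reduction on $ c $ using the unique replicated input, and since $ P' $ is a derivative of $ T $, the property ``only one replicated input on $ c $, in $ P' $ and all its derivatives'' is inherited; hence $ \left( P', Q' \right) \in \mathcal{B} $.

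The remaining clauses are routine. The reverse simulation $ T' \stepT Q' $ implies $ \exists P'\logdot T \stepsT P' \wedge \left( P', Q' \right) \in \mathcal{B} $ is handled by choosing $ P' = Q' $, using $ T \stepT T' \stepT Q' $ and reflexivity. For success sensitiveness and translated barbs I would reuse the monotonicity observations from Lemma~\ref{lem:noConflicts}: success, once unguarded, persists, and on the outermost $ \act $ there are only outputs, so the step can only \emph{add} reachable successes respectively announcements; combined with the two simulation clauses this yields that $ T $ and $ T' $ reach the same successes and the same translated barbs. The only genuine obstacle is making the diamond close in the first-kind case, i.e.\ recognising that replication turns the apparent conflict into a confluent pair and that the defining property of $ \mathcal{B} $ is preserved along derivatives, so that both tips of the diamond stay related.
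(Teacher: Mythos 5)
Your proof is correct, but it is organised differently from the paper's. The paper's own proof is a two-line reduction: it asserts that two steps reducing the same replicated input via different outputs are \emph{not} in conflict, concludes that no alternative step of $ T $ or its derivatives can conflict with $ T \stepT T' $, and then invokes Lemma~\ref{lem:noConflicts}. You instead decline that reduction---on the grounds that a literal reading of the definition of conflict (``reduce the same (replicated) input using two outputs that transmit different values'') would classify exactly this situation as a conflict---and re-run the bisimulation construction from scratch. The mathematical core of your argument, namely that the persistence of the replicated input makes the two competing steps commute to a common derivative as in Figure~\ref{fig:diamondProp}, is precisely the second case in the paper's proof of Lemma~\ref{lem:noConflicts}; the proof of that lemma itself treats ``different outputs on $ c $ but the same replicated input'' as the paradigmatic conflict-free configuration, which is the reading under which the paper's short reduction is sound. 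So your detour duplicates work the paper has already done inside Lemma~\ref{lem:noConflicts}, but it buys a self-contained argument that does not hinge on how the parenthetical ``(replicated)'' in the conflict definition is to be read---a genuine ambiguity in the text that you were right to notice. Your handling of the closure of $ \mathcal{B} $ under derivatives (so that both tips of the diamond remain related) and of the success and barb clauses is also correct.
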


\begin{proof}
	Two steps reducing the same replicated input (but different outputs) are not in conflict with each other. Thus $ T $ cannot perform a step that is conflict with $ T \stepT T' $. Note that replicated inputs are never removed. Since this replicated input is the only input on $ c $ in all derivatives of $ T $, no alternative step of any derivative of $ T $ is in conflict with $ T \stepT T' $. By Lemma~\ref{lem:noConflicts}, then $ T \barbBisim T' $.
\end{proof}

Variants of the channels $ \req, \lreq $, and $ \rreq $ do not carry parameters. For channels like these, a conflict can only result from two steps that reduce different (replicated) inputs, because the derivatives can differ only due to different continuations of the respective inputs.

\begin{prop}
	Let $ T, T' \in \procT $ and $ T \stepT T' $ reduce a restricted $ c $ such that no value is transmitted and there is at most one input or replicated input on $ c $ in $ T $ and all derivatives of $ T $.
	Then $ T \barbBisim T' $.
	\label{prop:auxStepsReq}
\end{prop}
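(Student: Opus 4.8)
The plan is to reduce to Lemma~\ref{lem:noConflicts}, exactly as in the proof of Proposition~\ref{prop:auxStepsForward}. Since $ c $ is restricted, the step $ T \stepT T' $ is on an unobservable channel, so the only hypothesis of Lemma~\ref{lem:noConflicts} left to establish is that no alternative step of $ T $ or of any derivative of $ T $ is in conflict with $ T \stepT T' $. Recall that two CCS steps are in conflict iff they reduce the same (replicated) input using two outputs that transmit different values, or reduce the same output using two (replicated) inputs with different continuations. I would therefore enumerate the possible alternative steps and rule out both conflict patterns.

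First, an alternative step on some channel $ c' \neq c $ reduces an input and an output on $ c' $; these are different action-prefixes from the input and output on $ c $ reduced by $ T \stepT T' $, so neither conflict pattern can apply. Second, consider an alternative step on $ c $ itself. By hypothesis there is at most one input or replicated input on $ c $, hence both $ T \stepT T' $ and the alternative step must reduce this single input; a conflict of the second kind (one output, two distinct inputs with different continuations) is thus impossible. A conflict of the first kind would require two outputs on $ c $ transmitting \emph{different} values, but no value is transmitted on $ c $, so any two outputs on $ c $ transmit the same (empty) tuple and no such conflict arises.

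The delicate case, and the one I expect to require the most care, is when the single input on $ c $ is \emph{non-replicated}: unlike the replicated situation of Proposition~\ref{prop:auxStepsForward}, reducing the input with one output consumes it, so the two candidate reductions are mutually exclusive. Here the hypothesis that no value is transmitted is exactly what is needed: both candidate reductions unguard the same continuation and leave the same collection of remaining outputs, so up to structural congruence they yield the same derivative and are not in conflict in the sense of the paper. (If the input is replicated, the situation is literally that of Proposition~\ref{prop:auxStepsForward}, a diamond, and there is again no conflict.)

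Finally, the hypothesis is stated for $ T $ \emph{and all derivatives of} $ T $, so the facts used above---at most one input on $ c $, and no value transmitted on $ c $---persist along every reduction sequence. Hence the argument applies uniformly to every alternative step of every derivative of $ T $, no such step conflicts with $ T \stepT T' $, and Lemma~\ref{lem:noConflicts} gives $ T \barbBisim T' $.
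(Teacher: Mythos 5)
Your proposal is correct and follows essentially the same route as the paper's proof: both reduce the claim to Lemma~\ref{lem:noConflicts} by observing that the single (replicated) input on $c$ means alternative steps on $c$ can only compete for different outputs, and that asynchrony (outputs have no continuation) together with the absence of transmitted values makes the resulting derivatives structurally congruent, so no conflict arises. Your explicit treatment of the non-replicated-input case is a slightly more careful spelling-out of what the paper leaves implicit, but it is not a different argument.
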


\begin{proof}
	Since there is always at most one (replicated) input on $ c $, alternative steps on this channel can only compete for different outputs. Let $ T \stepsT T_1 $. Since in $ \procT $ outputs have no continuation and because $ c $ does not carry a value, the continuations of two steps $ T_1 \stepT T_1' $ and $ T_1 \stepT T_1'' $ that reduce different outputs on $ c $ but the same (replicated) input are structural congruent, \ie $ T_1' \equiv T_1'' $.
	By Lemma~\ref{lem:noConflicts} and because all other steps on different channels are not in conflict with $ T \stepsT T' $, then $ T \barbBisim T' $.
\end{proof}

The encoding propagates announcements through the translated parallel structure. In the translation of parallel operators it combines all left and right announcements \wrt to the same channel name, if this channel needs to be synchronised. Therefore we copy announcements.
We use locks carrying a Boolean value to indicate whether an announcement was already used to simulate a source term step. These locks carry $ \top $ in the beginning and are swapped to $ \bot $ as soon as the announcement was used. In each state there is at most one positive instantiation of each lock and as soon as a lock is instantiated negatively it never becomes positive again.

\begin{lemma}
	Let $ T \in \procT $ such that $ \exists S\logdot \EncCI{S} \stepsT T $. Then for each variant $ l $ of the names $ \lock, \lLock, \rLock $
	\begin{compactenum}
		\item there is at most one positive instantiation of $ l $ in $ T $,
		\item if there is a positive instantiation of $ l $ in $ T $ then there is no other instantiation of $ l $ in $ T $,
		\item if there is a negative instantiation of $ l $ in $ T $ then no derivative of $ T $ contains a positive instantiation of $ l $.
	\end{compactenum}
	\label{lem:sumLocks}
\end{lemma}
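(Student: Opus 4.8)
The plan is to prove the three properties simultaneously by induction on the length of the reduction $ \EncCI{S} \stepsT T $, after strengthening the statement into a \emph{phase invariant} attached to each lock. Inspecting Figure~\ref{fig:innerEncoding}, every variant $ l $ of $ \lock, \lLock, \rLock $ is governed by a single sequential \emph{controller}: for a sum lock this is the thread $ \Input{\req}{}{\Out{\lock}{\top}} $ together with the guarded producers under the $ \simu_i $, and for a combined lock it is the body of $ \Sim $, whose outer $ \res{\lock'} $-loop keeps exactly one copy of the $ \In{\req}{} $-handler active at a time. I would track, for each lock, which of finitely many phases its controller occupies---\emph{idle} (a guarded $ \Out{\lock}{\top} $ ready, no instantiation present), \emph{positive} ($ \Out{\lock}{\top} $ unguarded, the $ \top $-producer already consumed), \emph{reading} (the instantiation consumed, awaiting $ \simu $), and \emph{negative} (only a replicated $ \RepInput{\req}{}{\Out{\lock}{\bot}} $ remains, possibly with several $ \Out{\lock}{\bot} $ outstanding)---and show that these are the only reachable phases and that each entails (1)--(3).

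For the base case $ T = \EncCI{S} $ I would observe, by direct inspection of Figure~\ref{fig:innerEncoding}, that no instantiation of any such lock is unguarded: in a sum the only producer $ \Input{\req}{}{\Out{\lock}{\top}} $ sits behind $ \req $, and every $ \Sim $ is guarded inside the prefixed, replicated structure of $ \Synch{\ch} $, so all three properties hold vacuously. The inductive step analyses the step $ T \stepT T' $ by cases on the channel it reduces. The only steps that can change instantiations of $ l $ are: firing $ \req $ from the \emph{idle} phase (creating the unique $ \Out{\lock}{\top} $, hence (1) and (2)); reading the lock (consuming the instantiation); resolving $ \simu $ to $ \top $ (installing the permanent $ \bot $-producer and discarding the $ \top $-producer); resolving $ \simu $ to $ \bot $ or, in $ \Sim $, re-emitting $ \Out{\lock'}{} $ (returning to \emph{idle}); and firing the replicated $ \bot $-producer on a pending $ \req $ (adding another $ \Out{\lock}{\bot} $, which affects neither (1) nor, by the hypothesis that no $ \top $-producer survives, (3)). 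In each case I would check that the controller of $ l $ moves along an allowed transition into a phase still satisfying (1)--(3); steps on channels unrelated to $ l $ leave its controller untouched.

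The crux is property (1) together with the irreversibility in (3), and both hinge on the sequentiality enforced by the controllers. For a sum lock, $ \Input{\req}{}{\Out{\lock}{\top}} $ is a \emph{non-replicated} input, so a positive instantiation can reappear only after the previous one has been read and $ \simu $ has aborted it, re-installing the guard; the successful ($ \simu $ returns $ \top $) branch instead replaces that guard by the replicated $ \bot $-producer, after which no reachable state contains any producer of $ \Out{\lock}{\top} $---which is exactly (3). For a combined lock the same discipline is supplied by the $ \res{\lock'} $-loop in $ \Sim $: $ \Out{\lock'}{} $ is re-emitted only in the aborted branch, never between producing $ \Out{\lock}{\top} $ and resolving $ \simu $, so at most one $ \In{\req}{} $-handler, and hence at most one $ \Out{\lock}{\top} $, is ever live. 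I expect the main obstacle to be the bookkeeping under parameter passing: the names $ \lock, \lLock, \rLock $ travel inside announcements $ \Out{\act}{\ldots} $ and are bound by substitution in $ \Synch{\ch} $ and $ \Sim $, so I must first argue that each restricted $ \lock $/$ \req $ pair stays attached to exactly one controller, making ``the controller of $ l $'' well defined after substitution and guaranteeing that a single reduction can touch the instantiations of only one lock. Once this association is secured, properties (1)--(3) follow phase by phase from the case analysis.
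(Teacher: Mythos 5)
Your proof is correct and follows essentially the same route as the paper's: an induction on the number of steps from $\EncCI{S}$, grounded in the observations that no instantiation is initially unguarded, that only the external-choice translation and $\Sim$ produce instantiations (under restriction), and that unguarding a fresh positive instantiation requires first consuming one (via $\simu_i$ in the sum case and via $\lock'$ in $\Sim$), while the negative case leaves only the replicated $\bot$-producer. Your phase-invariant formulation and the remark about tracking the lock/controller association under substitution merely make explicit what the paper's proof leaves implicit.
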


\begin{proof}
	Analysing the encoding function in Figure~\ref{fig:innerEncoding} we observe that initially no instantiations of locks are unguarded. $ \Sim $ and the translation of external choice are the only parts of the translation that introduce instantiations of locks and both restrict the respective locks.
	
	In the translation of external choice all instantiations of the lock $ \lock $ are guarded by an input or a replicated input on $ \req $. Moreover, to unguard one of the later two instantiations within the $ \ITE{\cdot}{\cdot}{\cdot} $-construct, a step on $ \simu_i $ is necessary. Therefore we need an instantiation of $ \simu_i $. The only instances of a variant of $ \simu, \simu_i, \lsimu $, $ \rsimu $ are generated by $ \Sim $. There they are guarded and to unguard them a positive instantiation of the corresponding lock has to be consumed. This way only a single positive instantiation can be unguarded, but $ \RepInput{\req}{}{\Out{\lock}{\bot}} $ allows to obtain several negative instantiations of $ \lock $ if there are several outputs on $ \req $.
	
	To unguard an instantiation of $ \lock $ within the $ \ITE{\cdot}{\cdot}{\cdot} $-constructs in $ \Sim $ a step on $ \lock' $ is necessary. Initially there is only a single unguarded output on $ \lock' $. A subsequent output on $ \lock' $ can be unguarded  by consuming a negative instantiation of $ \simu $ and that requires again the consumption of a positive instantiation of $ \lock $. Moreover, if a negative instantiation of $ \lock $ is unguarded, then also $ \RepInput{\req}{}{\Out{\lock}{\bot}} $ but no output on $ \lock' $ is unguarded.
	
	Thus both cases 1.\@ and 2.\@ follow by induction over the number of steps in $ \EncCI{S} \stepsT T $. Since initially only positive instantiations of $ l $ are reachable, by 1.\@ and 2., and because the unguarding of a new positive instance of $ l $ requires the consumption of a positive instantiation of $ l $, property 3.\@ holds.
\end{proof}

Moreover each target term contains at most a single input or replicated input for each variant of $ \req, \lreq $, and $ \rreq $.

\begin{lemma}
	Let $ T \in \procT $ such that $ \exists S\logdot \EncCI{S} \stepsT T $. Then for each variant $ r $ of the names $ \req, \lreq, \rreq $ there is at most one (replicated) input on $ r $ in $ T $.
	\label{lem:reqLocks}
\end{lemma}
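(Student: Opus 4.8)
The plan is to prove the claim by induction on the length of the reduction sequence $\EncCI{S} \stepsT T$, carrying the stronger invariant that for each restricted variant $r$ of $\req$, $\lreq$, $\rreq$ there is at most one \emph{unguarded} (replicated) input on $r$ (the many further occurrences inside $\ITE{\cdot}{\cdot}{\cdot}$-branches are guarded and hence do not count). Since $\lreq$ and $\rreq$ are bound names that, via the announcements received in $\Synch{\ch}$, are instantiated to the actual request channels $\req$ created either by the external-choice clause or inside $\Sim$, it suffices to track, for each such restricted channel, the unguarded inputs sitting on it; an input ``on $\lreq$'' or ``on $\rreq$'' is just an input on the $\req$ of the corresponding sub-announcement.

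For the base case $T = \EncCI{S}$ I would inspect Figure~\ref{fig:innerEncoding}. The only inputs on request channels that are initially unguarded arise from the external-choice clause, where each freshly restricted $\req$ carries precisely the single input $\Input{\req}{}{\Out{\lock}{\top}}$; the further inputs $\RepInput{\req}{}{\Out{\lock}{\bot}}$ and $\Input{\req}{}{\Out{\lock}{\top}}$ occurring inside the $\ITE{\cdot}{\cdot}{\cdot}$-constructs are guarded by the replicated inputs on the $\simu_i$. Every occurrence of an input on a request channel inside $\Sim$ is guarded as well, because $\Sim$ appears only underneath the announcement-receiving prefixes of $\Synch{\ch}$. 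Hence the invariant holds initially.

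For the inductive step, assume the invariant for $T$ and consider $T \stepT T'$. A new unguarded input on a request channel can be produced only by (i) unguarding a fresh $\Sim$ through a reception in $\Synch{\ch}$, which by $\alpha$-conversion yields a brand-new restricted $\req$ together with its unique $\In{\req}{}$, so no conflict arises; (ii) the internal loop of $\Sim$, where consuming a negative $\simu$-signal and emitting $\Out{\lock'}{}$ re-enables a single $\In{\req}{}$; (iii) firing a $\simu_i$ in an external choice, whose $\ITE{\cdot}{\cdot}{\cdot}$ then resolves to a new $\RepInput{\req}{}{\Out{\lock}{\bot}}$ or $\Input{\req}{}{\Out{\lock}{\top}}$; and (iv) the commit/abort branches inside $\Sim$ unguarding $\RepInput{\req}{}{\Out{\lock}{\bot}}$ on the $\Sim$'s own fresh $\req$. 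In cases (ii)--(iv) the newly unguarded input sits on a request channel that already existed, and I would argue that it can appear only after the pre-existing input on that same channel has been consumed: reaching any of these branches requires first consuming a \emph{positive} instantiation of the associated lock (via $\In{\lLock}{\boolV}$, $\In{\rLock}{\boolV}$, or the $\simu_i$-signal), and by the encoding such a positive instantiation is produced only by the input $\Input{\req}{}{\Out{\lock}{\top}}$ reacting to a request. Thus the old input is gone by the time the new one appears, and the count stays at one.

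The main obstacle is the external-choice case (iii), because all summands of a choice share the \emph{same} restricted $\req$ and $\lock$, and the same left announcement may be paired with several right announcements, so several $\Sim$ instances can drive the same $\simu_i$ and compete to restore or commit the shared lock. Here I would appeal to Lemma~\ref{lem:sumLocks}: since there is at most one positive instantiation of the shared $\lock$, and it is destroyed before any abort or commit can proceed, these restores and commits are serialised. Consequently a new $\Input{\req}{}{\Out{\lock}{\top}}$ (from an aborted attempt) or $\RepInput{\req}{}{\Out{\lock}{\bot}}$ (from a successful commit) is unguarded only after the currently unique input on the shared $\req$ has been consumed to supply that unique positive lock, so the invariant is preserved across the step. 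The base case together with the inductive step then yields the lemma.
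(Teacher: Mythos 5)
Your proof is correct and follows essentially the same route as the paper's: an induction on the length of the reduction sequence, identifying $\Sim$ and the translation of external choice as the only sources of (replicated) inputs on request channels, and the key circularity argument that unguarding a new input on $r$ requires first consuming a positive instantiation of the associated lock, which in turn requires consuming the previously unique input on $r$. Your explicit appeal to Lemma~\ref{lem:sumLocks} for the shared lock of an external choice and your case split for freshly restricted $\req$ channels in new $\Sim$ instances only make explicit what the paper's proof leaves implicit.
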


\begin{proof}
	(Replicated) inputs on $ r $ are introduced by $ \Sim $ and the translation of external choice.
	
	In $ \Sim $ to unguard an input on $ r $ an output on $ \lock' $ has to be consumed. Initially there is a single such output. Additional outputs on $ \lock' $ can only be unguarded by consuming a positive instantiation of the lock $ \lock $. Unguarding a positive instantiation of $ \lock $ in turn requires to consume an input on $ r $. Unguarding a replicated input on $ r $ also unguards a negative instantiation of $ \lock $.
	
	The translation of external choice initially offers exactly one unguarded input on $ r $. To unguard an additional (replicated) input on $ r $, we have to consume a positive instantiation of the lock $ \lock $ (to obtain an instantiation of $ \simu $). Unguarding a positive instantiation of $ \lock $ in turn requires to consume an input on $ r $. Unguarding a replicated input on $ r $ also unguards a negative instantiation of $ \lock $.
	
	By induction over the number of steps in $ \EncCI{S} \stepsT T $, we can show that there is at most one (replicated) input on $ r $ in $ T $.
\end{proof}

$ \Synch{c} $ combines each left announcement of this action with each right announcement of this action. Therefore each left announcement---transmitted over $ \Proj{\Renam{c}}{2} $ to keep track of the source term action---restricts its own version of $ \syn $ and $ \syn' $. Then over each $ \syn $ all right announcements---initially transmitted over $ \Proj{\Renam{c}}{3} $---are received and forwarded to the next variant of $ \syn $ by a message on $ \syn' $. Derivatives of $ \Synch{c} $ can differ in the order in that left announcements on $ \Proj{\Renam{c}}{2} $ were received. Two left announcements for the same action cannot be processed concurrently, but also the source term cannot perform two steps on the same synchronised channel concurrently. We show that the different order of left announcements does not matter.

\begin{lemma}
	Let $ T, T' \in \procT $ such that $ \exists S\logdot \EncCI{S} \stepsT T \stepT T' $ and $ T \stepT T' $ reduces an output on $ \Proj{\Renam{c}}{2} $. Moreover assume that for all steps $ T_1 \stepT T_1' $ on a variant of $ \syn, \syn', \nextSyn $ with $ \exists S'\logdot \EncCI{S'} \stepsT T_1 $ it holds $ T_1 \barbBisim T_1' $.
	Then $ T \barbBisim T' $.
	\label{lem:orderLeftAnn}
\end{lemma}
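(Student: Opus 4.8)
The plan is to build an explicit bisimulation. First I would locate the step inside $\Synch{c}$: the output on $\Proj{\Renam{c}}{2}$ being reduced is a pending \emph{left announcement}, and the input consuming it is the unique currently-enabled input $\In{\Proj{\Renam{c}}{2}}{\lreq,\lLock,\lsimu}$ at the head of the $\nextSyn$-chain. Since $\Proj{\Renam{c}}{2}$ is restricted by the surrounding $\EncCI{\CSPPar{P}{Q}{A}}$, the step is unobservable. It is, however, \emph{not} conflict-free: several left announcements (outputs on $\Proj{\Renam{c}}{2}$ carrying different locks) may compete for this single head input, so Lemma~\ref{lem:noConflicts} does not apply directly. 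The content of the lemma is exactly that these conflicts are harmless, because processing the left announcements in a different order only permutes the stages of the forwarding pipeline while still pairing every left announcement with every right announcement.

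Accordingly I would define a relation $R$ on terms reachable from some $\EncCI{S}$ that relates two such terms whenever they are identical except for the order in which left announcements have been absorbed into one $\Synch{c}$ pipeline (up to $\alpha$-renaming of the freshly restricted $\syn,\syn',\req,\lock,\simu$ generated per stage), and set $\mathcal{B} = {\barbBisim} \cup R \cup R^{-1}$. The pair $(T,T')$ lies in $R$, as $T'$ is $T$ with one more left announcement absorbed at the head. The core of the argument is an \emph{invariant} describing the reachable shape of a $\Synch{c}$ occurrence --- a sequence of stages, one per already-absorbed left announcement, through which right announcements flow via $\syn$/$\syn'$, together with a pool of not-yet-absorbed outputs on $\Proj{\Renam{c}}{2}$ and a single head input --- plus a check that reordering these stages keeps the term within $R$.

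To verify that $\mathcal{B}$ is a success-sensitive, (translated) barb-respecting, weak reduction bisimulation it suffices to treat $R$-pairs, since $\barbBisim$-pairs already satisfy every clause. For an $R$-pair $(U,V)$ and a step of $U$: steps outside the pipeline (inside $\EncCI{P}$, $\EncCI{Q}$, or on unrelated channels) are mirrored verbatim and stay in $R$; a forwarding step on a variant of $\syn,\syn',\nextSyn$ is matched by the idle move and discharged by the hypothesis, which gives $U \barbBisim U'$, hence $(U',V) \in \mathcal{B}$ via $\barbBisim \subseteq \mathcal{B}$; a combination step, in which a right announcement meets a stage and emits $\Out{\act}{\Proj{\Renam{c}}{1},\req,\lock,\simu} \mid \Sim$, is matched in $V$ by first performing --- using the hypothesis to absorb them into $\barbBisim$ --- the forwarding steps that bring the corresponding right announcement to the corresponding reordered stage, and then the same combination, so that the two emitted announcements agree up to $\alpha$-renaming of the restricted $\req,\lock,\simu$ and the results are again $R$-related. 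Finally, consuming another pending left announcement in $U$ is matched in $V$ by advancing its chain head (forwarding steps absorbed via the hypothesis) and consuming the same announcement, landing in the ``both absorbed'' state that differs from $U'$ only by stage order, hence in $R$. Success is preserved because no stage manipulation unguards or discards $\success$, and translated barbs are preserved because both orders generate exactly the same combined $\act$-announcements (Observation~\ref{obs:transBarbs}).

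The main obstacle will be making the pipeline invariant precise enough to support the matching of combination steps across different stage orders: I must argue that, whatever right announcement has been forwarded to whatever depth in $U$, the reordered $V$ can reach --- through $\barbBisim$-preserving $\syn$/$\syn'$/$\nextSyn$ steps only --- a state offering the matching combination, and that the freshly restricted per-stage names can be aligned by $\alpha$-conversion. This is precisely where the lemma's hypothesis on $\syn,\syn',\nextSyn$ steps does the work, so the remaining effort is the bookkeeping of stages rather than any genuinely new bisimulation argument.
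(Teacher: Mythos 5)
Your proposal is correct and follows essentially the same route as the paper's proof: both arguments observe that the single head input on $\Proj{\Renam{c}}{2}$ makes competing left announcements conflict, that the resulting continuations differ only in the freshly restricted $\syn$-stage structure, and that the lemma's hypothesis on $\syn,\syn',\nextSyn$ steps renders these differences invisible modulo $\barbBisim$ (the paper states this in a few terse sentences without exhibiting a relation, whereas you construct one explicitly). The only adjustment needed is that your relation $R$ must relate terms whose \emph{sets} of absorbed left announcements differ, not merely their order --- that is what the pair $(T,T')$ and the pairs produced by your final matching case actually look like --- but this is exactly the stage bookkeeping you already flag as the remaining work.
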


\begin{proof}
	The only part of $ \EncCI{\cdot} $ that provides inputs on $ \Proj{\Renam{c}}{2} $ is $ \Synch{c} $. Since $ \Proj{\Renam{c}}{2} $ is restricted in the translation of the parallel operator, $ T $ can have at most one unguarded input on $ \Proj{\Renam{c}}{2} $ but several outputs on $ \Proj{\Renam{c}}{2} $. Thus different steps on this channel are in conflict with each other.
	A new input on $ \Proj{\Renam{c}}{2} $ is unguarded by reducing the replicated input $ \RepIn{\nextSyn}{\syn} $. Thus the continuations of different steps on $ \Proj{\Renam{c}}{2} $ differ by the variant of $ \syn $ only. Each reduction of the input on $ \Proj{\Renam{c}}{2} $ immediately restricts new variants of $ \syn $ and $ \syn' $ and provides a new output $ \Out{\nextSyn}{\syn} $. Since all steps on variants of $ \syn, \syn', \nextSyn $ do not change the state modulo $ \barbBisim $ and because all variants of $ \syn, \syn', \nextSyn, \Proj{\Renam{c}}{2} $ are restricted, the continuations of different inputs on  $ \Proj{\Renam{c}}{2} $ cannot be distinguished by $ \barbBisim $.
	Thus $ T \barbBisim T' $.
\end{proof}

\section{The Centralised Encoding}
\label{sec:central}

Figure~\ref{fig:innerEncoding} describes how to translate CSP actions into announcements augmented with locks and how the other operators are translated to either forward or combine these announcements and locks. With that $ \EncCI{\cdot} $ provides the basic machinery of our encoding from CSP into CCS with name passing and matching. However it does not allow to simulate any source term step. Therefore we need a second (outer) layer that triggers and coordinates the simulation of source term steps. We consider two ways to implement this coordinator: a centralised and a de-centralised coordinator. The centralised coordinator is depicted in Figure~\ref{fig:centralised}.

\begin{figure}
	\begin{align*}
		\EncCO{P} \deff & \Res{\act, \once}{\EncCI{P} \mid \Out{\once}{} \mid\RepInput{\once}{}{\Input{\act}{\ch, \req, \lock, \simu}{\left( \Out{\req}{} \mid \Input{\lock}{\boolV}{\left( \Out{\once}{} \mid \IF{\boolV}\;\THEN{\Out{\simu}{\top}} \right)} \right)}}}
	\end{align*}
	\caption{A \textbf{centralised} encoding from CSP into CCS with value passing.}
	\label{fig:centralised}
\end{figure}

The channel $ \once $ is used to ensure that simulation attempts of different source term steps cannot overlap each other. For each simulation attempt exactly one announcement is consumed. The coordinator then triggers the computation of the respective lock that was transmitted in the announcement. This request for the computation of the lock is propagated along the parallel structure induced by the translations of parallel operators until---in the leafs---encodings of sums are reached. There the request for the computation yields the transmission of the current value of the respective lock. While being transmitted back to the top of the tree, different locks that refer to synchronisation in the source terms are combined. If the computation of the lock results with $ \top $ at the top of the tree, the respective source term step is simulated. Else the encoding aborts the simulation attempt and restores the consumed informations about the values of the respective locks. In both cases a new instance of $ \Out{\once}{} $ allows to start the next simulation attempt. Accordingly only some post-processing steps can overlap with a new simulation attempt.

The central coordinator respects the protocol on locks used to ensure that each announcement is only used once to simulate a source term step, \ie it preserves the properties of locks formulated in Lemma~\ref{lem:sumLocks}.

\begin{prop}
	Let $ T \in \procT $ such that $ \exists S\logdot \EncCO{S} \stepsT T $. Then for each variant $ l $ of the names $ \lock, \lLock, \rLock $
	\begin{compactenum}
		\item there is at most one positive instantiation of $ l $ in $ T $,
		\item if there is a positive instantiation of $ l $ in $ T $ then there is no other instantiation of $ l $ in $ T $, and
		\item if there is a negative instantiation of $ l $ in $ T $ then no derivative of $ T $ contains a positive instantiation of $ l $.
	\end{compactenum}
	\label{prop:sumLocksCentral}
\end{prop}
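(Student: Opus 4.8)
The plan is to run the very same induction as in the proof of Lemma~\ref{lem:sumLocks}, treating the central coordinator of Figure~\ref{fig:centralised} as one additional, well-behaved \emph{client} of the lock protocol rather than as a new source of lock instantiations. First I would note that the coordinator $\Out{\once}{} \mid \RepInput{\once}{}{\Input{\act}{\ch, \req, \lock, \simu}{(\cdots)}}$ contains no instantiation of any variant of $\lock, \lLock, \rLock$ whatsoever: every output it can ever emit is on $\once$, $\req$, or $\simu$. Hence the only subterms that introduce instantiations of the locks in question are still $\Sim$ and the translation of external choice, both of which restrict their respective locks exactly as before. In particular the base case is unchanged: $\EncCI{S}$ is a subterm of $\EncCO{S}$ and the coordinator adds no unguarded instantiation, so in $\EncCO{S}$ no instantiation of $l$ is unguarded.

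Next I would verify that each \emph{new} kind of reduction that the coordinator enables respects the discipline that drives the counting argument of Lemma~\ref{lem:sumLocks}. Steps on $\once$ or on the (now restricted, outermost) $\act$ merely advance the coordinator and touch no lock. The request $\Out{\req}{}$ that the coordinator issues is, from the point of view of the protocol, just another output on the corresponding variant of $\req$; by Lemma~\ref{lem:reqLocks} there is at most one (replicated) input listening for it, and the non-replicated $\Input{\req}{}{\Out{\lock}{\top}}$ of an external choice still fires at most once, so additional requests can only create further \emph{negative} instantiations via $\RepInput{\req}{}{\Out{\lock}{\bot}}$, never a second positive one. The coordinator's read $\Input{\lock}{\boolV}{(\cdots)}$ only consumes an instantiation of $\lock$ and can therefore never increase their number. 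Finally, in its positive branch the coordinator emits $\Out{\simu}{\top}$, which is precisely a positive simu signal: whether it reaches an encoded external choice or a $\Sim$, its only effect is to unguard continuations together with replicated handlers $\RepInput{\req}{}{\Out{\lock}{\bot}}$ and, in the case of $\Sim$, to propagate the positive signals $\Out{\lsimu}{\top}$ and $\Out{\rsimu}{\top}$ downwards. Crucially, this cascade never unguards a fresh positive instantiation of any lock.

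With these observations in place the induction over the length of $\EncCO{S} \stepsT T$ proceeds exactly as for Lemma~\ref{lem:sumLocks}: unguarding a new positive instantiation of $l$ still requires the consumption of a positive instantiation of $l$, either through the branch of $\Sim$ that demands both the left and the right lock to evaluate to $\top$, or through the single request-triggered $\Out{\lock}{\top}$ of an external choice, which can be re-armed only after a negative simu. Thus claims~1 and~2 follow by induction, and claim~3 follows as in Lemma~\ref{lem:sumLocks}: since initially only positive instantiations of $l$ are reachable and, by claims~1 and~2, unguarding a new positive instance of $l$ requires consuming a positive instantiation of $l$, no derivative reached after a negative instantiation can contain a positive one.

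I expect the only real work to be the case analysis of the previous paragraph, namely confirming that the coordinator's request-then-read-then-commit pattern cannot conspire with the internal request handlers of $\Sim$ and of external choice to produce two coexisting positive instantiations of the same lock. Concretely, one must argue that $\Out{\req}{}$ and $\Out{\simu}{\top}$ issued by the coordinator are subsumed, respectively, by the ``arbitrarily many requests'' and ``positive simu'' situations already handled in Lemma~\ref{lem:sumLocks}; everything else is a routine transfer of that proof to $\EncCO{\cdot}$.
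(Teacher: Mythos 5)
Your proof is correct and takes essentially the same route as the paper's own (much more condensed) argument: the coordinator of Figure~\ref{fig:centralised} introduces no new instantiations of any variant of $\lock, \lLock, \rLock$, its extra $\Out{\simu}{\top}$ can only be unguarded after consuming a positive instantiation of the corresponding lock, and hence the induction of Lemma~\ref{lem:sumLocks} carries over unchanged. One minor caveat: your appeal to Lemma~\ref{lem:reqLocks} formally concerns derivatives of $\EncCI{\cdot}$, and its analogue for $\EncCO{\cdot}$ (Proposition~\ref{prop:reqLocksCentral}) is itself derived from the present proposition, so that reference is best dropped --- your argument does not actually depend on it.
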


\begin{proof}
	The encoding in Figure~\ref{fig:centralised} does not introduce new instantiations of $ l $. It does provide additional instantiations of $ \simu $, but to unguard them again a positive instantiation of the corresponding lock $ \lock $ has to be consumed. Thus $ \EncCO{\cdot} $ preserves the properties of locks formulated in Lemma~\ref{lem:sumLocks}.
\end{proof}

Similarly Lemma~\ref{lem:reqLocks} is preserved.

\begin{prop}
	Let $ T \in \procT $ such that $ \exists S\logdot \EncCO{S} \stepsT T $. Then for each variant $ r $ of the names $ \req, \lreq, \rreq $ there is at most one (replicated) input on $ r $ in $ T $.
	\label{prop:reqLocksCentral}
\end{prop}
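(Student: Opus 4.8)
The plan is to prove Proposition~\ref{prop:reqLocksCentral} in the same spirit as Proposition~\ref{prop:sumLocksCentral}: I would show that the centralised coordinator preserves the invariant already established for the inner encoding in Lemma~\ref{lem:reqLocks}. First I would read off Figure~\ref{fig:centralised} which (replicated) inputs on variants of $\req, \lreq, \rreq$ the coordinator itself contributes. Its only subterms are the replicated input on $\once$, an input on $\act$, the output $\Out{\req}{}$, an input on $\lock$, outputs on $\once$, and the output $\Out{\simu}{\top}$; in particular it provides \emph{no} input on any variant of $\req, \lreq, \rreq$, contributing on $\req$ only the trigger output $\Out{\req}{}$. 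Consequently the sources of (replicated) inputs on these channels remain exactly $\Sim$ and the translation of external choice, as in Lemma~\ref{lem:reqLocks}. This already settles the base case of an induction over the length of $\EncCO{S} \stepsT T$, since the coordinator adds no such input and $\EncCI{S}$ satisfies the bound by Lemma~\ref{lem:reqLocks}.

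Next I would verify that the additional reductions enabled by the coordinator do not break the accounting used in Lemma~\ref{lem:reqLocks}. A new (replicated) input on a variant $r$ is unguarded only when a positive $\simu$-message is received by $\RepIn{\simu_i}{\boolV}$, which unguards $\RepInput{\req}{}{\Out{\lock}{\bot}}$ together with $\EncCI{P_i}$. In the centralised setting such a positive $\Out{\simu}{\top}$ is emitted only after the coordinator has consumed a positive instantiation of the associated lock $\lock$ through its input on $\lock$, and that positive instantiation was itself produced by the coordinator's $\Out{\req}{}$ reducing against the unique input $\Input{\req}{}{\Out{\lock}{\top}}$ on $r$. Hence unguarding a fresh input on $r$ consumes the previous input on $r$, keeping the count at most one, exactly as in Lemma~\ref{lem:reqLocks}. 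I would also note that the continuation $\EncCI{P_i}$ unguarded this way carries inputs only on \emph{fresh}, restricted variants of $\req$ (bound by $\res{\req, \lock, \simu_1, \ldots, \simu_n}$ in the external choice, respectively by $\Res{\req, \lock, \simu}{\ldots}$ in $\Synch{\ch}$), which are distinct from $r$ and therefore cannot increase the number of inputs on $r$.

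The hard part will be the uniform bookkeeping for the announcements reaching the outermost $\act$: the triple $(\req, \lock, \simu)$ carried by such an announcement may originate either directly from a sum or from the $\Sim$ term of a synchronised channel, and I would want to confirm for both origins that the coordinator's interaction (requesting via $\Out{\req}{}$, reading $\lock$, and answering via $\Out{\simu}{\top}$) respects the ``at most one input per variant'' property. Once this is checked, the proposition follows by the induction sketched above: inner-encoding steps are covered by Lemma~\ref{lem:reqLocks}, whereas the new coordinator steps are covered by the balance argument, so $\EncCO{\cdot}$ preserves the property of Lemma~\ref{lem:reqLocks} and there is at most one (replicated) input on each variant $r$ in $T$.
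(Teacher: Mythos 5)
Your proposal is correct and follows essentially the same route as the paper: the paper's proof is a two-line remark observing that the coordinator of Figure~\ref{fig:centralised} contributes no new (replicated) inputs on variants of $\req, \lreq, \rreq$ but only additional instantiations of $\simu$, whose unguarding requires consuming a positive instantiation of the corresponding lock $\lock$, so that Lemma~\ref{lem:reqLocks} together with Proposition~\ref{prop:sumLocksCentral} carries over. Your more detailed accounting (new input on $r$ only via a positive $\simu$-message, which presupposes consuming the unique positive lock instantiation and hence the previous input on $r$) is exactly the elaboration of that remark, and the ``hard part'' you flag is already covered by this balance argument.
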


\begin{proof}
	Follows from Lemma~\ref{lem:reqLocks} and Proposition~\ref{prop:sumLocksCentral}, because the encoding in Figure~\ref{fig:centralised} provides additional instantiations of $ \simu $, but to unguard them a positive instantiation of the corresponding lock $ \lock $ has to be consumed.
\end{proof}

Lemma~\ref{lem:orderLeftAnn} is preserved by $ \EncCO{\cdot} $, because the encoding in Figure~\ref{fig:centralised} does not use variants of the names $ \syn, \syn', \nextSyn $, and $ \Proj{\Renam{c}}{2} $.

\begin{prop}
	Let $ T, T' \in \procT $ such that $ \exists S\logdot \EncCO{S} \stepsT T \stepT T' $ and $ T \stepT T' $ reduces an output on $ \Proj{\Renam{c}}{2} $. Moreover assume that for all steps $ T_1 \stepT T_1' $ on a variant of $ \syn, \syn', \nextSyn $ with $ \exists S'\logdot \EncCO{S'} \stepsT T_1 $ it holds $ T_1 \barbBisim T_1' $.
	Then $ T \barbBisim T' $.
	\label{prop:orderLeftAnnCentral}
\end{prop}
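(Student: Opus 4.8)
The plan is to reduce the claim to Lemma~\ref{lem:orderLeftAnn} by exploiting the fact that $ \EncCO{\cdot} $ differs from $ \EncCI{\cdot} $ only by a fixed top-level coordinator context. First I would make the structural situation precise: by Figure~\ref{fig:centralised} we have $ \EncCO{S} \equiv \Res{\act, \once}{\EncCI{S} \mid C} $, where $ C $ is the coordinator $ \Out{\once}{} \mid \RepInput{\once}{}{\ldots} $, and this coordinator communicates only on the channels $ \act $ and $ \once $. In particular it introduces neither inputs nor outputs on any variant of $ \syn $, $ \syn' $, $ \nextSyn $, or $ \Proj{\Renam{c}}{2} $. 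This is exactly the remark preceding the proposition, and it is the single fact on which everything hinges.

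Next I would check that this separation is preserved along reductions. Since $ C $ only consumes announcements on $ \act $ and reinstates $ \Out{\once}{} $, every reachable derivative of $ \EncCO{S} $ still has the shape ``inner machinery plus coordinator residue'', where the coordinator residue never touches $ \syn, \syn', \nextSyn, \Proj{\Renam{c}}{2} $. Consequently all the structural facts used in the proof of Lemma~\ref{lem:orderLeftAnn} continue to hold verbatim: the only inputs on $ \Proj{\Renam{c}}{2} $ still originate from $ \Synch{c} $ inside $ \EncCI{\cdot} $; the restriction on $ \Proj{\Renam{c}}{2} $ is still in force; at most one unguarded input on $ \Proj{\Renam{c}}{2} $ is present at any time, because a fresh one is unguarded only by reducing $ \RepIn{\nextSyn}{\syn} $, which in turn is re-enabled only by the $ \Out{\nextSyn}{\syn} $ sitting guarded behind the previous input; and the continuations of competing steps on $ \Proj{\Renam{c}}{2} $ therefore differ only by the freshly restricted variant of $ \syn $ (together with the new $ \Out{\nextSyn}{\syn} $ and restriction on $ \syn' $).

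With these invariants re-established, the remainder of the argument is identical to that of Lemma~\ref{lem:orderLeftAnn}: the proposition's hypothesis guarantees that every step on a variant of $ \syn, \syn', \nextSyn $ reachable from some $ \EncCO{S'} $ preserves the state modulo $ \barbBisim $, and since all of $ \syn, \syn', \nextSyn, \Proj{\Renam{c}}{2} $ are restricted, the differing order in which left announcements are received cannot be observed, yielding $ T \barbBisim T' $. I expect the only point requiring genuine care to be the verification in the second paragraph: one must confirm that the coordinator can never \emph{indirectly} break the ``at most one input on $ \Proj{\Renam{c}}{2} $'' invariant, i.e.\ that triggering further $ \act $-announcements via $ \once $ still routes all $ \Synch{c} $ activity through the same guarded $ \Out{\nextSyn}{\syn} $ discipline. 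Once this is checked, the result is a near-verbatim transfer of the earlier lemma, exactly as the accompanying remark claims.
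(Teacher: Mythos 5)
Your proposal is correct and follows essentially the same route as the paper: the paper's own proof simply observes that the coordinator of Figure~\ref{fig:centralised} uses no variants of $\syn$, $\syn'$, $\nextSyn$, or $\Proj{\Renam{c}}{2}$ and then invokes Lemma~\ref{lem:orderLeftAnn}. Your additional verification that the separation between the inner machinery and the coordinator residue persists along reductions is exactly the detail the paper leaves implicit, so your write-up is a faithful (and somewhat more careful) version of the same argument.
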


\begin{proof}
	The encoding in Figure~\ref{fig:centralised} does not use variants of the names $ \syn, \syn', \nextSyn $, and $ \Proj{\Renam{c}}{2} $. Thus this Proposition follows from Lemma~\ref{lem:orderLeftAnn}.
\end{proof}

As we prove below, the points of no return in the centralised encoding can result from the consumption of action announcements by the outer encoding in Figure~\ref{fig:centralised} if the corresponding lock computes to $ \top $. Moreover the encoding of internal choice and divergence introduces simulation steps, namely all steps on variants of the channels $ \much $, $ \rep $, and $ \renam'(X) $. All remaining steps of the centralised encoding are auxiliary.

\begin{definition}[Auxiliary and Simulation Steps]
	A step $ T \stepT T' $ such that $ \exists S \in \procS\logdot \EncCO{S} \stepsT T $ is called a \emph{simulation step}, denoted by $ T \sStepT T' $, if $ T \step T' $ is a step on the outermost channel $ \act $ and the computation of the value of the received lock $ \lock $ will return $ \top $ or it is a step on a variant of $ \much $, $ \rep $, or $ \renam'(X) $.
	
	Else the step $ T \stepT T' $ is called an \emph{auxiliary step}, denoted by $ T \aStepT T' $.
	\label{def:auxStepsCentral}
\end{definition}

\noindent
Let $ \aStepsT $ denote the reflexive and transitive closure of $ \aStepT $ and let $ \sStepsT \deff \aStepsT \sStepT \aStepsT $.
Auxiliary steps do not change the state modulo $ \barbBisim $.

\begin{lemma}
	$ T \aStepT T' $ implies $ T \barbBisim T' $ for all target terms $ T, T' $.
	\label{lem:auxStepsCentral}
\end{lemma}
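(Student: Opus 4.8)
The plan is to build the required bisimulation as the reflexive closure of the set of all pairs $\left( T, T' \right)$ with $T \aStepT T'$, and then to verify the four bisimulation clauses exactly as in the proof of Lemma~\ref{lem:noConflicts}. Since every auxiliary step operates on a restricted channel (or is a consumption on the outermost $\act$ whose lock computes to $\bot$, producing neither success nor a free announcement), success-reachability and translated barbs are easy to track; the real content is the matching of reduction steps. For this I would argue channel by channel that the step $T \aStepT T'$ is not in conflict with any reachable alternative and then appeal to the diamond-style results already established.

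For the routine channels I would simply reuse the earlier propositions. Forwarding steps---on the inner variants of $\act$ and $\act'$, on $\once$, and on the variants of $\syn$, $\syn'$, $\nextSyn$, and $\Proj{\Renam{c}}{3}$---are steps on restricted channels on which there is exactly one replicated input, so Proposition~\ref{prop:auxStepsForward} applies. Steps on the parameterless request channels $\req$, $\lreq$, $\rreq$ are handled by Proposition~\ref{prop:auxStepsReq}, whose hypothesis ``at most one (replicated) input'' is exactly Proposition~\ref{prop:reqLocksCentral}. Steps reducing a left announcement on $\Proj{\Renam{c}}{2}$ are covered by Proposition~\ref{prop:orderLeftAnnCentral}, whose side condition on $\syn$, $\syn'$, $\nextSyn$ is discharged by the forwarding case above.

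The value-carrying lock and simulation channels require the lock bookkeeping. Using Proposition~\ref{prop:sumLocksCentral} I would show that for each variant of $\lock$, $\lLock$, $\rLock$ there is at most one positive instantiation and that, once negative, a lock stays negative; together with Proposition~\ref{prop:reqLocksCentral} and the restricted auxiliary names $t$, $f$, $\lock'$, this makes every read of a lock (and every subsequent branch of the $\ITE{\cdot}{\cdot}{\cdot}$-construct) deterministic, so no conflicting alternative exists and Lemma~\ref{lem:noConflicts} applies. The same invariants, together with the serialisation enforced by $\once$ in the central coordinator, rule out a $\top$- and a $\bot$-message racing on the same variant of $\simu$, $\simu_i$, $\lsimu$, $\rsimu$; hence these steps are conflict-free as well. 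For a $\Out{\simu}{\top}$ step that unguards an encoded continuation, I would additionally note that the new free announcements it exposes are already reachable from $T$ in one step, so by definition of translated barbs no translated barb or success is gained or lost.

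The main obstacle is the auxiliary step that consumes an announcement on the outermost $\act$ whose lock computes to $\bot$. Because after consuming $\once$ there is a single input on $\act$ competing with several outputs, this step is genuinely in conflict with the consumption of every other announcement, so Lemma~\ref{lem:noConflicts} is not available. Here I would instead exhibit a dedicated bisimulation relating $T$ to $T'$ whenever they differ only by such a discarded announcement and the in-progress aborting computation it triggers, and close the relation under further $\bot$-consumptions so that a race in which $T$ first consumes a different announcement can still be matched. The crux---and the step I expect to be delicate---is to show that this relation respects translated barbs and success: by definition of translated barbs an announcement whose lock is $\bot$ is not a translated barb and, by Proposition~\ref{prop:sumLocksCentral}, its lock can never become positive again, so discarding it can neither remove a reachable barb nor enable a simulation step. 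Establishing this as a global invariant rather than a local one is what makes this case harder than the purely confluent ones.
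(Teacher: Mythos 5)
Your proposal is correct and follows essentially the same route as the paper: a case analysis over the channel reduced by the auxiliary step, discharging the confluent cases via Propositions~\ref{prop:auxStepsForward}, \ref{prop:auxStepsReq}, \ref{prop:reqLocksCentral} and \ref{prop:orderLeftAnnCentral}, the lock cases via Proposition~\ref{prop:sumLocksCentral} and Lemma~\ref{lem:noConflicts}, and a separate argument for the consumption of an announcement on the outermost $\act$ whose lock computes to $\bot$. Your treatment of that last case matches the paper's key observations (such an announcement is not a translated barb, its lock can never become positive again, and the aborting computation restores the consumed positive sub-lock instantiations), so the argument goes through as you outline it.
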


\begin{proof}
	We distinguish the following cases \wrt the channel $ x $ that is reduced in the step $ T \aStepT T' $.
	\begin{compactenum}

		\item $ \boolV $ is a placeholder for $ t $ and $ f $, but, in contrast to $ t $ and $ f $, $ \boolV $ itself is never used as a channel name. Also $ \tau, \ch, \lch, \rch, z $, and $ \Proj{\Renam{\ch'}}{1} $ for all source term names $ \ch' $ are never used as channels.
		\item All variants of one of the names $ \act $ except for the outermost, $ \act', \lock', \req', \lreq', \rreq', \nextSyn, \syn, \syn' $, and $ \once $ are used as simple forwarders. If we analyse the encoding functions in Figure~\ref{fig:innerEncoding} and Figure~\ref{fig:centralised}, we observe that they are always restricted and there is exactly one replicated input and no other input on the respective variant in their scope. Thus, for all target terms $ T $ such that $ \exists S\logdot \EncCO{S} \steps T $, all steps on such channels satisfy the conditions specified by Proposition~\ref{prop:auxStepsForward}. Hence $ T \barbBisim T' $.
		\item The name $ \Proj{\Renam{\ch}}{3} $ is transmitted over $ \nextSyn $ in $ \Synch{\cdot} $ as initial value of $ \syn $. Thus, similarly to $ \syn $ because of Proposition~\ref{prop:auxStepsForward}, $ T \barbBisim T' $.
		\item The case of $ x $ being a variant of $ \req, \lreq, \rreq $ follows from Proposition~\ref{prop:auxStepsReq} and Proposition~\ref{prop:reqLocksCentral}.
		\item The case of $ x $ being a variant of $ \Proj{\Renam{\ch}}{2} $ follows from 2.\@ and Proposition~\ref{prop:orderLeftAnnCentral}.
		\item Variants of the names $ t, f $ are used to implement Boolean valued locks and an $ \ITE{\cdot}{\cdot}{\cdot} $-construct testing such locks. By Proposition~\ref{prop:sumLocksCentral}, there is at most one positive instantiation of each lock and by definition all negative instantiations of the same lock---and also positive ones---are structural congruent. Since each $ \ITE{\cdot}{\cdot}{\cdot} $-construct restricts its own variants of $ t $ and $ f $ and because there is never a positive and a negative instantiation of the same lock (Proposition~\ref{prop:sumLocksCentral}), all conflicts between two steps on variants of $ t $ and $ f $ result into structural congruent continuations and a step on variants of $ t $ and $ f $ cannot be in conflict with any other step on a different channel of $ T $ or its derivatives. Because $ \equiv \; \subseteq \; \barbBisim $ and by Lemma~\ref{lem:noConflicts}, then $ T \barbBisim T' $.
		\item Variants of the names $ \lock, \lLock, \rLock $ refer to Boolean valued locks. In the centralised encoding all announcements are propagated upwards---and on their way upwards some of them are composed---until they reach the outer layer $ \EncCO{\cdot} $. $ \once $ ensures that only a single announcement is processed at a time. A new output on $ \once $ can only be unguarded by consuming an instantiation of the lock $ \lock $ of the announcement that is currently processed. After the consumption of an announcement the output $ \Out{\req}{} $ triggers the computation of an instantiation of $ \lock $. Technically an instantiation of a lock is an output on $ \lock $ and the corresponding inputs are part of the $ \ITE{\cdot}{\cdot}{\cdot} $-construct testing this value. $ \Sim $ is the only part of the encoding that introduces such $ \ITE{\cdot}{\cdot}{\cdot} $-constructs for variants of $ \lock, \lLock, \rLock $ and there the $ \ITE{\cdot}{\cdot}{\cdot} $-constructs are guarded by an input on $ \req $. Each step on $ \req $ unguards a nested $ \ITE{\cdot}{\cdot}{\cdot} $-construct testing a variant of $ \lLock $ and $ \rLock $. Since outputs on variants of $ \req, \lreq, \rreq $ move downwards along the translation of the parallel tree of the source term and because of $ \once $, no two different $ \ITE{\cdot}{\cdot}{\cdot} $-constructs for the same lock are ever unguarded. By Proposition~\ref{prop:sumLocksCentral}, there is at most one positive instantiation of each lock in $ T $ and if there is a positive instantiation then there is no negative instantiation of the same lock. Thus a step reducing a positive instantiation of a lock cannot be in conflict with any other step of $ T $ or derivatives of $ T $. By Lemma~\ref{lem:noConflicts}, then $ T \barbBisim T' $.\\
			By definition, all negative instantiations of the same lock are structural congruent. Thus, since there is only a single $ \ITE{\cdot}{\cdot}{\cdot} $-construct, two alternative steps that reduce different negative instantiations of the same lock result into structural congruent derivatives. All steps on other channels cannot be in conflict with a step reducing a negative instantiation of the respective lock. Because $ \equiv \; \subseteq \; \barbBisim $ and by Lemma~\ref{lem:noConflicts}, then $ T \barbBisim T' $.
		\item In the case of $ x $ being the outermost variant of $ \act $, Definition~\ref{def:auxStepsCentral} ensures that the lock $ \lock $ received in this step will compute to $ \bot $. By induction on the parallel structure of the respective source term, we show that the encoding then ensures that all instantiations of locks that were consumed to compute the instantiation of $ \lock $ are restored with the same truth value. This holds, because $ \Sim $ ensures that each combination of a positive instantiated lock from the left and a negative instantiated lock from the right causes an output $ \Out{\lsimu}{\bot} $. This output is propagated downwards and causes the outputs $ \Out{\lsimu}{\bot} $ and $ \Out{\rsimu}{\bot} $ for each pair of positive instantiated left and right locks combined below. In the translation of external choice these outputs on variants of $ \simu, \lsimu, \rsimu $ cause the unguarding of a fresh positive instantiation of the respective lock. Negative instantiations do not need to be restored, because they are introduced by $ \RepInput{\req}{}{\Out{\lock}{\bot}} $ that provides as many negative instantiations as there are requests $ \Out{\req}{} $ for them.
			Also, only if the lock computes to $ \top $ a positive instantiation of $ \simu $ is unguarded and propagated downwards. Since positive instantiations of variants of $ \simu, \lsimu, \rsimu $ are the only way to unguard an encoded source term continuation in the translation of external choice, a step reducing an announcement such that the respective lock will be computed to $ \bot $ cannot influence reachability of barbs or success.
			Thus modulo some auxiliary steps considered above, \ie modulo steps that do not change the state of the term modulo $ \barbBisim $, in the present case $ T $ and $ T' $ differ by the consumption of the respective announcement only. Since announcements are not success, are not observable, and, because of the negative lock, this announcement is not a translated barb, the difference between $ T $ and $ T' $ is not observable by $ \barbBisim $, \ie $ T \barbBisim T' $.
	\end{compactenum}
\end{proof}

By distinguishing auxiliary and simulation steps, we can prove a condition stronger than operational correspondence, namely that each source term step is simulated by exactly one simulation step.

\begin{lemma}
	$ \forall S, S'\logdot S \stepS S' $ iff $ \exists T\logdot \EncCO{S} \sStepsT T \wedge \EncCO{S'} \barbBisim T $.
	\label{lem:sourceVsSimStep}
\end{lemma}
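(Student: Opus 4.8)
The plan is to prove the two implications of the iff separately: completeness ($\Rightarrow$) by induction on the derivation of $ S \stepS S' $, and soundness ($\Leftarrow$) by a case analysis on the single simulation step hidden inside $ \EncCO{S} \sStepsT T $. In both directions the bookkeeping is kept manageable by Lemma~\ref{lem:auxStepsCentral}: since $ \aStepsT \; \subseteq \; \barbBisim $, whenever $ \EncCO{S} \aStepsT T_1 \sStepT T_2 \aStepsT T $ we already have $ \EncCO{S} \barbBisim T_1 $ and $ T_2 \barbBisim T $, so all behavioural content sits in the single step $ T_1 \sStepT T_2 $ and it suffices to relate the term immediately after that step to the target encoding of the intended derivative.

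For completeness I would walk through the CSP rules that can justify $ S \stepS S' $, grouped by the kind of simulation step they induce (Definition~\ref{def:auxStepsCentral}). The base cases are the four ``primitive'' kinds. For Inc$_0$/Inc$_1$, $ \EncCI{\inchoice{P}{Q}} $ offers the single output $ \Out{\much}{} $ and two guarded inputs on the restricted name $ \much $; reducing one is by definition a $ \much $-step, and the surviving guarded input is inert, so the residual is $ \barbBisim $-equivalent to $ \EncCO{P} $ (resp.\ $ \EncCO{Q} $). For Div the $ \rep $-step returns a term structurally congruent to $ \EncCO{\Div} $, and $ \equiv \; \subseteq \; \barbBisim $ closes the case. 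For Rec the $ \renam'(X) $-step unguards one copy of the encoded body; here I would invoke a separate unfolding lemma, $ \EncCI{P[\mu X \cdot P / X]} \barbBisim \Res{\renam'(X)}{\EncCI{P} \mid \RepInput{\renam'(X)}{}{\EncCI{P}}} $, which holds because every occurrence of $ X $ is translated to $ \Out{\renam'(X)}{} $ and the replicated input hands out a fresh copy of the body on demand. Finally Act/Ece produce an announcement on the outermost $ \act $ whose lock is still positive (by the invariants of Lemma~\ref{lem:sumLocks}); consuming it is the simulation step, while the ensuing $ \Out{\req}{} $, evaluation of $ \lock $ to $ \top $, $ \Out{\simu_j}{\top} $ and unguarding of $ \EncCI{P_j} $ are auxiliary, so the residual is $ \barbBisim $-equivalent to $ \EncCO{P_j} $.

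For the compositional rules Par$_0$, Par$_1$, Con$_0$, Con$_1$ and Ren I would use the induction hypothesis on the premise together with the forwarding and relabelling machinery: the announcement is moved up through the parallel, concealment or renaming layer by auxiliary steps (covered by Propositions~\ref{prop:auxStepsForward} and~\ref{prop:auxStepsReq} and absorbed by Lemma~\ref{lem:auxStepsCentral}), possibly with its transported name rewritten to $ \tau $ or to $ f(\ch) $, before it reaches the coordinator, where the simulation step is again the consumption on the outermost $ \act $. The genuinely hard case is Par$_2$, the synchronisation of $ E \Trans{a} E' $ and $ F \Trans{a} F' $ with $ a \in A $: the two announcements become outputs on $ \Proj{\Renam{a}}{2} $ and $ \Proj{\Renam{a}}{3} $, are paired by $ \Synch{a} $, and merged by $ \Sim $ under a fresh lock. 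I would show that this fresh lock evaluates to $ \top $ exactly when both underlying locks are still positive (using Lemma~\ref{lem:sumLocks} and Proposition~\ref{prop:sumLocksCentral}), that the one positive pass distributes $ \Out{\lsimu}{\top} $ and $ \Out{\rsimu}{\top} $ so that both continuations are unguarded, and that every touched lock is updated consistently, yielding a residual that is $ \barbBisim $-equivalent to $ \EncCO{\CSPPar{E'}{F'}{A}} $; the independence of the order in which left announcements are processed is handled by Lemma~\ref{lem:orderLeftAnn}.

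For soundness I would write $ \EncCO{S} \sStepsT T $ as $ \EncCO{S} \aStepsT T_1 \sStepT T_2 \aStepsT T $ and classify the step $ T_1 \sStepT T_2 $ by the reduced channel. A $ \much $-, $ \rep $- or $ \renam'(X) $-step can, by Observation~\ref{obs:guardedSourceVsTarget} and the shape of reachable terms, only arise from the translation of an unguarded internal choice, divergence or recursion inside $ S $, and hence exhibits a matching source step via Inc, Div or Rec; a step on the outermost $ \act $ with lock $ \top $ must stem from a positive announcement produced by some external choice and transported (and possibly relabelled or synchronised) through the encoded operator tree, and therefore corresponds to a source step via Act/Ece together with the applicable structural rules. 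In each case I recover the unique source step $ S \stepS S'' $ that is being simulated and, re-running the completeness construction, obtain $ \EncCO{S''} \barbBisim T $; together with the hypothesis $ \EncCO{S'} \barbBisim T $ this gives the required correspondence $ \EncCO{S'} \barbBisim \EncCO{S''} $. The main obstacle throughout is the synchronisation case Par$_2$, where the correctness of the lock-combination protocol inside $ \Sim $ has to be established, closely followed by the recursion case with its unfolding lemma; once the lock and request invariants (Lemmas~\ref{lem:sumLocks},~\ref{lem:reqLocks}) and Lemma~\ref{lem:auxStepsCentral} are in place, the remaining steps are routine pre- and post-processing.
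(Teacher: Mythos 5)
Your proposal matches the paper's proof in both structure and substance: completeness by induction over the source term (your rule induction is equivalent for these SOS rules, with \runa{Par$_2$} and the lock-combination in $\Sim$ singled out as the hard case), soundness by isolating the single simulation step in $\aStepsT \sStepT \aStepsT$ and classifying it by the reduced channel, with Lemma~\ref{lem:auxStepsCentral} absorbing all auxiliary steps and the same case split into announcements with positive locks, $\much$, $\rep$ and $\renam'(X)$. The only notable addition is your explicit unfolding lemma for recursion, which the paper leaves implicit; otherwise the arguments coincide.
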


\begin{proof}
	Let $ S, S' \in \procS $.
	\begin{compactitem}
		\item[`if'-part:] Assume $ S \stepS S' $. Then either there is some source term name $ \ch $ such that $ S \Lstep{\ch} S' $ or $ S \Lstep{\tau} S' $. In the first case at least one action prefix $ \ch \rightarrow \cdot $ is reduced. The second case results from divergence, internal choice, concealment, or recursion.
			\begin{compactenum}
				\item The encoding translates action prefixes $ \ch \rightarrow \cdot $ into announcements with $ \Proj{\Renam{\ch}}{1} $ as first value. By Observation~\ref{obs:guardedSourceVsTarget} and because the outer encoding in Figure~\ref{fig:centralised} does not guard $ \EncCI{P} $, these announcements are unguarded for all source term action prefixes that are reduced in $ S \Lstep{\ch} S' $. By induction on the structure of the source term, we show that these announcements can be transferred all the way up in the parallel tree and are combined along this way---for each parallel operator that synchronises two $ \ch $-actions in the source, two announcements are combined in the translation---such that a single announcement for this action reaches the outermost $ \act $-channel. The coordinator performs a step on $ \once $ and then receives this announcement and requests the computation of the lock by sending $ \Out{\req}{} $. Since initially all locks are instantiated positive this computation results $ \top $. As a consequence $ \Out{\simu}{\top} $ is propagated downwards and ensures that the encodings of all source term continuations that are unguarded by $ S \Lstep{\ch} S' $ can be unguarded by auxiliary steps in the translation.
					Moreover $ \Out{\simu}{\top} $ ensures that the consumed instantiations of locks can only be re-instantiated with the value $ \bot $. Let $ T $ denote the result of this simulation.
					
					The negative instantiations of the locks ensure that no step of $ S $ that is in conflict with $ S \stepS S' $ can be simulated by $ T $ and removes translated barbs that refer to barbs removed by $ S \stepS S' $.
					The only non-auxiliary step in the simulation $ \EncCO{S} \stepsT T $ is the simulation step that consumes the announcement on the top of the tree on the outermost $ \act $-channel, \ie $ \EncCO{S} \sStepsT T $. With Observation~\ref{obs:transBarbs}, Observation~\ref{obs:success}, and because in the end of the simulation the encodings of the respective source term continuations are unguarded, $ T $ and $ \EncCO{S'} $ have the same ability to reach success and reach the same translated observables. Hence $ \EncCO{S'} \barbBisim T $.
				\item Divergence is translated into the divergent target term $ \Res{\rep}{\Out{\rep}{} \mid \RepInput{\rep}{}{\Out{\rep}{}}} $. By Observation~\ref{obs:guardedSourceVsTarget}, simulating $ S \stepS S' $ in this case requires only a single simulation step on the respective variant of $ \rep $. Let $ T $ be the derivative of this step. Since the steps on $ \rep $ are not observable modulo $ \barbBisim $ in this case, we have $ \EncCO{S} \barbBisim T \barbBisim \EncCO{S'} $.
				\item Internal choice $ \inchoice{P}{Q} $ is translated into $ \Res{\much}{\Input{\much}{}{\EncCI{P}} \mid \Input{\much}{}{\EncCI{Q}} \mid \Out{\much}{}} $. By Observation~\ref{obs:guardedSourceVsTarget}, simulating $ S \stepS S' $ in this case requires only a single simulation step on the respective variant of $ \much $. Let this step unguard $ \EncCI{P} $ if $ S \stepsS S' $ unguards $ P $ and else unguard $ \EncCI{Q} $. Let $ T $ be the derivative of this step. With Observation~\ref{obs:transBarbs}, Observation~\ref{obs:success}, and because the simulation unguards the encoding of the respective source term continuation, $ T $ and $ \EncCO{S'} $ have the same ability to reach success and reach the same translated observables. Hence $ \EncCO{S'} \barbBisim T $.
				\item In the case of concealment the source term hides a former observable action that is simulated as in 1. The translation of concealment only adds a restriction on $ \ch $ and renames the first value of the announcement into $ \tau $ such that it is never synchronised afterwards. Thus the simulation of $ S \Lstep{\tau} S' $ in this case is similar to 1.\@ except for the steps to forward the announcement within the translation of concealment.
				\item $ \mu X \cdot P $ is translated into $ \Res{\renam'(X)}{\Out{\renam'(X)}{} \mid \RepInput{\renam'(X)}{}{\EncCI{P}}} $ and $ \EncCI{X} = \Out{\renam'(X)}{} $. By Observation~\ref{obs:guardedSourceVsTarget}, simulating $ S \stepS S' $ in this case requires only a single simulation step on the respective variant of $ \renam'(X) $. This step unguards an instance of $ \EncCI{P} $. Let $ T $ be the derivative of this step. With Observation~\ref{obs:transBarbs}, Observation~\ref{obs:success}, and because the simulation unguards the encoding of the respective source term continuation, $ T $ and $ \EncCO{S'} $ have the same ability to reach success and reach the same translated observables. Hence $ \EncCO{S'} \barbBisim T $.
			\end{compactenum}
			Thus, by induction on the structure of $ S $, the encoding $ \EncCO{\cdot} $ can simulate each source term step $ S \stepS S' $ such that $ \exists T\logdot \EncCO{S} \sStepsT T \wedge \EncCO{S'} \barbBisim T $.
		\item[`only-if'-part:] Assume $ T $ such that $ \EncCO{S} \sStepsT T $ and $ \EncCO{S} \barbBisim T $. By Lemma~\ref{lem:auxStepsCentral}, it suffices to concentrate on the single simulation step in $ \EncCO{S} \sStepsT T $. In $ \EncCO{S} \sStepsT T $ either exactly one announcement \wrt to a positive lock is reduced by the simulation step (1.), or there is exactly one step---namely the simulation step---on a variant of either $ \much $ (2.), $ \rep $ (3.), or $ \renam'(X) $ (4.).
			\begin{compactenum}
				\item Since $ \EncCO{S} \sStepsT T $ neither contains steps on variants of $ \rep $ nor $ \much $ nor $ \renam'(X) $, no encoded source term continuation in the translation of internal choice or recursion is unguarded. Let $ T', T'' $ such that $ \EncCO{S} \aStepsT T' \sStepT T'' \aStepsT T $. $ T' \sStepT T'' $ reduces an announcement $ \Out{\act}{\ch, \req, \lock, \simu} $ such that the computation of $ \lock $ in $ T'' $ will result $ \top $.	By analysing the way the lock $ \lock $ is computed in $ T' $ we can conclude on the source term prefixes and the part of the source term parallel structure that is reflected by this simulation of a source term step. Analysing the way of the announcement we can also determine whether a source term concealment was involved.
					Because auxiliary steps cannot unguard encoded source term continuations and by Observation~\ref{obs:guardedSourceVsTarget}, then we can conclude on the structure of $ S $ and construct subject to $ S $ a source term $ S' $ such that $ S \stepS S' $ and $ S' $ results from $ S $ by reducing all action prefixes whose translation are identified by the above analyse of the way the lock $ \lock $ is computed. In the $ S' $ the respective source term continuations are unguarded. In $ T'' $ only auxiliary steps are necessary to unguard the translation of these source term continuations.
					With Lemma~\ref{lem:auxStepsCentral} and because the simulation step simulates all observable effects of the step $ S \stepsS S' $, then $ \EncCO{S'} \barbBisim T $.
				\item Since no announcements \wrt positive instantiated locks are reduced in $ \EncCO{S} \sStepsT T $, no translated barb are removed and no encoded source term continuation in the translation of external choice is unguarded. Since there is no step on a variant of $ \renam'(X) $, no encoded source term continuation in the translation of recursion is unguarded. Instead exactly one source term encoding---without loss of generality let us call this encoded source term $ \EncCI{P} $---due to the translation of internal choice is unguarded. This step ensures the respective other encoded source term alternative of the internal choice can never be unguarded, \ie is modulo $ \barbBisim $ similar to $ \Null $. This is the only effect of the steps $ \EncCO{S} \sStepsT T $ that can be observed modulo $ \barbBisim $. Therefore this internal choice translation has to be unguarded in $ \EncCO{S} $, because auxiliary steps cannot unguard encoded source term continuations. By Observation~\ref{obs:guardedSourceVsTarget}, then $ S $ contains an unguarded internal choice with $ P $ as one of the alternatives. Then $ S \stepS S' $ such that this step resolves the internal choice and unguards $ P $. With Lemma~\ref{lem:auxStepsCentral} and because the simulation step simulates all observable effects of the step $ S \stepsS S' $, then $ \EncCO{S'} \barbBisim T $.
				\item Since no announcements \wrt positive instantiated locks are reduced in $ \EncCO{S} \sStepsT T $, no translated barb are removed and no encoded source term continuation in the translation of external choice is unguarded. Since there is no step on a variant of $ \much $, no encoded source term continuation in the translation of internal choice is unguarded. Since the simulation step reduces a variant of $ \rep $, we have $ \EncCO{S} \barbBisim T $. Moreover, in this case, $ \EncCO{\Div} $ is unguarded in $ \EncCO{S} $, because auxiliary steps cannot unguard encoded source term continuations. By Observation~\ref{obs:guardedSourceVsTarget}, then $ \Div $ is unguarded in $ S $. Then $ S \stepS S' $ such that this step reduces $ \Div $. With Lemma~\ref{lem:auxStepsCentral} and because the simulation step simulates all observable effects of the step $ S \stepsS S' $, then $ \EncCO{S'} \barbBisim T $.
				\item Since no announcements \wrt positive instantiated locks are reduced in $ \EncCO{S} \sStepsT T $, no translated barb are removed and no encoded source term continuation in the translation of external choice is unguarded. Since there is no step on a variant of $ \much $, no encoded source term continuation in the translation of internal choice is unguarded. Instead exactly one source term encoding---without loss of generality let us call this encoded source term $ \EncCI{P} $---due to the translation of recursion is unguarded. This is the only effect of the steps $ \EncCO{S} \sStepsT T $ that can be observed modulo $ \barbBisim $. Therefore $ \EncCO{\mu X \cdot P} $ is unguarded in $ \EncCO{S} $, because auxiliary steps cannot unguard encoded source term continuations. By Observation~\ref{obs:guardedSourceVsTarget}, then $ \mu X \cdot P $ is unguarded in $ S $. Then $ S \stepS S' $ such that this step unfolds recursion and unguards $ P $. With Lemma~\ref{lem:auxStepsCentral} and because the simulation step simulates all observable effects of the step $ S \stepsS S' $, then $ \EncCO{S'} \barbBisim T $.
			\end{compactenum}
	\end{compactitem}
\end{proof}

\noindent
This direct correspondence between source term steps and the points of no return of their translation allows us to prove a variant of operational correspondence that is significantly stricter than the variant proposed in \cite{gorla10}.

\begin{definition}[Operational Correspondence]
	$ $\\
	An encoding $ \enc: \procS \to \procT $ is \emph{operationally corresponding} \wrt $ \barbBisim \; \subseteq \procT^2 $ if it is:
	\begin{compactitem}
		\item[\; Complete:] $ \forall S, S' \logdot S \stepsS S' $ implies $ \exists T \logdot \EncCO{S} \stepsT T \wedge \EncCO{S'} \barbBisim T $
		\item[\; Sound:] $ \forall S, T \logdot \EncCO{S} \stepsT T $ implies $ \exists S' \logdot S \stepsS S' \wedge \EncCO{S'} \barbBisim T $
	\end{compactitem}
\end{definition}

\noindent
The `if'-part of Lemma~\ref{lem:sourceVsSimStep} implies operational completeness \wrt $ \barbBisim $ and the `only-if'-part contains the main argument for operational soundness \wrt $ \barbBisim $. Hence $ \encCO $ is operational corresponding \wrt to $ \barbBisim $.

\begin{theorem}
	The encoding $ \encCO $ is operational corresponding \wrt to $ \barbBisim $.
	\label{thm:operationalCorrespondenceCentral}
\end{theorem}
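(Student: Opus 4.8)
The plan is to unfold the definition of operational correspondence into its two obligations, \textbf{completeness} and \textbf{soundness}, and to obtain each by lifting the single-step characterisation of Lemma~\ref{lem:sourceVsSimStep} to arbitrary computations, absorbing auxiliary steps via Lemma~\ref{lem:auxStepsCentral}. Throughout I use that $\sStepsT \subseteq \stepsT$ and that $\barbBisim$, being a weak reduction bisimulation, enjoys the standard transfer property: $P \barbBisim Q$ and $P \stepsT P'$ yield some $Q'$ with $Q \stepsT Q'$ and $P' \barbBisim Q'$.

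For completeness I would induct on the length of $S \stepsS S'$. The base case $S = S'$ is immediate with $T = \EncCO{S}$ by reflexivity of $\barbBisim$. For the inductive step write $S \stepsS S'' \stepS S'$. The induction hypothesis gives $T''$ with $\EncCO{S} \stepsT T''$ and $\EncCO{S''} \barbBisim T''$, and the left-to-right direction of Lemma~\ref{lem:sourceVsSimStep} applied to $S'' \stepS S'$ gives $T_0$ with $\EncCO{S''} \sStepsT T_0$ and $\EncCO{S'} \barbBisim T_0$. Since $\EncCO{S''} \stepsT T_0$ and $\EncCO{S''} \barbBisim T''$, transfer yields $T$ with $T'' \stepsT T$ and $T_0 \barbBisim T$; then $\EncCO{S} \stepsT T$ and, by transitivity, $\EncCO{S'} \barbBisim T$.

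For soundness I would induct on the number $m$ of simulation steps contained in the given computation $\EncCO{S} \stepsT T$. If $m = 0$ the whole computation is auxiliary, $\EncCO{S} \aStepsT T$, so $\EncCO{S} \barbBisim T$ by iterating Lemma~\ref{lem:auxStepsCentral} and transitivity of $\barbBisim$, and $S' = S$ works. If $m \geq 1$, cut the computation at its \emph{first} simulation step, $\EncCO{S} \aStepsT T_1 \sStepT T_2 \stepsT T$. The prefix is exactly $\EncCO{S} \sStepsT T_2$ (zero trailing auxiliary steps), so the right-to-left direction of Lemma~\ref{lem:sourceVsSimStep} supplies a source step $S \stepS S_1$ with $\EncCO{S_1} \barbBisim T_2$. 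The suffix $T_2 \stepsT T$ contains $m-1$ simulation steps, and $\EncCO{S_1} \barbBisim T_2$, so it remains to reduce this to the induction hypothesis and then prepend $S \stepS S_1$ to the resulting $S_1 \stepsS S'$.

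The main obstacle is precisely this last reduction: the suffix starts at $T_2$, which is only \emph{bisimilar} to, not syntactically equal to, an encoding, whereas Lemma~\ref{lem:sourceVsSimStep} and the induction hypothesis are phrased for computations issued from $\EncCO{\cdot}$. Transferring $T_2 \stepsT T$ across $\EncCO{S_1} \barbBisim T_2$ produces some $\EncCO{S_1} \stepsT T'$ with $T \barbBisim T'$, but the transferred computation need not have $m-1$ simulation steps, so induction on $m$ does not close naively. I would resolve this by generalising the soundness statement to \emph{all} reachable $T_2$ with $\EncCO{S_1} \barbBisim T_2$ and inducting on the simulation steps of the \emph{given} suffix, using that every simulation step is a point of no return: by Lemma~\ref{lem:auxStepsCentral} auxiliary blocks are invisible to $\barbBisim$, and committing steps change the reachable barbs or success behaviour that $\barbBisim$ must respect, so any term bisimilar to $\EncCO{S_1}$ can match the next simulation step only by a commitment that the right-to-left direction of Lemma~\ref{lem:sourceVsSimStep} reconstructs as a source step $S_1 \stepS S_1'$ with $\EncCO{S_1'} \barbBisim T_2'$. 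Chaining these source steps and transitivity of $\barbBisim$ then delivers $S \stepsS S'$ with $\EncCO{S'} \barbBisim T$, establishing soundness and hence the theorem.
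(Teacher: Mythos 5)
Your proposal follows exactly the paper's route: the paper proves completeness by induction on the number of steps in $S \stepsS S'$ using the `if'-part of Lemma~\ref{lem:sourceVsSimStep}, and soundness by induction on the number of simulation steps in $\EncCO{S} \stepsT T$ using Lemma~\ref{lem:auxStepsCentral} and the `only-if'-part of Lemma~\ref{lem:sourceVsSimStep}. You are in fact more careful than the paper's two-sentence argument, since you explicitly identify and patch the mismatch between the suffix starting at a term merely bisimilar to $\EncCO{S_1}$ and the lemmas being stated for computations issued from encodings---a gap the paper silently glosses over.
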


\begin{proof}
	Completeness---$ \forall S, S' \logdot S \stepsS S' $ implies $ \exists T \logdot \EncCO{S} \stepsT T \wedge \EncCO{S'} \approx T $---follows from the `if'-part of Lemma~\ref{lem:sourceVsSimStep} and an induction on the number of steps in $ S \stepsS S' $.
	
	Soundness---$ \forall S, T \logdot \EncCO{S} \stepsT T $ implies $ \exists S' \logdot S \stepsS S' \wedge \EncCO{S'} \approx T $---follows from Lemma~\ref{lem:auxStepsCentral}, the `only-if'-part of Lemma~\ref{lem:sourceVsSimStep}, and an induction on the number of simulation steps in $ \EncCO{S} \stepsT T $.
\end{proof}

To obtain divergence reflection we show that there is no infinite sequence of only auxiliary steps.

\begin{lemma}
	The number of steps between two simulation steps is finite.
	\label{lem:numberStepsCentral}
\end{lemma}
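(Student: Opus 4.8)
The plan is to exhibit a well-founded measure $ \mu $ on the target terms reachable from $ \EncCO{S} $ that strictly decreases with every auxiliary step. Since a strictly decreasing $ \mu $ forbids an infinite sequence of auxiliary steps, and the steps strictly between two simulation steps form exactly a maximal run of auxiliary steps (Definition~\ref{def:auxStepsCentral}), this bounds that number and yields the claim. The central difficulty is that several auxiliary steps reduce \emph{replicated} inputs---the forwarders $ \RepInput{\act'}{\tilde{x}}{\Out{\act}{\tilde{x}}} $, the combinators inside $ \Synch{\ch} $, and the loop on $ \lock' $ inside $ \Sim $---so a plain count of unguarded outputs does not decrease under forwarding. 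The measure must therefore weight each message by the amount of auxiliary work it can still trigger, essentially by its distance from the point at which it is finally consumed.

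First I would record the \emph{stability} of the announcement pool: along a pure auxiliary run no encoded source-term continuation is unguarded. This is precisely the property used in the soundness argument of Lemma~\ref{lem:sourceVsSimStep}, namely that auxiliary steps cannot unguard continuations. Consequently no fresh translation of a sum---and hence no fresh origin of an action announcement---appears during the run, so the finite multiset of announcements produced by the translated sums is fixed when the run starts and can afterwards only be forwarded, combined, or consumed.

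Second I would bound each mechanism and assemble the bounds into a lexicographically ordered tuple $ \mu = \left( m_1, m_2, m_3, m_4 \right) \in \nat^4 $:
\begin{compactitem}
	\item $ m_1 $ is a potential for the coordinator and for $ \Synch{\ch} $, built from the number of act-announcements currently headed to the outermost $ \act $ and the number of not-yet-combined left--right pairs in $ T $. On a failing coordinator attempt---the auxiliary case of Definition~\ref{def:auxStepsCentral}---no $ \Out{\simu}{\top} $ is emitted, so the consumed announcement is not regenerated; and each combination in $ \Synch{\ch} $ discharges one left--right pair. Since the numbers of left and of right announcements per channel are finite and fixed at the start of the run, $ m_1 $ strictly decreases at each combination and consumption while remaining unchanged under forwarding.
	\item $ m_2 $ weights every announcement and every forwarded message by the number of forwarders and $ \Synch{\ch} $-combinators still lying between it and its consumer. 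This is finite because the translated parallel tree of $ S $ is finite and acyclic, and every forwarding step moves a message strictly closer to the root, decreasing $ m_2 $ while leaving $ m_1 $ untouched.
	\item $ m_3 $ bounds the lock computations: by Proposition~\ref{prop:reqLocksCentral} there is at most one (replicated) input per variant of $ \req, \lreq, \rreq $, and by Proposition~\ref{prop:sumLocksCentral} at most one positive instantiation per lock, so each $ \Out{\req}{} $ drives only the finitely many nested $ \ITE{\cdot}{\cdot}{\cdot} $-constructs along one downward path, and the $ \lock' $-loop in $ \Sim $ cannot restart without first consuming a fresh request.
	\item $ m_4 $ counts the remaining book-keeping steps on the value-less channels $ t, f, \req, \lreq, \rreq, \syn, \syn', \nextSyn $, each of which consumes a message without creating an unbounded number of new ones.
\end{compactitem}
Finally I would verify, by a case split over the reduced channel that mirrors the cases in the proof of Lemma~\ref{lem:auxStepsCentral}, that every auxiliary step strictly decreases $ \mu $ in the lexicographic order, whence no infinite auxiliary run exists.

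I expect the main obstacle to be the interaction of $ m_1 $ and $ m_2 $ at the $ \Synch{\ch} $ combination, where one right announcement is consumed but a fresh act-announcement together with a new $ \Sim $ is produced. To make $ m_1 $ decrease there one must charge each not-yet-combined pair more heavily than a plain pending consumption (weight two suffices), so that discharging a pair compensates for the one extra announcement it creates; and one must invoke both the finiteness of the announcement multiset fixed by the stability step and the acyclicity of the translated parallel tree to ensure that the supply of combinable pairs cannot be replenished within the run.
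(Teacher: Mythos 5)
Your overall strategy---a well-founded (lexicographic) measure that strictly decreases under every auxiliary step---is a legitimate alternative to the paper's argument, which instead directly bounds, for each class of channels, the number of auxiliary steps that can occur (one step on $ \once $, finitely many forwardings through the finite parallel tree, an induction on the depth of that tree for the combinations in $ \Synch{\ch} $, and so on). However, as set up, your measure has two genuine gaps.

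First, your \emph{stability} claim is false for a run between two simulation steps. The preceding simulation step (consumption of an announcement whose lock computes to $ \top $) is followed by post-processing in which $ \Out{\simu}{\top} $ is propagated downwards; the steps on the variants of $ \simu_i $ and on $ t $ that finally unguard $ \EncCI{P_i} $ in the translation of external choice are \emph{auxiliary} steps by Definition~\ref{def:auxStepsCentral}, and they do unguard encoded source-term continuations, thereby releasing fresh announcements. At such a step your $ m_4 $ decreases but $ m_1 $ and $ m_2 $ increase, so the lexicographic measure does not decrease. (The paper's phrase ``auxiliary steps cannot unguard continuations'' in the soundness argument refers to runs from $ \EncCO{S} $ \emph{before} any simulation step.) The paper's proof explicitly repairs this by first passing to the term $ T' $ in which all translations unguardable by auxiliary steps have been unguarded---finitely many, by induction on the number of preceding simulation steps---and only then counting; you would need an additional dominant component of $ \mu $ counting the remaining possible unguardings, or the same two-phase split.

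Second, $ m_1 $ does not strictly decrease at a combination in $ \Synch{\ch} $. Combining one left--right pair at a parallel operator emits a fresh announcement which, once forwarded upwards, can itself become a left or right announcement at \emph{every} enclosing parallel operator synchronising the same channel, and there it may form a new combinable pair with each announcement already present on the opposite side. Your weight-two trick only compensates for the single new announcement, not for the several new pairs it can spawn, so the na\"ive pair count can grow. What is needed is a potential stratified by the depth of the parallel tree (equivalently, the total number of future combination steps computed bottom-up over the finite tree), which is exactly what the paper's induction over the depth of the binary tree establishes. With those two repairs your ranking-function argument would go through, but as written both the base of the measure and its key decrease step fail.
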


\begin{proof}
	Let $ T $ be such that $ \exists S\logdot \EncCO{S} \stepsT T $.
	There are only finitely many unguarded translations of encodings of source term operators in $ T $. Let $ T' $ be the result of unguarding all translations of source term parts that can be unguarded using only auxiliary steps in $ T $. By induction on the number of simulation steps in $ \EncCO{S} \stepsT T $ the number of such auxiliary steps is finite. Since we consider only sequences $ T \aStepsT T' \aStepsT \ldots $ without simulation steps, no derivative of $ T' $ in this sequence can unguard additional translations of source term operators. The binary tree that results from the nesting of unguarded translations of parallel operator encodings in $ T' $ and its derivatives is denoted as parallel tree in the following. Auxiliary steps are steps on the following kinds of channels:
	\begin{compactenum}
		\item Since we consider only sequences $ T \aStepsT T' \aStepsT \ldots $ without simulation steps, there is at most one step on $ \once $ in this sequence.
		\item Steps on variants of $ \act, \act' $ are used to propagate announcements through the parallel tree. Since this tree is finite and because the encoding introduces one announcement per action prefix, there are only finitely many announcements in the leafs of the parallel tree. Announcements are only propagated upwards to surrounding translations of concealment and parallel operators (of which there are only finitely many). Within the nodes of the parallel tree announcements from the left and announcements from the right are combined using variants of $ \Proj{\Renam{\ch}}{2}, \Proj{\Renam{\ch}}{3}, \syn, \syn' $ and, to unguard inputs on such channels, steps on variants of $ \nextSyn $ are used. By induction over the depth of the binary tree, we show that there are always only finitely many announcements from the left and finitely many announcements from the right and thus their combinations are performed by finitely many steps. Accordingly, $ T \aStepsT T' \aStepsT \ldots $ contains only finitely many steps on variants of $ \act, \act', \Proj{\Renam{\ch}}{2}, \Proj{\Renam{\ch}}{3}, \syn, \syn', \nextSyn $.
		\item Steps on variants of $ \req, \lreq, \rreq $ are used to trigger the computation of locks. Since we consider only sequences $ T \aStepsT T' \aStepsT \ldots $ without simulation steps, there is at most one request $ \Out{\req}{} $ proposed by the coordinator in this sequence. Additionally $ T $ and $ T' $ can already contain unguarded requests, but only finitely many. The request $ \Out{\req}{} $ from the top of the parallel tree is propagated downwards by pushing one or two more such requests (in some nodes) on variants of $ \lreq, \rreq $ for each consumed request. Since the depth of the parallel tree is finite, $ T \aStepsT T' \aStepsT \ldots $ contains only finitely many steps on variants of $ \req, \lreq, \rreq $.
		\item Steps on variants of $ \lock, \lLock, \rLock, \lock', t, f $ are used to implement and test Boolean valued locks. For each step on variants of $ \req, \lreq, \rreq $ only a single instantiation of a lock can be consumed. By 3., there are only finitely many such steps. Additionally $ T $ and $ T' $ can already contain unguarded instantiations of locks and $ \ITE{\cdot}{\cdot}{\cdot} $-constructs, but only finitely many. Since each consumption of a single instantiation of a lock and its test in a $ \ITE{\cdot}{\cdot}{\cdot} $-construct requires only finitely many steps, $ T \aStepsT T' \aStepsT \ldots $ contains only finitely many steps on variants of $ \lock, \lLock, \rLock, \lock', t, f $.
		\item $ T $ and $ T' $ can only contain finitely many unguarded outputs on variants of $ \simu, \simu_i, \lsimu, \rsimu $. Additional outputs on variants of $ \simu, \simu_i, \lsimu, \rsimu $ can only be unguarded by testing the value of a lock. By 4., there are only finitely many tests of locks in $ T \aStepsT T' \aStepsT \ldots $. Thus there are only finitely many steps on variants of $ \simu, \simu_i, \lsimu, \rsimu $.
	\end{compactenum}
	Thus no sequence of auxiliary steps of $ T $ is infinite.
\end{proof}

Then divergence reflection follows from the combination of the above Lemma and Lemma~\ref{lem:sourceVsSimStep}.

\begin{theorem}
	The encoding $ \encCO $ reflects divergence.
	\label{thm:divergenceReflectionCentral}
\end{theorem}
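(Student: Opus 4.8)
The plan is to prove divergence reflection in its direct form: assuming the translation admits an infinite computation, I extract from it an infinite source computation. So suppose $ \EncCO{S} \stepT^{\omega} $, witnessed by an infinite run $ \EncCO{S} = T_0 \stepT T_1 \stepT T_2 \stepT \cdots $. The whole argument rests on two facts proved above: that auxiliary steps are sparse (Lemma~\ref{lem:numberStepsCentral}) and that simulation steps correspond exactly to source-term steps (Lemma~\ref{lem:sourceVsSimStep}).

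First I would show that this run must contain infinitely many simulation steps. Suppose, towards a contradiction, that it contains only finitely many. Then there is an index $ k $ such that every step $ T_i \stepT T_{i+1} $ with $ i \geq k $ is auxiliary, \ie $ T_k \aStepT T_{k+1} \aStepT T_{k+2} \aStepT \cdots $ is an infinite sequence of purely auxiliary steps starting from the reachable term $ T_k $ (reachable since $ \EncCO{S} \stepsT T_k $). This contradicts Lemma~\ref{lem:numberStepsCentral}, whose proof establishes precisely that no sequence of auxiliary steps from a reachable target term is infinite. Hence the run crosses a simulation step infinitely often, and I can group its steps so that $ \EncCO{S} \sStepsT T^{(1)} \sStepsT T^{(2)} \sStepsT \cdots $, where each segment $ \sStepsT \; = \; \aStepsT \sStepT \aStepsT $ contains exactly one simulation step surrounded by finitely many auxiliary steps.

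Next I would convert each segment into a genuine source step. Setting $ S_0 \deff S $ and $ T^{(0)} \deff \EncCO{S} $, I proceed inductively along this \emph{fixed} run, maintaining the invariant $ \EncCO{S_i} \barbBisim T^{(i)} $. Given $ S_i $ with $ \EncCO{S_i} \barbBisim T^{(i)} $, the segment $ T^{(i)} \sStepsT T^{(i+1)} $ realises a single point of no return, so the `only-if' direction of Lemma~\ref{lem:sourceVsSimStep}---using Lemma~\ref{lem:auxStepsCentral} to collapse the surrounding auxiliary steps modulo $ \barbBisim $---yields a source term $ S_{i+1} $ with $ S_i \stepS S_{i+1} $ and $ \EncCO{S_{i+1}} \barbBisim T^{(i+1)} $. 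Chaining the resulting source steps gives an infinite source computation $ S = S_0 \stepS S_1 \stepS S_2 \stepS \cdots $, \ie $ S \stepS^{\omega} $, as required. Concretely this is the same monotone induction on the number of simulation steps that underlies the soundness part of Theorem~\ref{thm:operationalCorrespondenceCentral}: the $ (i+1) $-th simulation step contributes exactly one fresh source step extending the $ i $-step source computation already built.

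The main obstacle I expect is keeping this correspondence aligned across segments. Lemma~\ref{lem:sourceVsSimStep} relates a simulation step to a step of the \emph{source} term, but after the first segment the run continues from $ T^{(1)} $, which is only \emph{bisimilar} to $ \EncCO{S_1} $ rather than equal to it; and since $ \barbBisim $ is merely a weak reduction bisimulation, it need not preserve divergence, so I must not try to transport the infinite tail of the run onto $ \EncCO{S_1} $. The delicate point is therefore to carry the invariant $ \EncCO{S_i} \barbBisim T^{(i)} $ along the one fixed infinite run, re-establishing it at every segment boundary, so that each simulation step encountered on this run yields one new source step without ever replaying the tail from an encoded term. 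Once this bookkeeping is in place, the combination of Lemma~\ref{lem:numberStepsCentral} (infinitely many simulation steps) and Lemma~\ref{lem:sourceVsSimStep} (one source step per simulation step) delivers the infinite source computation and hence the theorem.
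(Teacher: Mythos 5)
Your proposal is correct and follows essentially the same route as the paper's own (much terser) proof: Lemma~\ref{lem:numberStepsCentral} forces infinitely many simulation steps into any infinite run, and the `only-if' direction of Lemma~\ref{lem:sourceVsSimStep} converts each one into a source step. The segment-boundary bookkeeping you flag (carrying $\EncCO{S_i} \barbBisim T^{(i)}$ along the fixed run rather than replaying the tail from an encoded term) is a genuine subtlety that the paper leaves implicit, so spelling it out only strengthens the argument.
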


\begin{proof}
	If $ \EncCO{S} $ is divergent then, by Lemma~\ref{lem:numberStepsCentral}, $ \EncCO{S} $ can perform an infinite sequence of steps containing infinitely many simulation steps. With Lemma~\ref{lem:sourceVsSimStep}, then $ S $ is divergent.
\end{proof}

The encoding function ensures that $ \EncCO{S} $ has an unguarded occurrence of $ \success $ iff $ S $ has such an unguarded occurrence. Operational correspondence ensures that $ S $ and $ \EncCO{S} $ also answer the question for the reachability of $ \success $ in the same way.

\begin{theorem}
	The encoding $ \encCO $ is success sensitive.
	\label{thm:successSensitivenessCentral}
\end{theorem}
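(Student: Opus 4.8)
The plan is to split the biconditional $S\reachSuccess$ iff $\EncCO{S}\reachSuccess$ into an \emph{immediate} success equivalence and a \emph{reachability} argument, and to treat the two implications separately, since they rely on different ingredients.

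First I would lift Observation~\ref{obs:success} from the inner to the outer encoding, establishing that $S\hasSuccess$ iff $\EncCO{S}\hasSuccess$ for every $S \in \procS$. This is immediate: the outer layer of Figure~\ref{fig:centralised} merely places $\EncCI{S}$ in parallel with the coordinator under a restriction of $\act$ and $\once$, and since $\success$ contains neither name, scope extrusion yields $\EncCO{S} \equiv \success \mid \Res{\act, \once}{R}$ exactly when $\EncCI{S} \equiv \success \mid R'$; as the coordinator never contains $\success$, the converse holds as well. Together with Observation~\ref{obs:success} this gives the immediate equivalence, which I will refer to as $(\ast)$.

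For the forward implication I would argue via operational completeness. Assume $S\reachSuccess$, so $S \stepsS S'$ with $S'\hasSuccess$. By Theorem~\ref{thm:operationalCorrespondenceCentral} (completeness) there is $T$ with $\EncCO{S} \stepsT T$ and $\EncCO{S'} \barbBisim T$. By $(\ast)$ we have $\EncCO{S'}\hasSuccess$, hence $\EncCO{S'}\reachSuccess$; since $\barbBisim$ respects reachability of success, $T\reachSuccess$, and therefore $\EncCO{S} \stepsT T$ with $T\reachSuccess$ gives $\EncCO{S}\reachSuccess$.

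The backward implication is the main obstacle, because a naive appeal to operational soundness is \emph{circular}: soundness turns $\EncCO{S}\stepsT T$ with $T\hasSuccess$ into $S \stepsS S'$ with $\EncCO{S'} \barbBisim T$, and then only re-derives $\EncCO{S'}\reachSuccess$ --- precisely the statement we are trying to reflect back to the source. I would instead close this direction \emph{structurally}. Given $\EncCO{S}\reachSuccess$, fix a computation $\EncCO{S} \stepsT T$ with $T\hasSuccess$, so $T$ has an unguarded $\success$. By the clause $\EncCI{\success} \deff \success$ this occurrence stems from the translation of a source $\success$, and by Observation~\ref{obs:guardedSourceVsTarget} it is guarded in $\EncCO{S}$ iff the corresponding source $\success$ is guarded in $S$. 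If it is already unguarded in $\EncCO{S}$, then $(\ast)$ yields $S\hasSuccess$ and hence $S\reachSuccess$. Otherwise it became unguarded along $\EncCO{S}\stepsT T$; since (as established in the proof of Lemma~\ref{lem:sourceVsSimStep}) auxiliary steps cannot unguard encoded source-term continuations, this unguarding required at least one simulation step, namely one unguarding the continuation that encloses the $\success$, and by Lemma~\ref{lem:sourceVsSimStep} each such simulation step corresponds to a source step, so $S \stepsS S'$ with the matching source continuation unguarded. Thus $S'\hasSuccess$ and again $S\reachSuccess$. Together with the forward implication this establishes success sensitiveness. The point demanding the most care is the structural bookkeeping in this last step --- tracking that the \emph{specific} $\success$-bearing continuation is reached by genuine simulation steps rather than by auxiliary reshuffling --- which is exactly where Lemma~\ref{lem:sourceVsSimStep} and the auxiliary/simulation dichotomy do the decisive work.
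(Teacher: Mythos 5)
Your proposal is correct, and it agrees with the paper's proof in its overall decomposition: the paper likewise first establishes the static equivalence $ S\hasSuccess $ iff $ \EncCO{S}\hasSuccess $ from Observation~\ref{obs:success} and Figure~\ref{fig:centralised} (your $(\ast)$), and then derives the dynamic statement from Theorem~\ref{thm:operationalCorrespondenceCentral} together with the fact that $ \barbBisim $ respects $ \success $. Where you genuinely diverge is the backward implication. The paper treats it symmetrically to the forward one, as a black-box application of soundness plus success-sensitivity of $ \barbBisim $; read literally, that chain ends at $ \EncCO{S'}\reachSuccess $ for some $ S' $ with $ S \stepsS S' $, which is indeed the regress you point out. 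Your structural closure---tracing the specific unguarded $ \success $ in $ T $ back through $ \EncCI{\success} = \success $, Observation~\ref{obs:guardedSourceVsTarget}, and the fact that only simulation steps unguard encoded continuations---is exactly the content that makes the backward direction go through, and it is in fact already present inside the `only-if'-part of Lemma~\ref{lem:sourceVsSimStep}, which explicitly notes that the source continuations corresponding to the unguarded translations are unguarded in the constructed $ S' $ (so one obtains $ S'\hasSuccess $, not merely $ S'\reachSuccess $). So the two arguments ultimately rest on the same lemma; the paper's version buys brevity by leaving this bookkeeping implicit in Lemma~\ref{lem:sourceVsSimStep}, while yours buys a self-contained, visibly non-circular backward direction at the cost of partially re-deriving the soundness argument. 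One small remark: your appeal to ``each such simulation step corresponds to a source step'' implicitly needs the induction on the number of simulation steps (as in the proof of Theorem~\ref{thm:operationalCorrespondenceCentral}), since Lemma~\ref{lem:sourceVsSimStep} itself speaks only of sequences $ \sStepsT $ containing a single simulation step; this is worth stating explicitly but is not a gap.
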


\begin{proof}
	From Observation~\ref{obs:success} and Figure~\ref{fig:centralised}, $ S\hasSuccess $ iff $ \EncCO{S}\hasSuccess $. With Theorem~\ref{thm:operationalCorrespondenceCentral} and because $ \barbBisim $ respects $ \success $, then $ S\reachSuccess $ iff $ \EncCO{S}\reachSuccess $.
\end{proof}

In a similar way we can prove that a source term reaches a barb iff its translation reaches the respective translated barb.

\begin{theorem}
	$ \forall S, \ch\logdot S\ReachBarb{\ch} $ iff $ \EncCO{S}\ReachBarb{\EncCI{\cdot}\ch} $
	\label{thm:respectsBarbsCentral}
\end{theorem}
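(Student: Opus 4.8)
The plan is to follow the template of Theorem~\ref{thm:successSensitivenessCentral} almost verbatim: first nail down the purely local statement $ S\HasBarb{\ch} $ iff $ \EncCO{S}\HasBarb{\EncCI{\cdot}\ch} $, and then promote it from the immediate barb predicate $ \HasBarb{\cdot} $ to the reachability predicate $ \ReachBarb{\cdot} $ by means of operational correspondence (Theorem~\ref{thm:operationalCorrespondenceCentral}) together with the fact that $ \barbBisim $ equates reachable (translated) barbs by its very definition.

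For the local step I would inspect Figure~\ref{fig:centralised}. The outer layer merely wraps $ \EncCI{P} $ as $ \Res{\act, \once}{\EncCI{P} \mid \Out{\once}{} \mid \RepInput{\once}{}{\ldots}} $: it does not guard $ \EncCI{P} $, it introduces no further unguarded announcement, and it flips no lock. Since the definition of translated barbs reads off exactly the outermost variant of $ \act $---which is the occurrence of $ \act $ bound here---the added restriction does not hide the announcement. Hence $ \EncCO{S} $ and $ \EncCI{S} $ carry the same translated barbs, and with Observation~\ref{obs:transBarbs} I obtain $ S\HasBarb{\ch} $ iff $ \EncCO{S}\HasBarb{\EncCI{\cdot}\ch} $, a statement that applies to every reachable source derivative as well.

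The forward implication is then direct. From $ S\ReachBarb{\ch} $ I get $ S \stepsS S' $ with $ S'\HasBarb{\ch} $; completeness yields $ T $ with $ \EncCO{S} \stepsT T $ and $ \EncCO{S'} \barbBisim T $; the local step gives $ \EncCO{S'}\HasBarb{\EncCI{\cdot}\ch} $, hence $ \EncCO{S'}\ReachBarb{\EncCI{\cdot}\ch} $, and since $ \barbBisim $ respects reachable translated barbs this transfers to $ T $. As $ \EncCO{S} \stepsT T $ and $ T\ReachBarb{\EncCI{\cdot}\ch} $, transitivity of $ \stepsT $ yields $ \EncCO{S}\ReachBarb{\EncCI{\cdot}\ch} $.

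The backward implication is where the real work lies, and I expect it to be the main obstacle. Naively chaining soundness with $ \barbBisim $ only transports barb-\emph{reachability}, not immediate barbs, so trying to read off a source barb from $ \EncCO{S'}\ReachBarb{\EncCI{\cdot}\ch} $ would merely reproduce the statement to be proved and loop. Instead I would argue structurally on the witnessing computation $ \EncCO{S} \stepsT T $ with $ T\HasBarb{\EncCI{\cdot}\ch} $: using Lemma~\ref{lem:sourceVsSimStep} its simulation steps correspond to a concrete source computation $ S \stepsS S' $ with $ \EncCO{S'} \barbBisim T $, and since only simulation steps are points of no return (auxiliary steps neither unguard continuations nor flip locks, by Lemma~\ref{lem:auxStepsCentral}), the unguarded announcement on the outermost $ \act $ that witnesses $ T\HasBarb{\EncCI{\cdot}\ch} $ with still-positive lock must originate from a source prefix $ \ch \rightarrow \cdot $ that is unguarded in $ S' $. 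Thus $ S'\HasBarb{\ch} $, and with $ S \stepsS S' $ I conclude $ S\ReachBarb{\ch} $. This tracing of the announcement back through the forwarding and combining performed by $ \Synch{\ch} $ and the parallel-operator encoding---and the bookkeeping that its lock is still positive precisely when no conflicting simulation step has consumed it---is the delicate part of the argument.
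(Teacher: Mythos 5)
Your proposal takes essentially the same route as the paper: the paper's proof likewise first establishes $ S\HasBarb{\ch} $ iff $ \EncCO{S}\HasBarb{\EncCI{\cdot}\ch} $ from Observation~\ref{obs:transBarbs} and Figure~\ref{fig:centralised}, and then lifts this to reachability via Lemma~\ref{lem:sourceVsSimStep} and the fact that $ \barbBisim $ respects translated barbs. Your more careful treatment of the backward direction---tracing the witnessing announcement back through the simulation steps rather than circularly invoking barb-reachability---correctly spells out what the paper's two-sentence proof leaves implicit.
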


\begin{proof}
	From Observation~\ref{obs:transBarbs} and Figure~\ref{fig:centralised}, $ S\HasBarb{\ch} $ iff $ \EncCO{S}\HasBarb{\EncCI{\cdot}\ch} $. With Lemma~\ref{lem:sourceVsSimStep} and because $ \barbBisim $ respects translated barbs, then $ S\ReachBarb{\ch} $ iff $ \EncCO{S}\ReachBarb{\EncCI{\cdot}\ch} $.
\end{proof}

As proved in \cite{petersGlabbeek15}, Theorem~\ref{thm:operationalCorrespondenceCentral}, the fact that $ \barbBisim $ is success sensitive and respects (translated) barbs, Theorem~\ref{thm:successSensitivenessCentral}, and Theorem~\ref{thm:respectsBarbsCentral} imply that for all $ S $ it holds $ S $ and $ \EncCO{S} $ are (success sensitive, (translated) barb respecting, weak, reduction) bisimilar, \ie $ S \barbBisim \EncCO{S} $.
Bisimilarity is a strong relation between source terms and their translation. On the other hand, because of efficiency, distributability preserving encodings are more interesting.
Because of $ \once $ the encoding $ \encCO $ obviously does not preserves distributability. As discussed in \cite{parrowCoupled92} bisimulation often forbids for distributed encodings. Instead they propose coupled simulation as relation that still provides a strong connection between source terms and their translations but is more flexible. Following the approach in \cite{parrowCoupled92} we consider a de-centralised coordinator next.

\section{The De-Centralised Encoding}
\label{sec:decentral}

\begin{figure}
	\begin{align*}
		\EncDO{P} \deff & \Res{\act}{\EncCI{P} \mid \RepInput{\act}{\ch, \req, \lock, \simu}{\left( \Out{\req}{} \mid \Input{\lock}{\boolV}{\IF{\boolV}\;\THEN{\Out{\simu}{\top}}} \right)}}
	\end{align*}
	\caption{A \textbf{de-centralised} encoding from CSP into CCS with value passing.}
	\label{fig:decentralised}
\end{figure}

Figure~\ref{fig:decentralised} presents a de-centralised variant of the coordinator in Figure~\ref{fig:centralised}.
The only difference between the centralised and the de-centralised version of the coordinator is that the latter can request to check different locks concurrently. Technically $ \encCO $ and $ \encDO $ differ only by the use of $ \once $. As a consequence the steps of different simulation attempts can overlap and even (pre-processing) steps of simulations of conflicting source term steps can interleave to a certain degree. Because of this effect, $ \encDO $ does not satisfy the version of operational correspondence used above for $ \encCO $, but $ \encDO $ satisfies weak operational correspondence that was proposed in \cite{gorla10} as part of a set of quality criteria.

Similar to the central coordinator, the de-central coordinator respects the protocol on locks used to ensure that each announcement is only used once to simulate a source term step, \ie it preserves the properties of locks formulated in Lemma~\ref{lem:sumLocks}.

\begin{prop}
	Let $ T \in \procT $ such that $ \exists S\logdot \EncCO{S} \stepsT T $. Then for each variant $ l $ of the names $ \lock, \lLock, \rLock $
	\begin{compactenum}
		\item there is at most one positive instantiation of $ l $ in $ T $,
		\item if there is a positive instantiation of $ l $ in $ T $ then there is no other instantiation of $ l $ in $ T $, and
		\item if there is a negative instantiation of $ l $ in $ T $ then no derivative of $ T $ contains a positive instantiation of $ l $.
	\end{compactenum}
	\label{prop:sumLocksDecentral}
\end{prop}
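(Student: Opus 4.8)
The plan is to follow the proof of Proposition~\ref{prop:sumLocksCentral} almost verbatim, reducing the claim to the already-established Lemma~\ref{lem:sumLocks}. All three invariants concern only outputs on variants of $\lock, \lLock, \rLock$, and the proof of Lemma~\ref{lem:sumLocks} shows that the sole producers of such outputs are $\Sim$ and the translation of external choice inside the inner encoding $\EncCI{\cdot}$, each underneath a restriction of the respective lock. Hence the whole argument hinges on checking that the de-centralised outer layer of Figure~\ref{fig:decentralised} adds no further producer of a lock and does not otherwise enable a new positive instantiation.

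First I would inspect $\EncDO{P} = \Res{\act}{\EncCI{P} \mid \RepInput{\act}{\ch, \req, \lock, \simu}{\left( \Out{\req}{} \mid \Input{\lock}{\boolV}{\IF{\boolV}\;\THEN{\Out{\simu}{\top}}} \right)}}$ and observe that the outer layer only \emph{reads} a lock, through $\Input{\lock}{\boolV}{\ldots}$, and never emits an output on any variant of $\lock, \lLock, \rLock$. Second, I would note that the only genuinely new signal produced by this layer is $\Out{\simu}{\top}$, which sits in the positive branch guarded by the test on the value read from $\lock$; producing it therefore costs the consumption of a positive instantiation of the corresponding lock and leads only to commitment---unguarding $\EncCI{P_i}$ and installing the negative-only $\RepInput{\req}{}{\Out{\lock}{\bot}}$---and never to the restoration of the one-shot positive producer $\Input{\req}{}{\Out{\lock}{\top}}$. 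With these two facts the induction on the length of $\EncDO{S} \stepsT T$ used for Lemma~\ref{lem:sumLocks} goes through unchanged for properties~1 and~2, and property~3 follows as before because unguarding a fresh positive instantiation still requires consuming one.

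The step I expect to demand the most care is justifying that dropping the serialising channel $\once$ does not invalidate the invariant. With $\once$ gone, several copies of the outer replicated input may be active simultaneously and may all target the same lock, since the translation of a sum places one announcement per summand under a single shared lock. The point to verify is that the invariant is entirely self-contained in the inner layer and independent of $\once$: by the inductive hypothesis there is at most one positive instantiation of a given lock at any moment, so two concurrent readers cannot both observe $\top$---the first read consumes the unique positive output, and any later read can only see a $\bot$ delivered by the $\RepInput{\req}{}{\Out{\lock}{\bot}}$ installed after the commitment. Thus the added concurrency creates no additional positive instantiation, and the conclusion follows exactly as for $\EncCO{\cdot}$ in Proposition~\ref{prop:sumLocksCentral}.
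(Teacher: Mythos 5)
Your proposal is correct and follows essentially the same route as the paper: the paper's proof likewise reduces the claim to Lemma~\ref{lem:sumLocks} by observing that the outer layer of Figure~\ref{fig:decentralised} introduces no new instantiations of $l$ and only provides additional instantiations of $\simu$, whose unguarding requires consuming a positive instantiation of the corresponding lock. Your extra discussion of why dropping $\once$ is harmless is a sensible elaboration of the same argument rather than a different approach.
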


\begin{proof}
	The encoding in Figure~\ref{fig:decentralised} does not introduce new instantiations of $ l $. It does provide additional instantiations of $ \simu $, but to unguard them a positive instantiation of the corresponding lock $ \lock $ has to be consumed. Thus $ \EncDO{\cdot} $ preserves the properties of locks formulated in Lemma~\ref{lem:sumLocks}.
\end{proof}

Similarly Lemma~\ref{lem:reqLocks} is preserved.

\begin{prop}
	Let $ T \in \procT $ such that $ \exists S\logdot \EncDO{S} \stepsT T $. Then for each variant $ r $ of the names $ \req, \lreq, \rreq $ there is at most one (replicated) input on $ r $ in $ T $.
	\label{prop:reqLocksDecentral}
\end{prop}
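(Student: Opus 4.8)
The plan is to reduce this to Lemma~\ref{lem:reqLocks} in exactly the way Proposition~\ref{prop:reqLocksCentral} handles the centralised case, since the de-centralised outer layer of Figure~\ref{fig:decentralised} differs from the inner encoding only in ways that cannot create fresh inputs on variants of $\req$, $\lreq$, or $\rreq$. First I would inspect the outer term: each firing of its replicated input on the outermost $\act$ contributes only an output $\Out{\req}{}$ together with a single guarded instance $\Out{\simu}{\top}$, released after a positive test of the received lock. Crucially this layer introduces no new (replicated) input on any variant of $\req$, $\lreq$, $\rreq$; the entire inventory of such inputs is therefore supplied by $\EncCI{\cdot}$, whose single-input invariant is already established in Lemma~\ref{lem:reqLocks}.

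The key step is then to check that the additional $\simu$-instances contributed by the coordinator cannot perturb that invariant. In the translation of external choice a fresh (replicated) input on a variant $r$ is unguarded only after an instantiation of $\simu$ has been consumed, and such an instantiation---here the coordinator's $\Out{\simu}{\top}$---can be produced only by consuming a positive instantiation of the corresponding lock $\lock$. Thus the extra $\simu$-outputs of the de-centralised coordinator are paid for out of the very same positive-lock budget that the inner counting argument already accounts for. I would therefore carry the proof by induction on the length of $\EncDO{S} \stepsT T$, reusing verbatim the per-step analysis of Lemma~\ref{lem:reqLocks} and invoking Proposition~\ref{prop:sumLocksDecentral} to bound the positive lock instantiations available for unguarding new inputs.

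The main obstacle is that, unlike the centralised encoding, $\EncDO{\cdot}$ drops $\once$, so simulation attempts may overlap and two concurrent lock-tests in distinct active copies of the outer replicated input could in principle each emit an $\Out{\simu}{\top}$ and each attempt to unguard a fresh input on the same variant $r$. This is precisely where Proposition~\ref{prop:sumLocksDecentral} is indispensable: it guarantees that each lock admits at most one positive instantiation ever, and that once a lock turns negative it never becomes positive again. Consequently at most one $\Out{\simu}{\top}$ per lock can be generated, so the concurrency admitted by the de-centralised coordinator still cannot manufacture two simultaneous unguardings of inputs on $r$, and the invariant of Lemma~\ref{lem:reqLocks} survives unchanged.
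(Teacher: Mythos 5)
Your proposal is correct and follows essentially the same route as the paper: reduce to Lemma~\ref{lem:reqLocks}, observe that the de-centralised coordinator adds no new inputs on variants of $\req$, $\lreq$, $\rreq$ but only extra instantiations of $\simu$, and use Proposition~\ref{prop:sumLocksDecentral} to argue these can only be unguarded by consuming a positive lock instantiation, so the inner invariant is preserved. Your additional discussion of overlapping simulation attempts caused by dropping $\once$ is a useful elaboration of the same core argument rather than a different approach.
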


\begin{proof}
	Follows from Lemma~\ref{lem:reqLocks} and Proposition~\ref{prop:sumLocksDecentral}, because the encoding in Figure~\ref{fig:decentralised} provides additional instantiations of $ \simu $, but to unguard them a positive instantiation of the corresponding lock $ \lock $ has to be consumed.
\end{proof}

The encoding in Figure~\ref{fig:decentralised} does not use variants of the names $ \syn, \syn', \nextSyn $, and $ \Proj{\Renam{c}}{2} $. Because of that, Lemma~\ref{lem:orderLeftAnn} is preserved by $ \EncDO{\cdot} $.

\begin{prop}
	Let $ T, T' \in \procT $ such that $ \exists S\logdot \EncDO{S} \stepsT T \stepT T' $ and $ T \stepT T' $ reduces an output on $ \Proj{\Renam{c}}{2} $. Moreover assume that for all steps $ T_1 \stepT T_1' $ on a variant of $ \syn, \syn', \nextSyn $ with $ \exists S'\logdot \EncDO{S'} \stepsT T_1 $ it holds $ T_1 \barbBisim T_1' $.
	Then $ T \barbBisim T' $.
	\label{prop:orderLeftAnnDecentral}
\end{prop}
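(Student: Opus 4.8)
The plan is to reduce this proposition to Lemma~\ref{lem:orderLeftAnn}, exactly as was done for its centralised counterpart, Proposition~\ref{prop:orderLeftAnnCentral}. Lemma~\ref{lem:orderLeftAnn} already establishes the desired conclusion $ T \barbBisim T' $ for a step reducing an output on $ \Proj{\Renam{c}}{2} $, but only for terms $ T $ reachable from $ \EncCI{S} $. So the one thing I need to verify is that the de-centralised outer layer $ \EncDO{\cdot} $ of Figure~\ref{fig:decentralised} does not interfere with the machinery that Lemma~\ref{lem:orderLeftAnn} reasons about, namely the channels $ \syn, \syn', \nextSyn $, and $ \Proj{\Renam{c}}{2} $.

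First I would inspect the outer encoding in Figure~\ref{fig:decentralised}. It only restricts the outermost $ \act $ and adds a single replicated input on it, whose body produces an output on $ \req $, an input on $ \lock $, and possibly an output on $ \simu $. None of these is a variant of $ \syn, \syn', \nextSyn $, or $ \Proj{\Renam{c}}{2} $. Consequently every (replicated) input and every output on these four families of channels occurring in any derivative of $ \EncDO{S} $ originates from a $ \Synch{c} $ component inside $ \EncCI{\cdot} $, and has the same shape and restriction discipline as in the pure $ \EncCI{\cdot} $ setting treated by Lemma~\ref{lem:orderLeftAnn}.

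With that established, the argument of Lemma~\ref{lem:orderLeftAnn} transfers directly: since $ \Proj{\Renam{c}}{2} $ is restricted and a fresh input on it is produced only by reducing $ \RepIn{\nextSyn}{\syn} $, there is at most one unguarded input on $ \Proj{\Renam{c}}{2} $, so distinct steps on it are in conflict, and their continuations differ only by the freshly restricted variant of $ \syn $ together with the accompanying $ \syn' $ and $ \Out{\nextSyn}{\syn} $. By the proposition's own hypothesis every step on a variant of $ \syn, \syn', \nextSyn $ preserves $ \barbBisim $, and all these names are restricted, so the differing continuations of alternative inputs on $ \Proj{\Renam{c}}{2} $ are indistinguishable modulo $ \barbBisim $; hence $ T \barbBisim T' $.

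The only genuine obstacle is the soundness of this reduction: Lemma~\ref{lem:orderLeftAnn} is phrased for terms reachable from $ \EncCI{S} $, and one must be sure its proof does not implicitly exploit features specific to $ \EncCI{\cdot} $ that could fail once concurrent, overlapping simulation attempts are enabled by dropping $ \once $. Here this worry is vacuous precisely because the lemma's statement already abstracts the behaviour of $ \syn, \syn', \nextSyn $ away into an assumption and otherwise depends only on the restriction discipline of these channels, both of which $ \EncDO{\cdot} $ leaves untouched, so no new interference on $ \Proj{\Renam{c}}{2} $ can arise.
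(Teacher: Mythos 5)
Your proposal is correct and follows essentially the same route as the paper: the paper's proof likewise just observes that the outer layer of Figure~\ref{fig:decentralised} does not use variants of $\syn, \syn', \nextSyn$, or $\Proj{\Renam{c}}{2}$ and then invokes Lemma~\ref{lem:orderLeftAnn}. Your additional re-derivation of the lemma's internal argument is sound but not needed beyond that observation.
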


\begin{proof}
	The encoding in Figure~\ref{fig:decentralised} does not use variants of the names $ \syn, \syn', \nextSyn $, and $ \Proj{\Renam{c}}{2} $. Thus this Proposition follows from Lemma~\ref{lem:orderLeftAnn}.
\end{proof}

Since several announcements can be processed concurrently by the de-central coordinator, here all consumptions of announcements are auxiliary steps. Instead the consumption of positive instantiations of locks can mark a point of no return. In contrast to $ \encCO $ not every point of no return in $ \encDO $ unambiguously marks a simulation of a single source term step, because in contrast to $ \encCO $ the encoding $ \encDO $ introduces \emph{partial commitments} \cite{peters12,petersNestmann12}.

Consider the example $ E = \CSPPar{\left( \exchoice{o \to P_1}{p \to P_2} \right)}{\left( \exchoice{o \to P_3}{\exchoice{p \to P_4}{q \to P_5}} \right)}{\Set{ o, p }} $.

\noindent
  \begin{minipage}[c]{0.3\textwidth-2pt}
      \begin{tikzpicture}[auto,node distance=1.2cm]
        \node (E)                        {$ \EncCO{E} $};
        \node (T)    [right of=E]        {$ T $};
        \node (T2)    [right=of T]        {$\hspace{1ex} \barbBisim \EncCO{P_2} $};
        \node (T1)    [above of=T2]        {$\hspace{1ex} \barbBisim \EncCO{P_1} $};
        \node (T3)    [below of=T2]        {$\hspace{1ex} \barbBisim \EncCO{P_3} $};

        \draw[|->,shorten >=-1pt,shorten <=-0.5pt] (E) -- (T);
        \draw[double] (E) -- (T);
        \draw[|->,shorten >=-1pt,shorten <=-0.5pt] (T) -- (T1.west) node [near end, below, rotate=40, scale = 0.7] {$\emph{sim}_o$};
        \draw[double] (T) -- (T1.west);
        \draw[|->,shorten >=-1pt,shorten <=-0.5pt] (T) -- (T2.west) node [at end, below, scale = 0.7] {$\emph{sim}_p$};
        \draw[double] (T) -- (T2.west);
        \draw[|->,shorten >=-1pt,shorten <=-0.5pt] (T) -- (T3.west) node [at end, below, rotate=-40, scale = 0.7] {$\emph{sim}_q$};
        \draw[double] (T) -- (T3.west);
      \end{tikzpicture}
  \end{minipage}
  \begin{minipage}[c]{0.3\textwidth-2pt}
    \begin{center}
      \begin{tikzpicture}[auto,node distance=1.2cm]
        \node (T)                        {$ E $};
        \node (d1)    [right of=T]        {};
        \node (T2)    [right of=d1]        {$\hspace{1ex} P_2$};
        \node (T1)    [above of=T2]        {$\hspace{1ex} P_1$};
        \node (T3)    [below of=T2]        {$\hspace{1ex} P_3$};

        \draw[|->,shorten >=-1pt,shorten <=-0.5pt] (T) -- (T1.west) node [at end, below, rotate=40, scale = 0.7] {$o$};
        \draw[|->,shorten >=-1pt,shorten <=-0.5pt] (T) -- (T2.west) node [at end, below, scale = 0.7] {$p$};
        \draw[|->,shorten >=-1pt,shorten <=-0.5pt] (T) -- (T3.west) node [at end, below,rotate=-40, scale = 0.7] {$q$};
      \end{tikzpicture}
      \end{center}
  \end{minipage}
  \begin{minipage}[c]{0.4\textwidth-2pt}
      \begin{tikzpicture}[auto,node distance=1.2cm]
        \node (E)                        {$ \EncDO{E} $};
        \node (T)    [right of=E]        {$ T $};
        \node (P2)    [right=1cm of T]        {$\cdots$};
        \node (P1)    [above of=P2]        {$PC_1$};
        \node (T12)    [right=1cm of P2]        {$\barbBisim \EncDO{P_{3}}$};
        \node (T11)    [above of=T12]        {$\barbBisim \EncDO{P_{1}}$};
        \node (T3)    [below of=P2]        {$\makebox[\widthof{$PC_{1}$}][l]{$\barbBisim \EncDO{P_{3}}$}$};

        \draw[|->,shorten >=-1pt,shorten <=-0.5pt] (E) -- (T);
        \draw[double] (E) -- (T);
        \draw[|->,shorten >=-1pt,shorten <=-0.5pt] (T) -- (P1);
        \draw[double] (T) -- (P1);
        \draw[|->,shorten >=-1pt,shorten <=-0.5pt] (T) -- (P2);
        \draw[double] (T) -- (P2);
        \draw[|->,shorten >=-1pt,shorten <=-0.5pt] (P1) -- (T11.west) node [near end, below, scale = 0.7] {$\emph{sim}_o$};
        \draw[double] (P1) -- (T11.west);
        \draw[|->,shorten >=-1pt,shorten <=-0.5pt] (P1) -- (T12.west) node [at end, below, rotate=-40, scale = 0.7] {$\emph{sim}_q$};
        \draw[double] (P1) -- (T12.west);
        \draw[|->,shorten >=-1pt,shorten <=-0.5pt] (T) -- (T3.west) node [at end, below, rotate=-43, scale = 0.7] {$\emph{sim}_q$};
        \draw[double] (T) -- (T3.west);
      \end{tikzpicture}
  \end{minipage}

  In the example, two sides of a parallel operator have to synchronise on either action $p$, or action $o$, or action $q$ happens without synchronisation.
  In the centralised encoding $\EncCO{E}$ the use of $ \once $ ensures that different simulation attempts cannot overlap. Thus, only after finishing the simulation of a source term step, the simulation of another source term step can be invoked. As a consequence each state reachable from encoded source terms can unambiguously be mapped to a single state of the source term. This allows us to use a stronger version of operational correspondence and, thus, to prove that source terms and their translations are bisimilar. The corresponding 1-to-1 correspondence between source terms and their translations is visualised by the first two graphs above, where $ T \barbBisim \EncCO{E} $.

  The de-centralised encoding $\EncDO{E}$ introduces partial commitments.
  Assume the translation of a source term that offers several alternative ways to be reduced. Then some encodings---as our de-central one---do not always decide on which of the source term steps should be simulated next. More precisely a partial commitment refers to a state reachable from the translation of a source term in that already some possible simulations of source term steps are ruled out, but there is still more than a single possibility left.

  In the de-centralised encoding announcements can be processed concurrently and parts of different simulation attempts can interleave. The only blocking part of the decentralised encoding are conflicting attempts to consume the same positive instantiation of a lock.
  In the presented example above there are two locks; one for each side of the parallel operator. The simulations of the step on $ o $ and $ p $ need both of these locks, whereas to simulate the step on $ q $ only a positive instantiation of the right lock needs to be consumed.
  By consuming the positive instantiation of the left lock in an attempt to simulate the step on $ o $, the simulation of the step on $ p $ is ruled out, but the simulation of the step on $ q $ is still possible. Since either the simulation of the step on $ o $ or the simulation of the step on $ q $ succeeds, the simulation of the step on $ p $ is not only blocked but ruled out. But the consumption of the instantiation of the left lock does not unambiguously decide between the remaining two simulations. The intermediate state that results from consuming the instantiation of the left lock and represents a partial commitment is visualised in the right graph above by the state $ PC_1 $.

  Partial commitments forbid a 1-to-1 mapping between the states of  a source term and its translations by a bisimulation. But, as shown in \cite{parrowCoupled92}, partial commitments do not forbid to relate source terms and their translations by coupled similarity.

Whether the consumption of a positive instantiation of a lock is an auxiliary step---does not change the state of the term modulo $ \barbBisim $---, is a partial commitment, or unambiguously marks a simulation of a single source term step depends on the surrounding term, \ie cannot be determined without the context. For simplicity we consider all steps that reduce a positive instantiation of a lock as simulation steps.
Also steps on variants of the channels $ \much $, $ \rep $, and $ \renam'(X) $ are simulation steps, because they unambiguously mark a simulation of a single source term step. All remaining steps of the de-centralised encoding are auxiliary.

\begin{definition}[Auxiliary and Simulation Steps]
	A step $ T \stepT T' $ such that $ \exists S \in \procS\logdot \EncDO{S} \stepsT T $ is called a \emph{simulation step}, denoted by $ T \sStepT T' $, if $ T \step T' $ reduces a positive instantiation of a lock or is a step on a variant of $ \much $, $ \rep $, or $ \renam'(X) $.
	
	Else the step $ T \stepT T' $ is called an \emph{auxiliary step}, denoted by $ T \aStepT T' $.
	\label{def:auxStepsDecentral}
\end{definition}

\noindent
Again let $ \aStepsT $ denote the reflexive and transitive closure of $ \aStepT $ and let $ \sStepsT \deff \aStepsT \sStepT \aStepsT $.
Since auxiliary steps do not introduce partial commitments, they do not change the state modulo $ \barbBisim $. The proof of this lemma is very similar to the central case.

\begin{lemma}
	$ T \aStepT T' $ implies $ T \barbBisim T' $ for all target terms $ T, T' $.
	\label{lem:auxStepsDecentral}
\end{lemma}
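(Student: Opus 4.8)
The plan is to re-run the case analysis of the proof of Lemma~\ref{lem:auxStepsCentral} on the channel $ x $ reduced in $ T \aStepT T' $, replacing each appeal to a centralised auxiliary fact by its de-centralised counterpart, \ie using Proposition~\ref{prop:sumLocksDecentral}, Proposition~\ref{prop:reqLocksDecentral}, and Proposition~\ref{prop:orderLeftAnnDecentral} instead of Proposition~\ref{prop:sumLocksCentral}, Proposition~\ref{prop:reqLocksCentral}, and Proposition~\ref{prop:orderLeftAnnCentral}. As in the centralised case, $ \boolV $ as well as $ \tau, \ch, \lch, \rch, z $ and $ \Proj{\Renam{\ch'}}{1} $ never occur as channels, so those cases are vacuous.

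Most forwarding and request cases transfer literally. For every internal variant of $ \act $, for $ \act', \lock', \nextSyn, \syn, \syn' $, and for the initial value $ \Proj{\Renam{\ch}}{3} $ of $ \syn $ I would invoke Proposition~\ref{prop:auxStepsForward}, since in $ \EncDO{\cdot} $ each of these channels is restricted and carries exactly one replicated input in its scope; variants of $ \req, \lreq, \rreq $ are covered by Proposition~\ref{prop:auxStepsReq} together with Proposition~\ref{prop:reqLocksDecentral}; and the reordering of left announcements on a variant of $ \Proj{\Renam{c}}{2} $ by Proposition~\ref{prop:orderLeftAnnDecentral}. Steps on the Boolean links $ t, f $ stay conflict-free because each $ \ITE{\cdot}{\cdot}{\cdot} $-construct restricts its own pair $ t, f $ and thus faces a unique matching input, so Lemma~\ref{lem:noConflicts} together with $ \equiv \; \subseteq \; \barbBisim $ applies.

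Two cases genuinely change. First, there is no $ \once $, so the corresponding centralised sub-case disappears; in return the consumption of an announcement on the outermost $ \act $ is now \emph{always} auxiliary (Definition~\ref{def:auxStepsDecentral}), rather than sometimes a simulation step. Here the de-centralised coordinator is the \emph{replicated} input $ \RepInput{\act}{\ch, \req, \lock, \simu}{\ldots} $, which is the unique input on the restricted outermost $ \act $; hence consuming different announcements is not a conflict and no encoded continuation is unguarded, so Proposition~\ref{prop:auxStepsForward} applies. The only point needing a word of justification is that translated barbs survive this step: the consumed announcement remains a translated barb by the second clause of the Definition of Translated Barbs as long as its lock is positive and the guarding $ \ITE{\cdot}{\cdot}{\cdot} $-construct is unresolved, and the step changes neither. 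Second, by Definition~\ref{def:auxStepsDecentral} every reduction of a \emph{positive} lock is a simulation step, so the auxiliary steps on variants of $ \lock, \lLock, \rLock $ only ever reduce \emph{negative} instantiations; the abort traffic this triggers on variants of $ \simu, \lsimu, \rsimu $ is likewise auxiliary and, using Proposition~\ref{prop:sumLocksDecentral} to exclude competing positive signals, non-conflicting.

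The negative-lock case is the main obstacle and the sole point where the de-centralised argument really departs from the centralised one. In the centralised encoding $ \once $ guaranteed a single unguarded $ \ITE{\cdot}{\cdot}{\cdot} $-reader per lock, so two alternative negative reads shared one reader and produced structurally congruent derivatives. Without $ \once $ several $ \Sim $-readers of the same sum lock may be unguarded at once, and two of them can compete for the same negative instantiation, which is a conflict in the sense of the paper. The resolution is a confluence argument: a negative read is a harmless abort---by Proposition~\ref{prop:sumLocksDecentral} the lock stays negative and no positive instantiation of it ever reappears, the replicated $ \RepInput{\req}{}{\Out{\lock}{\bot}} $ supplies a negative instantiation for each pending request so that no reader is permanently blocked, and the step unguards no continuation and alters neither success nor any translated barb. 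Hence whichever reader wins, and in whatever order, the two diamond sides reconverge up to $ \equiv $ and further aborts. I therefore expect to replace the verbatim appeal to Lemma~\ref{lem:noConflicts} in this case by an explicit diamond closure, \ie a slightly enlarged candidate relation (the reflexive closure extended by these competing abort states) for which I check the four bisimulation clauses directly, relating the two sides up to $ \equiv $.
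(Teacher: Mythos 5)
Your proposal is correct and follows essentially the same route as the paper's proof: the same case analysis on the reduced channel, the same appeals to Propositions~\ref{prop:auxStepsForward}, \ref{prop:auxStepsReq}, \ref{prop:reqLocksDecentral}, \ref{prop:orderLeftAnnDecentral} and \ref{prop:sumLocksDecentral}, the outermost $\act$ absorbed into the forwarder case, and auxiliary lock steps confined to negative instantiations. The only (harmless) presentational difference is in that last case: where you set up an explicit diamond closure over competing abort states, the paper argues that the replicated $\RepInput{\req}{}{\Out{\lock}{\bot}}$ supplies one structurally congruent negative instantiation per request, so no effective conflict arises and Lemma~\ref{lem:noConflicts} together with $\equiv \; \subseteq \; \barbBisim$ already closes the case.
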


\begin{proof}
	We distinguish the following cases \wrt the channel $ x $ that is reduced in the step $ T \aStepT T' $.
	\begin{compactenum}

		\item $ \boolV $ is a placeholder for $ t $ and $ f $, but, in contrast to $ t $ and $ f $, $ \boolV $ itself is never used as a channel name. Also $ \tau, \ch, \lch, \rch, z $, and $ \Proj{\Renam{\ch'}}{1} $ for all source term names $ \ch' $ are never used as channels.
		\item All variants of one of the names $ \act, \act', \lock', \req', \lreq', \rreq', \nextSyn, \syn $, and $ \syn' $ are used as simple forwarders. If we analyse the encoding functions in Figure~\ref{fig:innerEncoding} and Figure~\ref{fig:decentralised}, we observe that they are always restricted and there is exactly one replicated input and no other input on the respective variant in their scope. Thus, for all target terms $ T $ such that $ \exists S\logdot \EncCO{S} \steps T $, all steps on such channels satisfy the conditions specified by Proposition~\ref{prop:auxStepsForward}. Hence $ T \barbBisim T' $.
		\item The name $ \Proj{\Renam{\ch}}{3} $ is transmitted over $ \nextSyn $ in $ \Synch{\cdot} $ as initial value of $ \syn $. Thus, similarly to $ \syn $ because of Proposition~\ref{prop:auxStepsForward}, $ T \barbBisim T' $.
		\item The case of $ x $ being a variant of $ \req, \lreq, \rreq $ follows from Proposition~\ref{prop:auxStepsReq} and Proposition~\ref{prop:reqLocksDecentral}.
		\item The case of $ x $ being a variant of $ \Proj{\Renam{\ch}}{2} $ follows from 2.\@ and Proposition~\ref{prop:orderLeftAnnDecentral}.
		\item Variants of the names $ t, f $ are used to implement Boolean valued locks and an $ \ITE{\cdot}{\cdot}{\cdot} $-construct testing such locks. By Proposition~\ref{prop:sumLocksDecentral}, there is at most one positive instantiation of each lock and by definition all negative instantiations of the same lock---and also positive ones---are structural congruent. Since each $ \ITE{\cdot}{\cdot}{\cdot} $-construct restricts its own variants of $ t $ and $ f $ and because there is never a positive and a negative instantiation of the same lock (Proposition~\ref{prop:sumLocksDecentral}), all conflicts between two steps on variants of $ t $ and $ f $ result into structural congruent continuations and a step on variants of $ t $ and $ f $ cannot be in conflict with any other step on a different channel of $ T $ or its derivatives. Because $ \equiv \; \subseteq \; \barbBisim $ and by Lemma~\ref{lem:noConflicts}, then $ T \barbBisim T' $.
		\item Variants of the names $ \lock, \lLock, \rLock $ refer to Boolean valued locks. Again all announcements are propagated upwards---and on their way upwards some of them are composed---until they reach the outer layer $ \EncDO{\cdot} $. The de-central coordinator can process several announcements concurrently. Because of that conflicts result from different attempts to consume the same positive instantiation of a lock. However auxiliary steps can only consume negative instantiations of locks.
			By definition, all negative instantiations of the same lock are structural congruent. Moreover the encoding ensures that, as soon as the first negative instantiation of a lock is unguarded, as many negative instantiations of this lock are available as there are requests of it. The test of a negatively instantiated lock---that consumes an instantiation and reduces an $ \ITE{\cdot}{\cdot}{\cdot} $-construct---always reduces to the $ \ELSE{\cdot} $-case such that the inner part of a nested $ \ITE{\cdot}{\cdot}{\cdot} $-construct is not unguarded. Any unguarding of a $ \ITE{\cdot}{\cdot}{\cdot} $-construct also releases a request on the tested lock.
			Thus, if there are two negative instantiations for the same lock, it does not matter (modulo structural congruence) which one is reduced by a $ \ITE{\cdot}{\cdot}{\cdot} $-construct. Similarly, if there are two $ \ITE{\cdot}{\cdot}{\cdot} $-construct testing the same lock, both can be processed concurrently and it does not matter (modulo structural congruence) which consumes which instantiation.
			All steps on other channels cannot be in conflict with a step reducing a negative instantiation of the respective lock. Because $ \equiv \; \subseteq \; \barbBisim $ and by Lemma~\ref{lem:noConflicts}, then $ T \barbBisim T' $.
	\end{compactenum}
\end{proof}

In contrast to the centralised encoding, the simulation of a source term step in the de-centralised encoding can require more than a single simulation step and a single simulation step not unambiguously refers to the simulation of a particular source term step. The partial commitments described above forbid for operational correspondence, but the weaker variant proposed in \cite{gorla10} is satisfied. We call this variant weak operational correspondence.

\begin{definition}[Weak Operational Correspondence]
	$ $\\
	An encoding $ \enc: \procS \to \procT $ is \emph{weakly operationally corresponding} \wrt $ \barbCS \; \subseteq \procT^2 $ if it is:
	\begin{compactitem}
		\item[\; Complete:] $ \forall S, S' \logdot S \stepsS S' $ implies $ \exists T \logdot \EncDO{S} \stepsT T \wedge \EncDO{S'} \barbCS T $
		\item[\; Weakly Sound:] $ \forall S, T \logdot \EncDO{S} \stepsT T $ implies $ \exists S', T' \logdot S \stepsS S' \wedge T \stepsT T' \wedge \EncDO{S'} \barbCS T' $
	\end{compactitem}
\end{definition}

The only difference to operational correspondence is the weaker variant of soundness that allows for $ T $ to be an intermediate state that does not need to be related to a source term directly. Instead there has to be a way from $ T $ to some $ T' $ such that $ T' $ is related to a source term.

\begin{theorem}
	The encoding $ \encDO $ is weakly operational corresponding \wrt to $ \barbBisim $.
	\label{thm:operationalCorrespondenceDecentral}
\end{theorem}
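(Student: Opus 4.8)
The plan is to establish the two halves of weak operational correspondence separately, reusing as much of the centralised development as possible and isolating the treatment of partial commitments, which is the only genuinely new phenomenon. For completeness I would first prove a decentralised analogue of the `if'-part of Lemma~\ref{lem:sourceVsSimStep}: every source step $S \stepS S'$ can be simulated by a sequence $\EncDO{S} \sStepsT T$ with $\EncDO{S'} \barbBisim T$. The simulation of a single, isolated source step proceeds exactly as in the centralised case---announcements are propagated up the parallel tree and combined, the relevant lock is computed and, being initially positive, returns $\top$, so $\Out{\simu}{\top}$ is propagated downwards to unguard the encoded continuations---because $\encCO$ and $\encDO$ differ only in the use of $\once$, which restricts concurrency but never blocks the completion of one simulation attempt in isolation. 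The cases for $\Div$, internal choice, and recursion are those of the centralised proof, appealing to Observation~\ref{obs:guardedSourceVsTarget}, Observation~\ref{obs:transBarbs}, and Observation~\ref{obs:success}. Completeness \wrt $\barbBisim$ then follows by induction on the length of $S \stepsS S'$, using Lemma~\ref{lem:auxStepsDecentral} to absorb the intervening auxiliary steps.

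For weak soundness the idea is to \emph{resolve} every pending simulation attempt in a given reachable $T$. Assume $\EncDO{S} \stepsT T$. By Lemma~\ref{lem:auxStepsDecentral} only the simulation steps of this computation matter modulo $\barbBisim$; these are the consumptions of positive lock instantiations together with the steps on variants of $\much$, $\rep$, and $\renam'(X)$. The latter already mark decisive simulations, so the work lies with the lock-based steps, some of which may represent partial commitments. I would show that from $T$ one can always perform further steps $T \stepsT T'$ that drive each started-but-undecided simulation attempt to a decisive outcome: the querying order within $\Sim$---the left lock before the right---guarantees that a started attempt cannot deadlock, so it either succeeds (propagating $\Out{\simu}{\top}$ and swapping the consumed positive locks to $\bot$) or aborts (propagating $\Out{\simu}{\bot}$ and restoring the consumed positive locks, with Proposition~\ref{prop:sumLocksDecentral} ensuring that exactly those positive instantiations reappear). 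Choosing $T'$ to be the clean state in which all attempts have been resolved, I would read off the set of successfully simulated source steps, let $S'$ be the corresponding source derivative, establish $S \stepsS S'$, and conclude $\EncDO{S'} \barbBisim T'$ essentially as in the centralised `only-if'-part of Lemma~\ref{lem:sourceVsSimStep}, since $T'$ then differs from $\EncDO{S'}$ only by unobservable leftover announcements and auxiliary redexes.

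The main obstacle is precisely the resolution step for partial commitments. Unlike the centralised encoding, where $\once$ guarantees that a point of no return unambiguously fixes which source step is being simulated, here a single consumption of a positive lock may commit the computation only partially, ruling out some but not all alternative simulations. The delicate points are (i) showing that every partial commitment can in fact be completed to a decisive outcome without deadlock---this is where the asymmetric left-before-right querying order of $\Sim$ is essential---and (ii) verifying that the aborting branch genuinely restores the consumed positive locks, so that no source behaviour is spuriously lost and the read-off derivative $S'$ is well defined. Once these are in place, the relation witnessing $\EncDO{S'} \barbBisim T'$ is assembled from Lemma~\ref{lem:auxStepsDecentral} exactly as in the centralised case; the genuinely weaker, coupled-similarity relationship between $\EncDO{S}$ and its partial-commitment derivatives is not needed for weak operational correspondence itself and is deferred to the subsequent coupled-similarity result.
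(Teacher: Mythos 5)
Your proposal matches the paper's proof in all essentials: completeness is obtained by adapting the `if'-part of Lemma~\ref{lem:sourceVsSimStep} modulo the absence of $\once$, and weak soundness rests on Lemma~\ref{lem:auxStepsDecentral}, the left-before-right lock-querying order of $\Sim$ to guarantee that partial commitments can always be resolved without deadlock, and the restoration of consumed positive locks on abort. The only difference is organisational---the paper phrases the resolution of partial commitments as an induction on the number of simulation steps (with an induction hypothesis that the resolving sequence contains only the simulation steps needed to resolve partial commitments, and an explicit case analysis of conflicts between the new simulation step and that sequence), whereas you resolve all pending attempts in one pass from the final $T$---but the underlying argument is the same.
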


\begin{proof}
	\begin{compactitem}
		\item[Completeness:] $ \forall S, S' \logdot S \stepsS S' $ implies $ \exists T \logdot \EncDO{S} \stepsT T \wedge \EncDO{S'} \barbBisim T $.\\
			We consider a single step $ S \stepsS S' $. Completeness then follows by induction on the number of steps in $ S \stepsS S' $.
			
			Assume $ S \stepS S' $. Since $ \encCO $ and $ \encDO $ differ only by the use of $ \once $, the simulation of source term steps is similar except for the one step on channel $ \once $. Hence the existence of $ T $ such that $ \EncDO{S} \stepsT T $ and $ \EncDO{S'} \barbBisim T $ can be proved by adapting the `if'-part of Lemma~\ref{lem:sourceVsSimStep} \wrt the step on $ \once $.
		\item[Weak Soundness:] $ \forall S, T \logdot \EncDO{S} \stepsT T $ implies $ \exists S', T' \logdot S \stepsS S' \wedge T \stepsT T' \wedge \EncDO{S'} \barbBisim T' $.\\
			By Lemma~\ref{lem:auxStepsDecentral}, it suffices to concentrate on the simulation steps in the sequence $ \EncDO{S} \stepsT T $. The proof is by induction on the number of simulation steps in the sequence $ \EncDO{S} \stepsT T $.
			
			In the \emph{base case}---without any simulation steps in $ \EncDO{S} \stepsT T $---choose $ S' = S $ and $ T' = T $ then $ S \stepsS S' $, $ T \stepsT T' $, and, by Lemma~\ref{lem:auxStepsDecentral}, $ \EncDO{S} = \EncDO{S'} \barbBisim T' = T $.
			
			Assume that there are $ S_H, T_H $ such that $ S \stepsS S_H $, $ T \stepsT T_H $, $ \EncDO{S_H} \barbBisim T_H $, and $ T \stepsT T_H $ contains only simulation steps necessary to resolve partial commitments (\emph{induction hypothesis}).
			
			Consider $ \EncDO{S} \stepsT T \sStepT T'' $.
			\begin{compactenum}
				\item A simulation step $ T \sStepT T'' $ that consumes a positive lock can result in partial commitment, but only in the case the respective reduced $ \ITE{\cdot}{\cdot}{\cdot} $-construct was the first part of a nested $ \ITE{\cdot}{\cdot}{\cdot} $-construct and the second part tests a lock $ \lock_2 $ of that a positive instantiation is (modulo auxiliary steps) still available. Switching the positive instantiation of $ \lock_2 $ into a negative instantiation---regardless of which $ \ITE{\cdot}{\cdot}{\cdot} $-construct is used to do so---resolves the partial commitment. The sequence $ \EncDO{S} \stepsT T $ might already introduce several more of such partial commitments. Proposition~\ref{prop:sumLocksDecentral} ensures some important properties over the instantiation of locks but it does not ensure, that for all locks there will eventually be an instantiation available. Only $ \ITE{\cdot}{\cdot}{\cdot} $-construct consume instantiations of locks. After being reduced, they restore all positive instantiation they consumed or turn them into negative instantiations. Negative instantiations remain available. Thus, to ensure that there are no deadlocks and all partial commitments can be resolved, we have to show that $ \ITE{\cdot}{\cdot}{\cdot} $-constructs cannot completely block each other. In the case of the centralised encoding this follows from the use of $ \once $. In the de-central encoding we make use of the same technique already used in \cite{peters12, petersNestmann15} to avoid this problem. As proved in \cite{peters12}, because we always consume first the instantiation of the lock from the left the nested $ \ITE{\cdot}{\cdot}{\cdot} $-constructs cannot all be blocked and we can resolve them step by step.
					
					In the present case the step $ T \sStepT T'' $ might consume an instantiation of a lock that was necessary for the sequence $ T \stepsT T_H $. If that is not the case, no step of $ T \stepsT T_H $ is in conflict with $ T \sStepT T'' $. Because of $ \EncDO{S'} \barbBisim T_H $, $ T_H $ does not contain unresolved partial commitments. Hence we can choose $ T' $ as the result of performing all steps of $ T \stepsT T_H $ in $ T'' $ followed, if necessary, by a sequence with a single simulation step $ \sStepsT $ to resolve the partial commitment that may result from $ T \sStepT T'' $ such that $ T \sStepT T'' \stepsT T' $. Then choose $ S' = S_H $, if no additional step was necessary to obtain $ T' $, else $ T \sStepT T'' $ and the additional simulation step are all simulation steps of the simulation of a source term step reducing action-prefixes and we choose $ S' $ as the result of performing the respective source term step in $ S_H $. Thus $ S \stepsS S' $. Because of $ \EncDO{S_H} \barbBisim T_H $ and the construction of $ S' $ and $ T' $, we have $ \EncDO{S'} \barbBisim T' $.
					
					Else, if there is a conflict between $ T \sStepT T'' $ and a step of $ T \stepsT T_H $, choose $ T' $ as the result of applying all but the conflicting step (and all auxiliary steps that depend on this step) of $ T \stepsT T_H $ in $ T'' $ followed, if necessary, by a sequence with a single simulation step $ \sStepsT $ to resolve the partial commitment that may result from $ T \sStepT T'' $. Because the induction hypotheses ensures that $ T \stepsT T_H $ contains only simulation steps necessary to resolve partial commitments, there are no simulation steps that depend on the conflicting step and all other simulation steps of $ T \stepsT T_H $ can be transferred to $ T'' $. Thus $ T \sStepT T'' \stepsT T' $. As a consequence of this replacement the simulation of a single source term step is replaced by the simulation of another single source term step. Choose $ S' $ as the result of replacing in $ S \stepsS S_H $ the respective source term step such that $ S \steps S' $. Because of $ \EncDO{S_H} \barbBisim T_H $, the construction of $ S' $ and $ T' $, and Observation~\ref{obs:guardedSourceVsTarget}, we have $ \EncDO{S'} \barbBisim T' $.
				\item A simulation step $ T \sStepT T'' $ on a variant of $ \much $ unguards exactly one source term encoding---let us call it $ \EncCI{P} $---due to the translation of internal choice. This step ensures that the respective other encoded source term alternative---let us call it $ \EncCI{Q} $---of the internal choice can never be unguarded, \ie is modulo $ \barbBisim $ similar to $ \Null $. This is the only effect of the steps $ T \sStepT T'' $ that can be observed modulo $ \barbBisim $.
					Since the encoding restricts each variant of $ \much $, the only step that can be in conflict with this step is the step unguarding $ \EncCI{Q} $. Because steps on a variant of $ \much $ do not resolve partial commitments, $ T \stepsT T_H $ does not contain such a step. Then, because $ T $ can perform the step, also $ T_H $ contains (modulo structural congruence) the unguarded subterm $ \Res{\much}{\Input{\much}{}{\EncCI{P}} \mid \Input{\much}{}{\EncCI{Q}} \mid \Out{\much}{}} $. By Observation~\ref{obs:guardedSourceVsTarget} and because $ \EncDO{S_H} \barbBisim T_H $, then $ \inchoice{P}{Q} $ is unguarded in $ S_H $.
					Hence we can choose $ S' $ by replacing $ \inchoice{P}{Q} $ in $ S_H $ by $ P $ such that $ S \stepsS S_H \stepS S' $ and we can choose $ T' $ by replacing $ \Res{\much}{\Input{\much}{}{\EncCI{P}} \mid \Input{\much}{}{\EncCI{Q}} \mid \Out{\much}{}} $ in $ T_H $ by $ \EncCI{P} $ such that $ T \sStepT T'' \stepsT T' $.
					By Observation~\ref{obs:guardedSourceVsTarget}, because of $ \EncDO{S_H} \barbBisim T_H $, and by the construction of $ S' $ and $ T' $, then $ \EncDO{S'} \barbBisim T' $.
				\item A simulation step $ T \sStepT T'' $ on a variant of $ \rep $ is due to the translation of $ \Div $. In this case $ T \barbBisim T'' $. Then, because of $ T \stepsT T_H $, there exists $ T' $ such that $ T'' \stepsT T' $, $ T'' \stepsT T' $ has the same simulation steps then $ T \stepsT T_H $, and $ T_H \barbBisim T' $. Thus we can choose $ S' = S_H $.
				\item A simulation step $ T \sStepT T'' $ on a variant of $ \renam'(X) $ is due to the translation of recursion. In this case $ T \sStepT T'' $ unguards exactly one source term encoding---let us call it $ \EncCI{P} $. This is the only effect of the steps $ T \sStepT T'' $ that can be observed modulo $ \barbBisim $.
					Since the encoding restricts each variant of $ \renam'(X) $ and the only input on this channel is replicated, this step is not in conflict with any other step of $ T $ or its derivatives. Because steps on a variant of $ \renam'(X) $ do not resolve partial commitments, $ T \stepsT T_H $ does not contain such a step. Then, because $ T $ can perform the step, also $ T_H $ contains (modulo structural congruence) the unguarded subterm $ \Res{\renam'(X)}{\Out{\renam'(X)}{} \mid \RepInput{\renam'(X)}{}{\EncCI{P}}} $. By Observation~\ref{obs:guardedSourceVsTarget} and because $ \EncDO{S_H} \barbBisim T_H $, then $ \mu X \cdot P $ is unguarded in $ S_H $.
					Hence we can choose $ S' $ as the result of replacing $ \mu X \cdot P $ in $ S_H $ by $ P\left[ \left( \mu X \cdot P \right) / X \right] $ such that $ S \stepsS S_H \stepS S' $ and we can choose $ T' $ as the result of replacing $ \Res{\renam'(X)}{\Out{\renam'(X)}{} \mid \RepInput{\renam'(X)}{}{\EncCI{P}}} $ in $ T_H $ by $ \Res{\renam'(X)}{\Out{\renam'(X)}{} \mid \RepInput{\renam'(X)}{}{\EncCI{P}} \mid \EncCI{P}} $, where all occurrences of $ X $ in $ P $ are translated to $ \Out{\renam'(X)}{} $, such that $ T \sStepT T'' \stepsT T' $.
					By Observation~\ref{obs:guardedSourceVsTarget}, because of $ \EncDO{S_H} \barbBisim T_H $, and by the construction of $ S' $ and $ T' $, then $ \EncDO{S'} \barbBisim T' $.
			\end{compactenum}
	\end{compactitem}
\end{proof}

As in the encoding $ \encCO $, there is no infinite sequence of only auxiliary steps in $ \EncDO{S} $.

\begin{lemma}
	The number of steps between two simulation steps is finite.
	\label{lem:numberStepsDecentral}
\end{lemma}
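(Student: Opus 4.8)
The plan is to adapt the proof of Lemma~\ref{lem:numberStepsCentral} to the de-centralised coordinator of Figure~\ref{fig:decentralised}, whose only structural difference is the absence of the channel $ \once $. First I would fix an arbitrary reachable $ T $ with $ \EncDO{S} \stepsT T $, observe that $ T $ contains only finitely many unguarded translations of source term operators, and let $ T' $ be the term obtained by performing every auxiliary step that unguards a further operator translation. By induction on the number of simulation steps in $ \EncDO{S} \stepsT T $ this prefix is finite, and no derivative of $ T' $ reached by auxiliary steps alone can unguard additional operator translations. Hence the nesting of unguarded parallel operator encodings forms a finite parallel tree, and it remains to bound the auxiliary steps that merely move announcements, requests, locks, and simulation signals around this fixed finite structure.

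For most channel classes the argument carries over verbatim from the centralised case. Steps on the forwarder channels $ \act, \act', \Proj{\Renam{\ch}}{2}, \Proj{\Renam{\ch}}{3}, \syn, \syn', \nextSyn $ only propagate and pairwise combine the finitely many announcements sitting in the leaves of a finite tree (an induction over the tree depth shows each node combines only finitely many left and right announcements), steps on $ \req, \lreq, \rreq $ push a request downwards spawning at most two new requests per node along a path of finite length (using Proposition~\ref{prop:reqLocksDecentral}), and steps on $ t, f $ as well as on the simulation signals $ \simu, \simu_i, \lsimu, \rsimu $ are each tied to the consumption of a single lock instantiation or a single preceding request. The one place where the centralised proof invoked $ \once $ is the bound on the number of top-level requests; here I would instead note that the de-central coordinator of Figure~\ref{fig:decentralised} emits exactly one $ \Out{\req}{} $ per announcement it consumes, that $ \Sim $ creates its announcement only once, and---crucially---that along a purely auxiliary sequence no new announcement is ever generated, so the finitely many announcements already present in $ T' $ give rise to only finitely many requests.

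The genuinely new obstacle, and the heart of the de-centralised argument, is to rule out the two re-instantiation loops that $ \once $ previously tamed: the output $ \Out{\lock'}{} $ inside the innermost $ \ITE{\cdot}{\cdot}{\cdot} $-construct of $ \Sim $, which re-arms $ \RepIn{\lock'}{}.\In{\req}{}.(\ldots) $, and the alternative $ \Input{\req}{}{\Out{\lock}{\top}} $ of the external-choice translation, which restores a positive sum lock. The plan is to show both are unreachable along any auxiliary-only sequence. By Definition~\ref{def:auxStepsDecentral} an auxiliary step never reduces a positive instantiation of a lock, so every communication on a variant of $ \lock, \lLock, \rLock $ on such a sequence consumes a negative instantiation; by Proposition~\ref{prop:sumLocksDecentral} these negative instantiations are exhaustible and never replenished. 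Inspecting $ \Sim $, a request that reads a negative left lock takes the branch $ (\Out{\lock}{\bot} \mid \RepInput{\req}{}{\Out{\lock}{\bot}}) $ and therefore emits no $ \Out{\lock'}{} $ and no simulation signal at all; the re-instantiating $ \Out{\lock'}{} $ of the then-then branch is guarded by $ \Input{\simu}{\boolV}{\ldots} $ that can receive $ \bot $ only through an abort signal $ \Out{\lsimu}{\bot} $ or $ \Out{\rsimu}{\bot} $, and a case analysis on $ \Sim $ shows every such signal is produced only after a positive lock has already been read---that is, after a simulation step. The same holds for the sum's restoring alternative, which fires only on $ \simu_i = \bot $, i.e.\ again only after an abort triggered by a positive lock consumption. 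I expect this separation---showing that the re-instantiation machinery can only be fed by simulation steps---to be the main difficulty, since it is exactly the point at which the de-centralised encoding relinquishes the global mutual-exclusion guarantee of $ \once $. Once it is established, neither loop is ever re-armed on an auxiliary-only path, all remaining auxiliary steps draw from the finite, non-replenished supply in $ T' $, and the sequence must terminate.
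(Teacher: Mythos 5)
Your proof is correct and follows essentially the same strategy as the paper's: the paper's own argument for Lemma~\ref{lem:numberStepsDecentral} is a short delta to Lemma~\ref{lem:numberStepsCentral}, observing that announcement consumption is no longer a single blocking point, that the coordinator therefore releases several but still only finitely many requests $\Out{\req}{}$ (one per announcement, of which there are finitely many), and that every auxiliary sequence is an interleaving of finitely many centralised auxiliary sequences. Where you genuinely depart from the paper is in how the two re-instantiation loops (the $\Out{\lock'}{}$ in $\Sim$ and the restoring alternative $\Input{\req}{}{\Out{\lock}{\top}}$ in the sum translation) are tamed: the paper, via the centralised proof it reuses, bounds them by counting along the chain requests $\to$ lock tests $\to$ simulation signals $\to$ re-armings, whereas you show they are simply never re-armed on an auxiliary-only path because every re-arming is guarded by the consumption of a positive lock instantiation, which in $\encDO$ is by definition a simulation step. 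Your route is arguably cleaner for the decentralised setting, since it exploits Definition~\ref{def:auxStepsDecentral} directly rather than re-counting. One wording slip: your claim that ``along a purely auxiliary sequence no new announcement is ever generated'' is literally false---$\Synch{\ch}$ emits a fresh announcement each time it combines a left and a right announcement, and these combination steps are auxiliary---but your own preceding paragraph already bounds the number of such combinations by a finite product over a finite parallel tree, so the conclusion you draw from it (finitely many announcements reach the coordinator, hence finitely many top-level requests) is unaffected.
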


\begin{proof}
	In contrast to $ \encCO $, also the consumption of announcements by the coordinator is a simulation step for $ \encDO $. Since there are only finitely many announcements, consuming them does not lead to divergence. Because of the consumption of announcements, the coordinator can release several requests $ \Out{\req}{} $, but again only finitely many. Accordingly, the sequences of auxiliary steps can be longer in $ \encDO $, but all such sequences result from interleaving finitely many sequences of auxiliary steps of $ \encCO $.
	Apart from these observations the proof is similar to the proof of Lemma\ref{lem:numberStepsCentral}.
\end{proof}

Moreover each simulation of a source term requires only finitely many simulation steps (to consume the respective positive instantiations of locks). Thus $ \encDO $ reflects divergence.

\begin{theorem}
	The encoding $ \encDO $ reflects divergence.
	\label{thm:divergenceReflectionDecentral}
\end{theorem}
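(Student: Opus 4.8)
The plan is to follow the skeleton of the central case (Theorem~\ref{thm:divergenceReflectionCentral}), but with the tight one-step correspondence of Lemma~\ref{lem:sourceVsSimStep} replaced by the weaker tools available for $ \encDO $. Suppose $ \EncDO{S} \stepT^{\omega} $ and fix one infinite computation. First I would apply Lemma~\ref{lem:numberStepsDecentral}: since only finitely many steps separate two consecutive simulation steps, no infinite computation can be eventually composed of auxiliary steps only. Hence the fixed computation must contain infinitely many simulation steps $ \sStepT $.

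The second task is to convert ``infinitely many simulation steps'' into ``infinitely many source-term steps''. For this I would use the observation preceding the theorem, namely that each simulation of a single source-term step uses only finitely many simulation steps. For the steps on variants of $ \much $, $ \rep $, and $ \renam'(X) $ this is immediate from Definition~\ref{def:auxStepsDecentral}, as each such step simulates exactly one internal choice, divergence step, or recursion unfolding. For a synchronisation step the count is bounded by the finite depth and branching of the translated parallel tree, since completing one source synchronisation consumes at most one positive instantiation of a lock per participating $ \Sim $. Grouping the simulation steps of the fixed computation accordingly---just as in the weak soundness argument of Theorem~\ref{thm:operationalCorrespondenceDecentral}---yields an infinite sequence of simulated source-term steps $ S \stepS S_1 \stepS S_2 \stepS \cdots $, \ie $ S \stepS^{\omega} $.

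I expect the main obstacle to be the feature that is absent in the central encoding: partial commitments and aborted simulation attempts. An abort also reduces a positive instantiation of a lock and therefore counts as a simulation step under Definition~\ref{def:auxStepsDecentral}, so I must rule out an infinite computation all of whose simulation steps are ``spent'' on aborts without ever completing a source-term step. The decisive tool here is Proposition~\ref{prop:sumLocksDecentral}: a lock is instantiated negatively at most once and then remains negative. An abort is ultimately provoked by a lock that is already negative, and once a $ \Sim $ has produced a permanently negative combined lock it stops looping; the positive locks it restores are finitely many. Fresh locks arise only together with fresh action announcements, and---by the analysis underlying Lemma~\ref{lem:numberStepsDecentral}---new announcements are created only by unguarding encoded continuations (\ie by completing a source-term step) or by recursion unfolding. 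Consequently only finitely many lock-consuming simulation steps can occur without genuine source progress, which closes the gap and lets the counting argument of the previous paragraph go through.
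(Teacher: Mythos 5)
Your proposal is correct and follows essentially the same route as the paper's proof: Lemma~\ref{lem:numberStepsDecentral} forces infinitely many simulation steps, and one then argues that lock-consuming simulation steps cannot occur infinitely often without completing source-term steps. The only (inessential) difference is the bookkeeping for aborted attempts: the paper bounds them by counting requests $\Out{\req}{}$, which are replenished only when an encoded continuation is unguarded, whereas you count via Proposition~\ref{prop:sumLocksDecentral} and the finiteness of announcements---both reduce to the same finiteness observation already contained in the analysis behind Lemma~\ref{lem:numberStepsDecentral}.
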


\begin{proof}
	If $ \EncDO{S} $ is divergent then, by Lemma~\ref{lem:numberStepsDecentral}, $ \EncDO{S} $ can perform an infinite sequence of steps containing infinitely many simulation steps.
	Simulation steps either directly represent the simulation of a source term step---as in the case of recursion, divergence, and internal choice---or reduce a positive instantiation of a lock.
	Instantiations of locks are consumed by $ \ITE{\cdot}{\cdot}{\cdot} $-constructs. These constructs are guarded by requests $ \Out{\req}{} $.
	By Lemma~\ref{lem:numberStepsDecentral}, without simulation steps only finitely many requests and thus instantiations of locks can be consumed. Simulation steps can only lead to new requests if they unguard the translation of a source term continuation, but then the simulation of a source term step was completed.
	Hence, if $ \EncDO{S} $ is divergent, then $ S $ is divergent.
\end{proof}

The encoding function ensures that $ \EncDO{S} $ has an unguarded occurrence of $ \success $ iff $ S $ has such an unguarded occurrence. Operational correspondence again ensures that $ S $ and $ \EncDO{S} $ also answer the question for the reachability of $ \success $ in the same way.

\begin{theorem}
	The encoding $ \encDO $ is success sensitive.
	\label{thm:successSensitivenessDecentral}
\end{theorem}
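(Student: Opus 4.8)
The plan is to mirror the proof of the centralised case (Theorem~\ref{thm:successSensitivenessCentral}), replacing full operational correspondence by the weak operational correspondence of Theorem~\ref{thm:operationalCorrespondenceDecentral} and being careful about the weaker soundness direction. The statement to establish is $ S\reachSuccess $ iff $ \EncDO{S}\reachSuccess $ for all $ S \in \procS $, and I would split it into an \emph{immediate} correspondence followed by a \emph{lifting} to reachability.

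First I would show the immediate correspondence $ S\hasSuccess $ iff $ \EncDO{S}\hasSuccess $ for every source term $ S $. By Observation~\ref{obs:success} we already have $ S\hasSuccess $ iff $ \EncCI{S}\hasSuccess $, so it remains to inspect the outer layer in Figure~\ref{fig:decentralised}. Since $ \EncDO{P} = \Res{\act}{\EncCI{P} \mid \RepInput{\act}{\ldots}{\ldots}} $ leaves $ \EncCI{P} $ unguarded under a restriction on $ \act $ and the added replicated input contains no occurrence of $ \success $, the outer layer neither removes nor introduces an unguarded $ \success $. Hence $ \EncDO{S}\hasSuccess $ iff $ \EncCI{S}\hasSuccess $ iff $ S\hasSuccess $.

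Next I would lift this to reachability using Theorem~\ref{thm:operationalCorrespondenceDecentral} together with the fact that $ \barbBisim $ respects the reachability of $ \success $. The forward direction is straightforward: if $ S\reachSuccess $ then $ S \stepsS S' $ with $ S'\hasSuccess $; completeness yields $ T $ with $ \EncDO{S} \stepsT T $ and $ \EncDO{S'} \barbBisim T $; by the immediate correspondence $ \EncDO{S'}\hasSuccess $, hence $ \EncDO{S'}\reachSuccess $, and since $ \barbBisim $ respects success $ T\reachSuccess $, so $ \EncDO{S}\reachSuccess $. For the backward direction, if $ \EncDO{S}\reachSuccess $ then some $ T $ with $ \EncDO{S} \stepsT T $ has $ T\hasSuccess $; weak soundness only provides $ S', T' $ with $ S \stepsS S' $, $ T \stepsT T' $ and $ \EncDO{S'} \barbBisim T' $. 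Here I would use that $ \success $ is persistent---once unguarded it is never removed by a reduction (as already exploited in the proof of Lemma~\ref{lem:noConflicts})---so that $ T\hasSuccess $ and $ T \stepsT T' $ give $ T'\hasSuccess $, hence $ \EncDO{S'}\reachSuccess $.

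The main obstacle is closing this last direction, since $ \EncDO{S'}\reachSuccess $ must still be pulled back to $ S'\reachSuccess $ (and thus $ S\reachSuccess $), and naively re-applying weak soundness loops. I would break the loop with the structural fact behind Observation~\ref{obs:success}: the only way an unguarded $ \success $ can appear in a derivative of $ \EncDO{S} $ is through the clause $ \EncCI{\success} \deff \success $, and by Observation~\ref{obs:guardedSourceVsTarget} an encoded subterm is unguarded in the target exactly when the corresponding source subterm is unguarded. Combined with the unguarding analysis already carried out for weak soundness (the per-case reasoning of Theorem~\ref{thm:operationalCorrespondenceDecentral}, resting on Lemma~\ref{lem:auxStepsDecentral}), an unguarded $ \EncCI{\success} $ reached from $ \EncDO{S} $ can only have been unguarded because the matching source $ \success $ became unguarded along a corresponding reduction $ S \stepsS S' $ with $ S'\hasSuccess $. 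This directly yields $ S\reachSuccess $ and completes the equivalence. I expect the bookkeeping for this pull-back---tracking that no auxiliary or simulation step fabricates a fresh $ \success $---to be the only delicate part, everything else being a transcription of the centralised argument.
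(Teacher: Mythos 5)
Your proposal is correct and follows essentially the same route as the paper: Observation~\ref{obs:success} together with Figure~\ref{fig:decentralised} for the immediate correspondence $ S\hasSuccess $ iff $ \EncDO{S}\hasSuccess $, then Theorem~\ref{thm:operationalCorrespondenceDecentral} and the fact that $ \barbBisim $ respects $ \success $ to lift this to reachability. The only difference is that you spell out the delicate pull-back in the weak-soundness direction (persistence of $ \success $ and the fact that $ \success $ can only originate from $ \EncCI{\success} $), which the paper's two-line proof leaves implicit.
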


\begin{proof}
	From Observation~\ref{obs:success} and Figure~\ref{fig:decentralised}, $ S\hasSuccess $ iff $ \EncDO{S}\hasSuccess $. With Theorem~\ref{thm:operationalCorrespondenceDecentral} and because $ \barbBisim $ respects $ \success $, then $ S\reachSuccess $ iff $ \EncDO{S}\reachSuccess $.
\end{proof}

Similarly, a source term reaches a barb iff its translation reaches the respective translated barb.

\begin{theorem}
	$ \forall S, \ch\logdot S\ReachBarb{\ch} $ iff $ \EncDO{S}\ReachBarb{\EncCI{\cdot}\ch} $.
	\label{thm:respectsBarbsDecentral}
\end{theorem}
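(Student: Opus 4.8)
My plan is to mirror the proof of Theorem~\ref{thm:respectsBarbsCentral}, replacing its centralised ingredients by their de-centralised counterparts: the weak operational correspondence of Theorem~\ref{thm:operationalCorrespondenceDecentral} takes over the role that Lemma~\ref{lem:sourceVsSimStep} played in the centralised case. First I would establish the grounding equivalence for immediate barbs, $ S\HasBarb{\ch} $ iff $ \EncDO{S}\HasBarb{\EncCI{\cdot}\ch} $. This follows from Observation~\ref{obs:transBarbs} together with an inspection of Figure~\ref{fig:decentralised}: the de-centralised coordinator leaves $ \EncCI{P} $ unguarded, merely restricting $ \act $ and adding a replicated input on it, so the unguarded positive-lock announcements that constitute the translated barbs of $ \EncCI{S} $ are precisely the translated barbs of $ \EncDO{S} $ observed on the outermost variant of $ \act $.

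With this grounding the forward direction is straightforward. Assuming $ S\ReachBarb{\ch} $, i.e. $ S \stepsS S' $ with $ S'\HasBarb{\ch} $, completeness (Theorem~\ref{thm:operationalCorrespondenceDecentral}) yields a $ T $ with $ \EncDO{S} \stepsT T $ and $ \EncDO{S'} \barbBisim T $. The grounding gives $ \EncDO{S'}\HasBarb{\EncCI{\cdot}\ch} $, hence $ \EncDO{S'}\ReachBarb{\EncCI{\cdot}\ch} $; since $ \barbBisim $ respects reachable translated barbs we obtain $ T\ReachBarb{\EncCI{\cdot}\ch} $ and therefore $ \EncDO{S}\ReachBarb{\EncCI{\cdot}\ch} $.

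The backward direction is the main obstacle, and it is genuinely harder than in the centralised case. There, soundness is strong, so $ \EncCO{S} \stepsT T $ relates $ T $ itself to some $ \EncCO{S'} $; here weak soundness only supplies an intermediate $ T' $ with $ T \stepsT T' $ and $ \EncDO{S'} \barbBisim T' $. One cannot simply transport the barb of $ T $ along $ T \stepsT T' $, because the steps resolving partial commitments may rule out exactly the alternative that carries the $ \ch $-announcement (as in the example $ E $ of this section, where committing to $ o $ discards the $ p $-barb). Hence I would not route the barb through $ T' $ but ground it directly. Given $ \EncDO{S} \stepsT T $ with $ T\HasBarb{\EncCI{\cdot}\ch} $, the witnessing announcement for $ \ch $ carries a lock all of whose constituent locks are positively instantiated (by the definition of translated barb, and these positive instantiations stay available until consumed by Proposition~\ref{prop:sumLocksDecentral}). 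I would therefore extend the computation by letting the coordinator of Figure~\ref{fig:decentralised} consume this announcement and drive the associated simulation to completion: because all required locks are positive, the test succeeds, $ \Out{\simu}{\top} $ fires, and the encoded $ \ch $-continuation is unguarded, giving $ \EncDO{S} \stepsT \hat{T} $ that completes a simulation of a source $ \ch $-step.

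Applying weak soundness to $ \EncDO{S} \stepsT \hat{T} $ then yields $ \hat{S} $ with $ S \stepsS \hat{S} $, and since the completed simulation reduces a positive instantiation of a lock—a simulation step in the sense of Definition~\ref{def:auxStepsDecentral} marking a $ \ch $-action—the corresponding source computation must pass through a state $ S' $ with $ S \stepsS S' $ and $ S'\HasBarb{\ch} $, so $ S\ReachBarb{\ch} $. The two delicate points to discharge here are that a positive-lock $ \ch $-announcement can always be completed to a successful $ \ch $-simulation without deadlock—which rests on the ``consume the left lock first'' discipline already used in the proof of Theorem~\ref{thm:operationalCorrespondenceDecentral}—and that weak soundness reflects this completed simulation as a genuine source $ \ch $-step rather than an unrelated reduction. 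Once these are in place, both implications close and the equivalence $ S\ReachBarb{\ch} $ iff $ \EncDO{S}\ReachBarb{\EncCI{\cdot}\ch} $ follows.
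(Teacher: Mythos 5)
Your proposal is correct and rests on the same two ingredients as the paper's own proof: the static correspondence $ S\HasBarb{\ch} $ iff $ \EncDO{S}\HasBarb{\EncCI{\cdot}\ch} $ obtained from Observation~\ref{obs:transBarbs} and Figure~\ref{fig:decentralised}, and Theorem~\ref{thm:operationalCorrespondenceDecentral} together with the fact that $ \barbBisim $ respects translated barbs. The difference lies in how much of the backward direction is made explicit. The paper's proof is a two-sentence appeal to these ingredients and leaves implicit exactly the point you isolate: weak soundness only delivers an intermediate $ T' $ with $ T \stepsT T' $ and $ \EncDO{S'} \barbBisim T' $, and the commitment-resolving steps in $ T \stepsT T' $ may discard the very announcement witnessing $ T\HasBarb{\EncCI{\cdot}\ch} $, so the barb cannot simply be transported along $ T \stepsT T' $. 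Your remedy---first drive the positive-lock $ \ch $-announcement to a completed simulation $ \EncDO{S} \stepsT \hat{T} $, which is possible because by the definition of translated barbs and Proposition~\ref{prop:sumLocksDecentral} all constituent locks are still positively instantiated and deadlock-freedom follows from the left-lock-first discipline, and only then apply weak soundness---is a sound way to close this gap, and it mirrors how case~1 of the soundness proof of Theorem~\ref{thm:operationalCorrespondenceDecentral} already identifies a completed lock computation with a source term step on the corresponding action; your two remaining ``delicate points'' are discharged by precisely that part of the existing analysis. In short, your argument buys a self-contained justification of the harder implication at the price of re-doing a small part of the soundness analysis, whereas the paper's version is shorter but expects the reader to extract that analysis from the proof of Theorem~\ref{thm:operationalCorrespondenceDecentral}.
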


\begin{proof}
	From Observation~\ref{obs:transBarbs} and Figure~\ref{fig:decentralised}, $ S\HasBarb{\ch} $ iff $ \EncDO{S}\HasBarb{\EncCI{\cdot}\ch} $. With Theorem~\ref{thm:operationalCorrespondenceDecentral} and because $ \barbBisim $ respects translated barbs, then $ S\ReachBarb{\ch} $ iff $ \EncDO{S}\ReachBarb{\EncCI{\cdot}\ch} $.
\end{proof}

Weak operational correspondence does not suffice to establish a bisimulation between source terms and their translations.
But, as proved in \cite{petersGlabbeek15}, Theorem~\ref{thm:operationalCorrespondenceDecentral}, the fact that $ \barbBisim $ is success sensitive and respects (translated) observables, Theorem~\ref{thm:successSensitivenessDecentral}, and Theorem~\ref{thm:respectsBarbsDecentral} imply that $ \forall S\logdot S $ and $ \EncCO{S} $ are (success sensitive, (translated) barbs respecting, weak, reduction) coupled similar, \ie $ S \barbCS \EncDO{S} $.

It remains to show, that $ \encDO $ indeed preserves distributability. Therefore we prove that all blocking parts of the encoding $ \encDO $ refer to simulations of conflicting source term steps.

\begin{theorem}
	The encoding $ \encDO $ preserves distributability.
	\label{thm:distributability}
\end{theorem}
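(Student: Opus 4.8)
The plan is to unfold Definition~\ref{def:distributabilityPreservation} for the specific source/target pair and reduce distributability preservation to a statement about conflicts. Recall the paper's own reformulation: since source and target are fixed, $\encDO$ preserves distributability iff for every sequence of pairwise distributable source term steps, their simulations are pairwise distributable in the target. Two source steps are distributable iff they are \emph{not} in conflict, i.e. they do not reduce the same action-prefix (in a synchronised parallel operator they may reduce \emph{different} prefixes on the same synchronised name). Dually, two target steps are distributable iff they are not in conflict, where conflict means competing either for the same output against a replicated/plain input with differing transmitted values, or for the same input. So the theorem amounts to: \emph{whenever two source steps $S\stepS S_1$ and $S\stepS S_2$ are distributable, one can choose simulations $\EncDO{S}\stepsT T_1$ and $\EncDO{S}\stepsT T_2$ that are pairwise distributable in $\EncDO{S}$.}

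First I would make precise which target steps are the ``blocking'' ones, and show that \emph{the only source of conflict between two simulations is a shared positive lock instantiation}. Lemma~\ref{lem:auxStepsDecentral} already establishes that auxiliary steps never conflict with anything relevant: forwarding steps (on variants of $\act,\act',\syn,\syn',\nextSyn,\Proj{\Renam{\ch}}{2},\Proj{\Renam{\ch}}{3}$) reduce a unique replicated input (Proposition~\ref{prop:auxStepsForward}); request steps (on $\req,\lreq,\rreq$) have at most one receiver (Proposition~\ref{prop:reqLocksDecentral}) and yield structurally congruent continuations (Proposition~\ref{prop:auxStepsReq}); and steps on \emph{negative} lock instantiations are all structurally congruent and hence commute. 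So two simulations can only genuinely interfere through a step that consumes a \emph{positive} instantiation of some lock $\lock,\lLock,\rLock$, or through the dedicated simulation channels $\much,\rep,\renam'(X)$. The channels $\much$ and $\renam'(X)$ are restricted per internal-choice/recursion occurrence, and $\rep$ is pure divergence (never blocking distinct source steps), so these never create cross-simulation conflict for \emph{distributable} source steps. This isolates the real obligation to the locks.

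The core step is then to match source-level conflict with target-level lock contention. I would argue by induction on the parallel structure of $S$, tracking for each simulation \emph{which} positive lock instantiations it must consume. By the design of $\EncCI{\cdot}$ (Step~1--Step~4 of the inner encoding) and Lemma~\ref{lem:sumLocks}, each summand of an external choice owns a single restricted lock $\lock$, and the $\Sim$ term in $\Synch{\ch}$ introduces fresh locks whose positive instantiation is consumed exactly when the corresponding combined left/right announcement is committed. The key structural fact to prove is: \emph{the simulation of a source step reducing a given set of action-prefixes consumes precisely the positive lock instantiations associated with those prefixes (and, for synchronised names, the two side-locks being combined).} Consequently, two simulations compete for the same positive lock iff they share a prefix-lock, i.e. iff they reduce the same action-prefix, i.e. iff the underlying source steps are \emph{in conflict}. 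Hence distributable source steps yield simulations that never contend for a common positive lock, and by the first paragraph's reduction they are distributable in the target.

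The main obstacle I anticipate is the synchronised case $\CSPPar{P}{Q}{A}$ with $\ch\in A$, where a single source step legitimately consumes \emph{two} side-locks ($\lLock$ from the left, $\rLock$ from the right) combined inside $\Sim$, and where two genuinely distributable source steps (on \emph{different} synchronised names, or on the same name but using disjoint prefixes from underlying parallel components) must still be shown to touch disjoint positive lock instantiations. Here I must use that $\Synch{\ch}$ generates a \emph{fresh} $\lock$ per combined announcement (so distinct combinations get distinct fresh locks) while the \emph{side}-locks $\lLock,\rLock$ are the per-summand locks of the subcomponents; two distributable source steps through this operator reduce disjoint action-prefixes in $P$ and disjoint ones in $Q$, hence disjoint side-locks, so no positive lock is shared. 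I would also have to confirm that the left-before-right questioning order in $\Sim$ (which Lemma~\ref{lem:auxStepsDecentral} and the deadlock-freedom argument borrowed from \cite{peters12} rely on) does not secretly serialise two otherwise-distributable simulations; since requesting and negative-lock handling are auxiliary/commuting, only the final positive-lock consumptions matter, and those are disjoint by the preceding analysis. Pinning down this fresh-lock/side-lock bookkeeping across nested parallel operators is where the real work lies; everything else follows from the auxiliary-step lemmas already proved.
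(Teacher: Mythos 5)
Your overall strategy coincides with the paper's: reduce distributability preservation to the claim that two simulations can only contend for a \emph{blocking} (non-replicated) input when the underlying source steps are already in conflict, and then check each kind of blocking input. The paper's proof is exactly this case analysis over the blocking inputs on variants of $\Proj{\Renam{\ch}}{2}, \req, \lock, \lLock, \rLock, \much$, concluding in each case that contention implies source-level conflict.

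However, there is one concrete misstep. You file $\Proj{\Renam{\ch}}{2}$ among the forwarding channels dispatched by Proposition~\ref{prop:auxStepsForward}, i.e.\ among channels carrying a unique \emph{replicated} input. That is wrong: the input on $\Proj{\Renam{\ch}}{2}$ inside $\Synch{\ch}$ is a plain input (only the input on $\nextSyn$ is replicated there), and by Lemma~\ref{lem:orderLeftAnn} there is at most \emph{one} unguarded input on $\Proj{\Renam{\ch}}{2}$ at a time, so two left announcements on the same synchronised name at the same parallel operator genuinely compete for it---a target-level conflict that your lock-only accounting does not see. This is precisely the case you flag as needing care (``the same name but using disjoint prefixes from underlying parallel components''): there the two simulations do touch disjoint positive lock instantiations, yet they are still serialised through the single $\Proj{\Renam{\ch}}{2}$-input of $\Synch{\ch}$, so their simulations are \emph{not} distributable in the target, and your conclusion that ``only the final positive-lock consumptions matter'' does not close the case. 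The paper closes it differently: it asserts that two source term steps on the same synchronised name of the same parallel operator are themselves not distributable in the source, so no proof obligation arises for their simulations. You need either that same claim about the source language or a genuinely different argument for $\Proj{\Renam{\ch}}{2}$; the fresh-lock/side-lock bookkeeping alone does not suffice.
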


\begin{proof}
	The de-central coordinator in Figure~\ref{fig:decentralised} computes announcements concurrently. The test of locks is technically an output and steps on $ t $ and $ f $ are restricted such that these steps are never in conflict to any other step. Thus the de-central coordinator itself does not block the concurrent simulation of distributable steps.
	
	In $ \encCI $ all blocking, \ie all not-replicated inputs, are on variants of $ \Proj{\Renam{\ch}}{2}, \req, \lock, \lLock, \rLock, \much $. Two steps on the same variant of $ \Proj{\Renam{\ch}}{2} $ belong to two simulation attempts of source term steps on the same action that needs to be synchronised by a parallel operator. Since such steps are also not distributable in the source, their simulations do not have to be distributable.
	
	Two steps on the same variant of one of the names $ \req, \lock, \lLock, \rLock $ belong to simulation attempts that need to consume the same positive instantiation of a lock. Thus these two attempts clearly try to simulate conflicting source term steps. Hence again the two simulation attempts do not have to be distributable.
	
	Similarly two steps on the same variant of $ \much $ clearly belong to two simulation attempts of conflicting source term steps. Thus again they do not have to be distributable.
	
	We conclude that the simulations of distributable source terms are distributable, \ie $ \encDO $ preserves distributability.
\end{proof}

\section{Conclusions}
\label{sec:conclusion}

We introduced two encodings from CSP into asynchronous CCS with name passing and matching.
As in \cite{parrowCoupled92} we had to encode the multiway synchronisation mechanism of CSP into binary communications and, similarly to \cite{parrowCoupled92}, we did so first using a central controller that was then modified into a de-central controller.
By doing so we were able to transfer the observations of \cite{parrowCoupled92} to the present case:
\begin{compactenum}
	\item The central solution allows to prove a stronger connection between source terms and their translations, namely by bisimilarity. Our de-central solution does not relate source terms and their translations that strongly and we doubt that any de-central solution can do so.
	\item Nonetheless, de-central solutions are possible as presented by the second encoding and they still relate source terms and their translations in an interesting way, namely by coupled similarity.
\end{compactenum}
Thus as in \cite{parrowCoupled92} we observed a trade-off between \emph{central} but \emph{bisimilar} solutions on the one-hand side and \emph{coupled similar} but \emph{de-central} solutions on the other side.

More technically we showed here instead a trade-off between central but \emph{operational corresponding} solutions on the one-hand side and \emph{weakly operational corresponding} but de-central solutions on the other side.
The mutual connection between operational correspondence and bisimilarity as well as between weak operational correspondence and coupled similarity is proved in \cite{petersGlabbeek15}.

Both encodings make strict use of the renaming policy and translate into closed terms.
Hence the criterion \emph{name invariance} is trivially satisfied in both cases.
Moreover we showed that both encodings are \emph{success sensitive}, \emph{reflect divergence}, and even \emph{respect barbs} \wrt to the standard source term (CSP) barbs and a notion of translated barbs on the target.
The centralised encoding $ \encCO $ additionally satisfies a variant of \emph{operational correspondence} that is stricter than the variant proposed in \cite{gorla10}.
The de-centralised encoding $ \encDO $ satisfies \emph{weak operational correspondence} as proposed in \cite{gorla10} and \emph{distributability preservation} as proposed in \cite{petersNestmannGoltz13}.
Thus both encodings satisfy all of the criteria proposed in \cite{gorla10} except for compositionality.
However in both cases the inner part is obviously compositional and the outer part adds only a fixed context.

\end{document}